\newtheorem{definition}{Definition}
\newtheorem{theorem}{Theorem}
\newtheorem{lemma}{Lemma}
\newtheorem{assumption}{Assumption}
\newtheorem{proposition}{Proposition}
	\title{The Treasury Proof Ledger: A Cryptographic Framework for Accountable Bitcoin Treasuries}
\author{
\begin{tabular}[t]{@{}c@{}}
Jose E. Puente\\
\textit{402T Labs, SL}\\
\texttt{jepuente@402T.com}
\end{tabular}
\and
\begin{tabular}[t]{@{}c@{}}
Carlos Puente\\
\textit{402T Labs, SL}\\
\texttt{carlospuenteg@402T.com}
\end{tabular}
}
\date{}
\begin{document}
\maketitle

\begin{abstract}

Public companies and institutional investors that hold Bitcoin face a tension between growing demands for transparency and the need to protect trading strategies, counterparties, and internal controls. Existing proof-of-reserves (PoR) deployments are typically point solutions: they certify assets at a single custodian or exchange at a single point in time, and they rarely model liabilities, encumbrances, or exposures that span multiple on-chain and off-chain domains or families of related entities.

We propose the \emph{Treasury Proof Ledger} (TPL), a Bitcoin-anchored accountable logging framework for multi-domain Bitcoin treasuries. A TPL instance treats the treasury as a collection of heterogeneous exposure domains (for example spot holdings, lending books, derivatives, and encumbered positions) linked by explicit balance-update rules and an explicit fee sink. The log records PoR snapshots, proof-of-transit (PoT) receipts for transfers between domains, and policy metadata, and supports policy-based views for different stakeholders.

Formally, we fix an idealised TPL abstraction and give a state-machine model for a multi-domain Bitcoin treasury that tracks
exposures across on-chain and off-chain domains. Our main contribution is a deployment-level, definitional and compositional
framework: we identify the core treasury state machine and a family of security notions (exposure soundness, policy completeness,
non-equivocation, and privacy-compatible policy views) and show how restricted variants of these notions can be realised using
standard building blocks such as proof-of-reserves, proof-of-transit logging, and hash-based commitment primitives. Our results
are existence-style: they isolate the cryptographic guarantees available once the economic and governance modelling assumptions
are fixed, without claiming that any particular existing deployment already meets them.

Finally, we illustrate the modelling power and practicality of TPL on a stylised corporate-treasury deployment and discuss how TPL can act as a cryptographic substrate for future cross-institution and macroprudential supply-consistency checks against Bitcoin's fixed monetary cap.

On the privacy side, our formal guarantees are deliberately restricted: we prove a simulation-based theorem only for a specific public-investor leakage profile, and leave privacy for richer policies and observers as open work.
\end{abstract}

\section{Introduction}

Bitcoin treasury companies, publicly traded firms holding significant amounts of Bitcoin, have emerged as a macro-financial phenomenon. Meister~\cite{Meister2025} shows that these firms behave like highly levered Bitcoin portfolios whose valuations incorporate equity premia and structural leverage effects. Wen~\cite{Wen2025} formally models the survival of digital asset treasury (DAT) firms during bear markets, highlighting treasury policies and liquidity management as crucial determinants of long-term viability.

However, neither line of research resolves a central governance challenge: \emph{how should public treasuries provide transparency about their Bitcoin holdings?}

Conventional proof-of-reserve (PoR) systems used by exchanges typically require public disclosure of wallet addresses and signed messages proving coin control. Treasury operators, most notably Michael Saylor, have publicly argued that for public companies this model is \emph{insecure}, \emph{destabilising}, and ``a bad idea''~\cite{Saylor2025} in its naive form. Specifically, publishing real-time operational wallet maps:

\begin{itemize}[noitemsep]
\item Increases physical and cyberattack surface,
\item Facilitates targeted surveillance and front-running,
\item Creates misinterpretation risks when routine treasury movements appear as ``unusual'' flows,
\item Conflicts with regulated internal controls.
\end{itemize}

Despite this, markets, regulators, and auditors increasingly expect Bitcoin treasuries to provide stronger, cryptographically grounded transparency.

At the policy and supervisory level, international standard-setters and regional regulators have begun to formalise expectations for crypto-asset risk management and transparency. The Financial Stability Board's recommendations on the regulation, supervision and oversight of crypto-asset activities, together with the European Union's Markets in Crypto-Assets (MiCA) framework and related ESMA guidance, call for robust, auditable processes for asset segregation, reserve management and disclosure of material risks~\cite{FSB2023CryptoAssets,ESMA2023MiCA}. At the same time, a growing number of listed companies, funds and asset managers now hold Bitcoin in their treasuries, and there is an active debate in both practitioner and academic circles about proof-of-reserves practices, the auditability of exchange and custodian claims, and the role of cryptographic proofs in corporate reporting~\cite{Fidelity2023BitcoinTreasury,Dagher2015,Lazirko2025,VidalTomas2024}. In this environment, TPL is intended as a candidate cryptographic foundation for the type of treasury-level disclosures and replayable audit trails that these regulatory frameworks appear to be moving towards.

This paper introduces the \textbf{Treasury Proof Ledger (TPL)}, a Bitcoin-anchored transparency and state-machine abstraction for Bitcoin treasury transparency.
Our main technical novelty is a treasury-level cryptographic abstraction that composes heterogeneous proof-of-reserves (PoR) and proof-of-transit (PoT) mechanisms into a Bitcoin-anchored, multi-domain exposure state with deployment-level guarantees (exposure soundness, policy completeness, and privacy-compatible policy views).

We make four contributions:

\begin{itemize}[noitemsep]
\item We fix a treasury-level, multi-domain transparency abstraction by formalising a Bitcoin treasury as a multi-domain exposure vector with an explicit fee sink and a balance-update rule, and we prove a conservation theorem (Theorem~\ref{thm:conservation}) for value flows within this state machine.
\item We define deployment-level security notions: exposure soundness, policy completeness, non-equivocation and view correctness, and we show that an idealised TPL built from standard PoR, PoT and hash-chained commitments realises three core integrity properties (Theorems~\ref{thm:conservation},~\ref{thm:non-equivocation} and~\ref{thm:view-correctness}).
\item We prove restricted deployment-level guarantees for real treasuries: a restricted exposure-soundness theorem for closed sets of domains under faithful policies (Theorem~\ref{lem:restricted-exp-soundness}) and a restricted privacy theorem for a simple public-investor policy (Theorem~\ref{thm:pub-privacy}). Table~\ref{tab:assumption-map} summarises which cryptographic and organisational assumptions each result relies on.
\item We illustrate the modelling power and practicality of TPL by sketching a minimal deployment architecture and cost model, and by explaining how families of TPL instances can serve as a cryptographic substrate for future cross-institution supply-consistency checks against Bitcoin's fixed monetary cap.
\end{itemize}

\paragraph{Contributions in context.}
Existing proof-of-reserves (PoR) schemes for exchanges are typically single-domain, point-in-time checks that do not model encumbrances or cross-domain flows, and they do not expose deployment-level notions such as exposure soundness or policy completeness for treasuries.
Transparency-log and accountable-logging frameworks, including certificate-transparency style systems, provide append-only event logs but treat entries as generic statements, with no conservation law over a multi-domain monetary state or policy-based stakeholder views.
TPL differs in that the logged state is an explicit multi-domain exposure vector with a conservation invariant and fee sink, together with deployment-level security notions and policy-based views tailored to Bitcoin treasury governance and to aggregate-supply-consistency checks across families of treasuries.

Formally, our cryptographic contribution is to fix an idealised TPL abstraction and analyse deployment-level security notions for Bitcoin treasuries in that model. On the privacy side, we emphasise that all formal results in this paper concern the restricted public-investor setting and do not yet provide general privacy guarantees for arbitrary policies.
Our main cryptographic contribution is definitional and compositional: we identify deployment-level notions of exposure soundness, policy completeness, and privacy for multi-domain Bitcoin treasuries, and show that restricted versions of these notions can be realised by a generic TPL state machine built from standard proof-of-reserves, proof-of-transit, and hash-based commitment primitives.

From a cryptographic standpoint, TPL can be viewed as a domain-specific accountable logging and transparency system. Like generic secure logging and certificate-transparency-style constructions, it uses append-only, hash-chained commitments and non-equivocation guarantees, but it differs in two key respects. First, the state space is explicitly structured as a multi-domain balance vector and exposure flows that satisfy a conservation law. Second, the primary outputs are policy-based views for heterogeneous stakeholders, rather than raw append-only event logs. The formal treatment of conservation across domains, domain-level non-equivocation and view correctness in combination with PoR/PoT soundness, specialised to the Bitcoin treasury setting, is, to our knowledge, not captured by existing frameworks.

We prove three core integrity properties: a conservation law for Bitcoin exposures across domains (Theorem~\ref{thm:conservation}); a forward-integrity and non-repudiation guarantee based on non-equivocation of the hash-chained, anchored ledger (Theorem~\ref{thm:non-equivocation}); and correctness of policy-based views produced by honest executions of \textsf{GenView} for deterministic, stateless policies in our simple policy language (Theorem~\ref{thm:view-correctness}).
We then introduce stronger target notions of exposure soundness, policy completeness, and privacy and show that standard PoR/PoT-style primitives suffice to realise restricted versions of these goals: a restricted exposure-soundness theorem for closed sets of domains under a faithful policy (Theorem~\ref{lem:restricted-exp-soundness}) and a restricted privacy theorem for a public-investor policy (Theorem~\ref{thm:pub-privacy}).
Full exposure soundness and policy completeness for richer policies, multi-institution deployments, and more expressive leakage functions are left to future work; throughout we spell out the organisational and cryptographic assumptions under which these existence-style results hold. As in other accountable logging and transparency systems, no cryptographic mechanism can rule out full collusion between all internal and external actors; our guarantees are instead conditioned on a minimal non-collusion assumption, formalised in Section~\ref{sec:threat}, that for each domain at least one of the PoR/PoT providers, anchoring keys, or external auditors behaves honestly.

Beyond the internal governance of any single treasury, there is an emerging \emph{systemic} question for
Bitcoin markets: as regulated vehicles (exchange-traded funds, listed corporates, custodians and similar
intermediaries) accumulate BTC, can independent observers check that the aggregate, on-chain-anchored claims
about holdings never exceed Bitcoin's fixed monetary supply?
The Bitcoin protocol hard-codes a maximum supply of $21{,}000{,}000$~BTC, and as of late~2025
public market data providers report a circulating supply of around $20$~million~BTC
(roughly $95\%$ of the cap).
In such a regime, even modest levels of hidden rehypothecation or double-counting across ETFs, custodians and
corporate treasuries could in principle generate total outstanding claims that exceed the available supply.

One target \emph{motivating application} for TPL is to act as a building block for cross-institutional
supply-consistency checks.
Given a family of independently operated TPL instances for multiple treasuries, an external verifier can,
in principle, compute aggregate lower bounds on BTC under management and compare them against the
protocol-level supply limit, subject to clearly specified policies about which exposures are in-scope
for each ledger and to assumptions about coverage of participating institutions.
We do not formalise a separate security notion for such macroprudential supply guarantees,
nor do we attempt to develop a full macroprudential model.
Rather, our formalisation of exposure soundness and our requirement that all published views be
re-verifiable against Bitcoin anchors are intended to provide the cryptographic substrate on top of
which such aggregate supply checks could be designed in future work.
\subsubsection*{What is new compared to existing PoR and transparency-log systems?}

\begin{itemize}[noitemsep]
\item \emph{Multi-domain treasury state machine.} We formalise Bitcoin treasury companies as a state machine over on-chain and off-chain domains, tracking BTC-equivalent exposures rather than only on-chain balances.
\item \emph{Treasury Proof Ledger (TPL).} We define a ledger abstraction that composes proof-of-reserves / proof-of-liabilities style statements, forward-secure logging (PoT), and anchoring to Bitcoin with observer-specific policy views.
\item \emph{Security notions and reductions.} We introduce exposure-soundness, non-equivocation of anchored logs, and policy completeness for deterministic, stateless policies, and prove that restricted versions of these notions are realisable from standard cryptographic primitives together with Bitcoin's ledger assumptions.
\item \emph{Responsible transparency and macroprudential hooks.} We illustrate how TPL supports ``responsible'' disclosure policies for investors, auditors, and regulators, and enables cross-institution supply checks that are not captured by traditional PoR or accountable-logging frameworks.
\end{itemize}

\section{Background}
\label{sec:background}

\subsection{Digital asset treasuries and the transparency gap}

DAT firms offer investors leveraged exposure to crypto assets, but they also inherit Bitcoin's native auditability. Markets therefore expect a level of transparency greater than for traditional assets, especially regarding:

\begin{itemize}[noitemsep]
\item Proof of asset existence,
\item Proof of policy adherence,
\item Proof of movement justification,
\item Proof of correct multi-domain aggregation.
\end{itemize}

Standardized, cryptographic solutions remain missing.

\subsection{Limitations of existing proof-of-reserve methods}

PoR for exchanges is fundamentally incompatible with treasury operations because it requires:

\begin{itemize}[noitemsep]
\item publication of live addresses,
\item continuous doxxing of transaction flows,
\item assumption of a static set of wallets,
\item inability to redact sensitive movements,
\item mismatch with accounting standards (IFRS/GAAP).
\end{itemize}

For these reasons, leading treasury operators have refused to use PoR, and some have called for future zero-knowledge-based mechanisms instead.

\subsection{Proof-of-Transit Timestamping (PoTT)}

PoTT was originally developed as a transport-layer cryptographic receipt scheme for interplanetary Bitcoin communication. Each data transit across a relay generates a signed timestamped receipt, forming a cryptographic chain of custody~\cite{Puente2025}.

In this paper, we adapt PoTT to corporate treasury operations.
We use ``Proof-of-Transit (PoT)'' to denote the resulting simplified variant: PoT receipts are
hash-chained records adapted to treasury flows, not full transport-layer PoTT headers.

\section{Related Work}
\label{sec:related}

\paragraph{Proof-of-reserves and reserve attestations.}
Research on transparency and solvency for Bitcoin custodians has largely focused on exchange-style proof-of-reserves (PoR) and related mechanisms.
Dagher et al.\ introduced \emph{Provisions}, a privacy-preserving PoR construction that allows an exchange to prove that the sum of its customer liabilities is backed by a set of on-chain UTXOs without revealing individual balances~\cite{Dagher2015}.
Chalkias et al.\ generalised this idea with Distributed Auditing Proofs of Liabilities (DAPOL), enabling audits over partitioned customer sets and light-client friendly verification~\cite{Chalkias2018}.
More recent work from the Federal Reserve Bank of St.~Louis discusses succinct, zero-knowledge techniques for balance proofs over dynamic UTXO sets and light-client constructions~\cite{OzmisZkBitcoin2025}.
These systems treat PoR as a cryptographic primitive for a \emph{single} institution and typically reason about a snapshot relation between (assets, liabilities) and a concrete on-chain UTXO set.
They do not attempt to model a multi-domain treasury state, to formulate a conservation law across on-chain holdings, derivatives, encumbrances and fee sinks, or to provide policy-aware views over an explicitly specified state machine.
TPL instead takes such PoR- and PoT-style primitives as building blocks and asks what can be guaranteed at the level of a Bitcoin treasury \emph{ledger}, rather than at the level of a single proof.

\begin{table}[H]
\centering
\scriptsize
\begin{tabularx}{\linewidth}{p{0.19\linewidth}XXXX}
\toprule
 & Standard PoR audits & Provisions / DAPOL-style PoR & CT / accountable logs & TPL (this work) \\
\midrule
Primary goal &
Demonstrate snapshot solvency (assets versus liabilities) for a custodial balance sheet &
Prove reserve sufficiency for a given set of liabilities using privacy-preserving set commitments &
Detect mis-issuance or misbehaviour for a fixed object class (e.g.\ certificates) via append-only logs &
Model and expose a full multi-domain Bitcoin treasury state over time \\
Multi-domain treasury state &
Not modelled beyond the PoR snapshot and its liabilities set &
Not modelled beyond the PoR snapshot and its liabilities set &
Not modelled; logs are per-object rather than per-treasury &
Explicit multi-domain state machine with algebraic conservation across domains \\
Policy-defined stakeholder views &
Typically a single public statement or pass/fail audit report &
Typically a single public or auditor-facing PoR statement &
Fixed query interface (e.g.\ inclusion proofs, gossip-based consistency) &
Explicit policy language defining observer classes and their views over the ledger \\
Supply / exposure consistency wrt cap &
Not connected to global monetary cap or cross-institution exposures &
Not connected to global monetary cap or cross-institution consistency &
Not connected to monetary supply or cross-institution exposures &
Designed to enable cross-domain and cross-institution exposure checks under assumptions \\
Formal deployment-level notions (exposure soundness, policy completeness) &
Not defined at the level of a multi-domain treasury &
Not defined at the level of a multi-domain treasury &
Not defined for monetary-exposure reporting &
Defined and partially realised for treasury-level exposure and policy views \\
\bottomrule
\end{tabularx}
\caption{High-level comparison of TPL with prior lines of work on PoR and transparency logs.}
\label{tab:related-comparison}
\end{table}

\paragraph{Accounting and assurance with PoR.}
In parallel, the accounting and assurance literature has begun to treat PoR as one component of broader governance frameworks.
Lazirko et al.\ propose a ``double-helix'' view in which cryptographic attestations interleave with traditional accounting controls to cover on-chain and off-chain aspects of an entity's balance sheet~\cite{Lazirko2025}.
Vidal-Tom\'as analyses centralised exchange disclosures and argues that PoR should be understood as one tool among several in a wider ecosystem of ``guardians of trust'' and regulatory oversight~\cite{VidalTomas2024}.
Industry-oriented essays, such as Nic Carter's work on PoR standards, similarly emphasise governance, disclosure policies and the limitations of purely cryptographic attestations~\cite{CarterPoR}.
These works articulate important governance desiderata, but they generally do not provide a formal, cryptographically grounded model of treasury state, conservation and policy-based leakage.
TPL can be seen as complementing this line of work: we formalise the underlying state space and security objectives (conservation, exposure soundness, policy completeness, privacy) and show how existing PoR/PoT mechanisms can be composed, under explicit assumptions, to realise a subset of these objectives.

Recent investigative and practitioner accounts emphasise that large Bitcoin positions held via structured vehicles and third-party treasury companies strain existing audit and regulatory frameworks.
A 2025 \emph{Financial Times} investigation characterises Bitcoin treasuries as a ``nightmare'' for auditors and supervisors, in part because encumbrances and cross-entity exposures are difficult to observe and reason about at scale~\cite{FTTreasureNightmare}.
Complementary practitioner work argues that proof-of-reserves practices for exchanges and stablecoin issuers remain heterogeneous and often underspecified from a regulatory point of view~\cite{BitcoinReserveAct,TaxbitStateReserves,Ward2025,Nau2025}.
TPL targets precisely this gap: it does not propose yet another low-level PoR construction, but a treasury-level accountability layer that makes explicit what state is being logged, what views are exposed to which stakeholders and what can be said, under clear assumptions, about the resulting exposure claims.

\paragraph{Bitcoin treasuries and balance-sheet modelling.}
For corporate Bitcoin treasuries specifically, Meister models the interaction between treasury Bitcoin holdings, capital structure and market dynamics, and shows how large positions can create fragility and option-like payoffs~\cite{Meister2025}.
Wen studies survival conditions for digital asset treasury companies under funding and liquidity constraints and the risks of forced liquidation~\cite{Wen2025}.
These works treat the treasury balance sheet and market dynamics in considerable economic detail but largely assume that the disclosure layer is opaque or ad hoc.
TPL is orthogonal: we do not model market dynamics or optimal policies, but we provide a concrete, cryptographically grounded framework for what it means for a Bitcoin treasury to maintain a conserved multi-domain exposure state and to make supply-consistent, policy-specific claims about that state over time.

\paragraph{Secure logging and transparency logs.}
TPL is also related to work on secure logging and transparency logs.
Forward-secure audit logs aim to ensure that once an event has been recorded it cannot later be altered or deleted, even if the logging server is compromised, by chaining authenticated records under evolving keys~\cite{SchneierKelsey1999}.
Transparency log systems such as Certificate Transparency publish append-only, publicly auditable logs of security-critical events anchored to a widely observed substrate~\cite{Laurie2014CT}.
TPL adopts a similar anchored, append-only logging discipline, but the state recorded by the log is not an arbitrary set of events: it is the evolving vector of domain exposures and associated evidence summaries, constrained by a conservation law and accompanied by a policy language that defines the views exposed to different stakeholder classes.
To our knowledge, prior secure-logging work does not define notions analogous to exposure soundness or policy completeness for multi-domain monetary exposures, nor does it consider cross-institutional supply-consistency checks against a fixed monetary cap such as Bitcoin's 21\,million-unit limit.

\paragraph{Cryptographic compliance frameworks for traditional finance.}
A parallel line of work studies how zero-knowledge proofs and related primitives can be used to express regulatory constraints such as AML/KYC rules, capital requirements and balance-sheet privacy for conventional banks and securities intermediaries. Examples include proposal-level frameworks for regulatory reporting with embedded ZKPs, privacy-preserving payment and account systems, and generic compliance languages that compile constraints into cryptographic checks~\cite{Decker2025,FedZK2023,ZKSurvey2024}. These systems typically treat the regulated institution as a generic balance sheet with fiat-denominated assets and liabilities, and focus on expressing regulatory predicates over customer data or transaction flows. In contrast, TPL is specialised to Bitcoin-denominated treasuries: it fixes an explicit multi-domain treasury state machine, models BTC-equivalent exposures and their conservation over time, and provides policy-based, stakeholder-specific views that can be re-verified against Bitcoin anchors.

Taken together, prior work on PoR and solvency protocols, on secure logging and transparency logs, and on cryptographic compliance frameworks for traditional finance does not directly model a Bitcoin treasury as a multi-domain state machine with policy-defined stakeholder views and deployment-level exposure notions. TPL fills precisely this gap by lifting PoR- and PoT-style primitives to the level of a treasury ledger, defining exposure soundness, policy completeness and related properties, and showing how they can be partially realised under clearly stated assumptions.

\section{System Model}

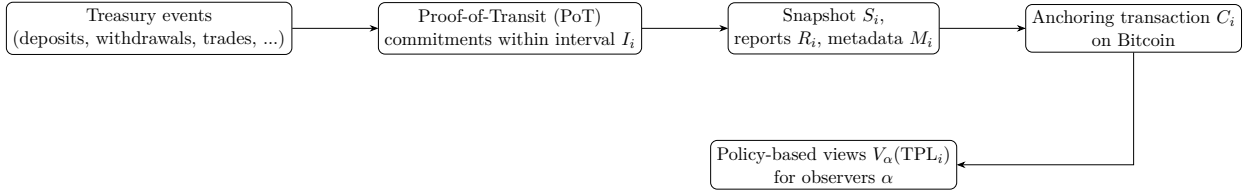
\begin{figure}[t]
  \centering
  \resizebox{\linewidth}{!}{%
\begin{tikzpicture}[>=Stealth, node distance=1.8cm,
                      every node/.style={draw, rounded corners, align=center,
                                        minimum width=3.6cm, minimum height=1cm}]
    \node (events) {Treasury events\\(deposits, withdrawals, trades, ...)};
    \node[right=of events] (pot) {Proof-of-Transit (PoT)\\commitments within interval $I_i$};
    \node[right=of pot] (snap) {Snapshot $S_i$,\\reports $R_i$, metadata $M_i$};
    \node[right=of snap] (anchor) {Anchoring transaction $C_i$\\on Bitcoin};
    \node[below=1.8cm of snap, minimum width=4.8cm] (view) {Policy-based views $V_\alpha(\mathrm{TPL}_i)$\\for observers $\alpha$};

    \draw[->] (events) -- (pot);
    \draw[->] (pot) -- (snap);
    \draw[->] (snap) -- (anchor);
    \draw[->] (anchor.south) |- (view.east);
  \end{tikzpicture}
}
  \caption{High-level lifecycle of the Treasury Proof Ledger (TPL) within a single interval $I_i$.
  Events are recorded by the treasury, committed via PoT, summarised as snapshots and reports,
  and finally anchored to Bitcoin. Policy-based observers derive views $V_\alpha(\mathrm{TPL}_i)$
  from the anchored ledger.}
  \label{fig:tpl-lifecycle}
\end{figure}
\label{sec:model}

\subsection{Treasury domains}

We model a Bitcoin treasury as a set of logical \emph{domains}:

\begin{itemize}[noitemsep]
\item Non-custodial wallet clusters,
\item Institutional custodians,
\item Exchanges and prime brokers,
\item Second-layer nodes (Lightning, rollups),
\item Derivative and collateral positions,
\item Future off-planet domains (Earth--Moon--Mars).
\end{itemize}

Each domain $d$ has a Bitcoin-equivalent exposure $B_d(t)$ at time $t$.
We treat $B_d(t)$ as an algebraic (possibly negative) quantity: $B_d(t) \in \mathbb{Z}$ encodes the net Bitcoin-equivalent
position of domain~$d$ at logical time~$t$, allowing for short or encumbered positions when $B_d(t) < 0$.

Throughout, exposures are normalised to Bitcoin units. In the cryptographic model we treat $B_d(t)$ as tracking Bitcoin-denominated claims and obligations, independent of fiat price movements or mark-to-market accounting. How more complex instruments (for example, derivatives or structured products) are mapped into Bitcoin-equivalent exposures is part of the governance and policy layer: it is encoded in the choice of domains and in policy metadata, not in the TPL primitive itself.

For the purposes of analysis we consider a discrete, totally ordered sequence of event times
$0 = t_0 < t_1 < t_2 < \cdots$, where each $t_k$ is the timestamp of some treasury event
$e_k = (t_k,d_{\mathrm{src}},d_{\mathrm{dst}},v,\mathsf{evid},m)$.
Let $B_d(t_k^-)$ and $B_d(t_k^+)$ denote the exposure of domain~$d$ immediately before and after processing~$e_k$.
Ignoring the fee domain $d_{\mathrm{fee}}$ for the moment, the state evolution of exposures is given by
\begin{equation}
\label{eq:balance-update}
  B_d(t_k^+) =
  \begin{cases}
    B_d(t_k^-) - v, & \text{if } d = d_{\mathrm{src}},\\
    B_d(t_k^-) + v, & \text{if } d = d_{\mathrm{dst}},\\
    B_d(t_k^-),     & \text{otherwise.}
  \end{cases}
\end{equation}
Transfers involving the dedicated fee domain $d_{\mathrm{fee}}$ or non-modelled external counterparties are treated as
net outflows or inflows to the modelled domain set and are handled explicitly when we reason about conservation below.

\paragraph{Worked example.}
As a concrete illustration, consider three domains $d_{\mathrm{cust}}$, $d_{\mathrm{exch}}$ and $d_{\mathrm{fee}}$.
Suppose that at time $t_0$ the exposure vector satisfies
$B_{d_{\mathrm{cust}}}(t_0)=100$, $B_{d_{\mathrm{exch}}}(t_0)=0$ and $B_{d_{\mathrm{fee}}}(t_0)=0$, with all other domains having zero exposure.
A treasury event
$e_1=(t_1,d_{\mathrm{cust}},d_{\mathrm{exch}},10,\mathsf{evid}_1,m_1)$
implements an internal transfer of $10$~BTC from custody to an exchange.
Applying~\eqref{eq:balance-update} yields
$B_{d_{\mathrm{cust}}}(t_1^+)=90$ and $B_{d_{\mathrm{exch}}}(t_1^+)=10$, with all other $B_d(t_1^+)$ unchanged and the sum across $\mathcal{D} \setminus \{d_{\mathrm{fee}}\}$ preserved.
If a subsequent event
\[
e_2=(t_2, d_{\mathrm{exch}}, d_{\mathrm{fee}}, 0.00005, \mathsf{evid}_2, m_2)
\]
records a network-fee payment of $0.00005$ BTC, then
$B_{d_{\mathrm{exch}}}(t_2^+) = 9.99995$,
$B_{d_{\mathrm{fee}}}(t_2^+) = 0.00005$,
and the total over $\mathcal{D} \setminus \{d_{\mathrm{fee}}\}$ decreases by $0.00005$, reflecting the outflow to the fee sink.
This small example illustrates both the local update rule and its consistency with the conservation statement in Theorem~\ref{thm:conservation}.

\subsection{Treasury events}

\begin{definition}[Treasury Event]
An event is a tuple
\[
e = (t, d_{\mathrm{src}}, d_{\mathrm{dst}}, v, \mathsf{evid}, m)
\]
where:
\begin{itemize}[noitemsep]
\item $t$: timestamp,
\item $d_{\mathrm{src}}$, $d_{\mathrm{dst}}$: source and destination domains,
\item $v$: BTC-denominated value,
\item $\mathsf{evid}$: primitive evidence (for example, on-chain transaction identifiers, custody statements, or contractual references),
\item $m$: policy metadata.
\end{itemize}
From each event we derive a digest
\[
h := H(\mathsf{evid} \,\Vert\, v \,\Vert\, m),
\]
which is the value incorporated into the PoT receipt chain.
Note that the timestamp and domain identifiers are not included in this digest; they are instead authenticated via the PoT receipts and any digital signatures over the full event tuple.
This separation is intentional and does not affect our non-equivocation arguments, which always reason about the signed event records rather than the digest in isolation.
\end{definition}

Primitive evidence may include:

\begin{itemize}[noitemsep]
\item raw Bitcoin transactions,
\item custodian-signed statements,
\item exchange API proofs,
\item Lightning commitment states,
\item collateral position attestations.
\end{itemize}

\section{Threat Model and Requirements}
\label{sec:threat}

We consider an adversary that controls the protocol scheduler and may corrupt some of the operational parties. For clarity, we summarise the main assumptions here before formalising them below.

\paragraph{Adversary and trust assumptions.}
\begin{itemize}[noitemsep]
\item The adversary may corrupt the treasury operator and any subset of proof-of-reserves providers, proof-of-transit / logging operators, and off-chain service providers, and may adaptively schedule their actions subject to the liveness bounds below.
\item Bitcoin is modelled as an append-only ledger with $k$-confirmation finality; reorgs deeper than $k$ are treated as failures of the underlying platform assumption and are out of scope.
\item At least one party controlling the keys used to anchor TPL commitments to Bitcoin behaves honestly (or is periodically checked by an independent auditor); otherwise an adversary can log an arbitrary but non-equivocating but economically meaningless history.
\item All cryptographic primitives (hash functions, digital signatures, and any zero-knowledge proofs used inside PoR / PoT mechanisms) satisfy standard collision-resistance, EUF-CMA, soundness, and zero-knowledge properties.
\end{itemize}

\paragraph{Liveness assumptions.}
We assume that honest treasurers and service providers eventually process and anchor all
submitted events.
Concretely, there exist bounds $\Delta_{\mathsf{event}}$, $\Delta_{\mathsf{snap}}$, and
$\Delta_{\mathsf{anchor}}$ such that, in the absence of network partition,
(i) any valid treasury event is incorporated into the internal log within $\Delta_{\mathsf{event}}$,
(ii) snapshot triggers produce PoR snapshots at least once every $\Delta_{\mathsf{snap}}$, and
(iii) TPL commitments are anchored to Bitcoin at least once every $\Delta_{\mathsf{anchor}}$.
Our security experiments allow the adversary to schedule inputs subject to these liveness bounds.
Formally, we call any such schedule \emph{admissible}, and we only define our security experiments on admissible
schedules; adversarial strategies that violate the liveness bounds fall outside the modelled setting.
Conceptually, such violations correspond to the treasury or its service providers ceasing to operate TPL at all;
they break liveness rather than introducing new safety attacks within our model.

The Treasury Proof Ledger is intended to operate in an environment where multiple stakeholders, including public-investors, internal and external auditors, regulators, and commercial counterparties, rely on corporate Bitcoin treasuries as systemic infrastructure.
We therefore make the threat model and requirements explicit.

\subsection{Model and assumptions}
\label{sec:model-assumptions}

We briefly summarise the computational model, operational environment, and assumptions that underlie the security games
and theorems in the rest of the paper.

\paragraph{Computational model and parties.}
All probabilistic algorithms are modelled as PPT (probabilistic polynomial time) in the security parameter~$\lambda$.
The TPL protocol runs alongside the Bitcoin network, which we treat as an append-only ledger with eventual
finality consistent with the liveness bounds in Section~\ref{sec:threat}.
The main parties are treasurers, proof-of-reserves service providers, proof-of-transit providers, anchoring keys,
and external observers or auditors, as defined in Sections~\ref{sec:model} and~\ref{sec:tpl}.
Our security experiments quantify over adversarial schedulers that are restricted to \emph{admissible} schedules in the sense of Section~\ref{sec:threat};
strategies that violate the liveness bounds correspond to operational failure (the system ceasing to run TPL) rather than to additional safety attacks within our model.

\paragraph{Bitcoin and anchoring assumptions.}
We assume that the underlying Bitcoin blockchain provides eventual consistency and $k$-confirmation finality in the usual sense.
Anchoring transactions are required to achieve~$k$ confirmations within the liveness bound~$\Delta_{\mathsf{anchor}}$, and adversaries are not given the ability to cause large-scale chain re-organisations beyond this bound.
These assumptions are standard in cryptographic treatments of Bitcoin-anchored protocols and capture the idea that, once sufficiently confirmed, the anchoring commitments are effectively immutable.

\paragraph{Cryptographic primitives.}
We assume collision-resistant hash functions, a proof-of-reserves (PoR) primitive that satisfies existence and ownership soundness (Definitions~\ref{def:por-existence} and~\ref{def:por-ownership}), and a proof-of-transit (PoT) primitive that satisfies receipt unforgeability in the sense of Definition~\ref{def:pot-unforgeability}.
We also assume secure digital signatures for treasurer and service-provider keys and, when needed in the privacy treatment, zero-knowledge or witness-indistinguishable argument systems for the relevant NP relations (for example, a general-purpose zk-SNARK encoding the PoR and PoT validity predicates together with the conservation law).
All reductions are stated in terms of the standard advantages of adversaries against these primitives.

\paragraph{Organisational assumptions.}
In addition to these cryptographic assumptions, we rely on the minimal non-collusion Assumption~\ref{assump:noncollusion}, which rules out degenerate scenarios in which all PoR and PoT providers, anchoring keys, and external auditors collude with a single adversarial treasury.
If every such role in a deployment were controlled by one strategic actor, the resulting TPL instance could emit internally consistent yet entirely fictitious public views, and exposure-soundness guarantees would become vacuous.
The model also abstracts away from questions of domain completeness and off-chain legal semantics, which must be
addressed by governance and regulatory processes rather than by the cryptographic layer.

\subsection{Security objectives}

We include detailed game-based definitions and protocol syntax, together with proof sketches for the main theorems.
Full reductions and the more technical lemmas are provided in Appendices~\ref{app:neq-reduction}--\ref{app:privacy-reduction} of this version.

\begin{itemize}[noitemsep]
\item \textbf{On-chain existence consistency.} Building on PoR existence soundness at the primitive level, for each snapshot height $h$, the UTXO sets reported for every on-chain domain correspond to actual unspent outputs on Bitcoin (or its commit chains), up to clearly documented reconciliation tolerances.
\item \textbf{Ownership and control soundness.} For each domain, the TPL contains evidence that the treasury operator (or an explicitly named fiduciary) can authorise spends of the reported UTXOs under ordinary network conditions.
\item \textbf{Policy completeness.} Given a fixed disclosure policy $\mathcal{P}$ (for example, ``all cold-wallet holdings and all movements above 1\% of treasury size''), every event that should be disclosed under $\mathcal{P}$ eventually appears in the committed ledger, modulo explicit latency bounds, and in particular no accepted policy-based view derived from the ledger may omit such an event.
\item \textbf{Non-equivocation.} The treasury operator cannot maintain two conflicting TPL histories that are both consistent with the Bitcoin anchoring schedule and both acceptable to honest observers, except with negligible probability.
\item \textbf{Forward integrity and auditability.} Once a TPL state has been anchored, it cannot be altered without producing objectively detectable inconsistencies in the anchoring chain or in the internal hash pointers of the log.
\item \textbf{Privacy compatibility.} Under the defined adversarial model, authorised observers should not be able to infer more about the treasury's internal positions than what is implied by the chosen policy and any auxiliary disclosures. In this paper we formalise only a restricted privacy guarantee for a public-investor policy (Theorem~\ref{thm:pub-privacy}); more general policies and leakage functions remain an open problem.
\end{itemize}
\begin{table}[h]
\centering
\begin{tabularx}{\linewidth}{lX}
\toprule
Goal & Status in this paper \\
\midrule
On-chain existence consistency & Encoded in the conservation law for multi-domain exposure vectors (Theorem~\ref{thm:conservation}) under Assumption~\ref{assump:algebraic-consistency} and assuming PoR existence soundness at the primitive level (Definition~\ref{def:por-existence}). \\
Ownership and control soundness & Provided by the proof-of-reserves interface (Definitions~\ref{def:por-existence}--\ref{def:por-ownership}); assumed as a primitive rather than re-derived. \\
Exposure soundness & Defined formally in Definition~\ref{def:exp-soundness}; we prove a restricted exposure-soundness theorem for closed sets of domains under a faithful policy (Theorem~\ref{lem:restricted-exp-soundness}); the general case is left to future work. \\
Policy completeness & Defined in Definition~\ref{def:policy-completeness}; we obtain completeness for a specific historical-disclosure policy (Lemma~\ref{lem:policy-completeness-history}), while full policy completeness for arbitrary policies is left as a target notion. \\
Non-equivocation & Realised for TPL via the hash-chained ledger and PoT receipts (Theorem~\ref{thm:non-equivocation}). \\
Forward integrity and auditability & A consequence of non-equivocation and anchoring for the append-only ledger (Theorem~\ref{thm:non-equivocation}). \\
Privacy compatibility & Captured via leakage functions and simulation-based privacy; we prove a restricted theorem for a public-investor policy and its leakage profile $L_{\mathsf{pub}}$ (Theorem~\ref{thm:pub-privacy}); richer policies and leakage functions remain open. \\
\bottomrule
\end{tabularx}
\caption{Deployment-level goals for TPL and their status in this paper.}
\label{tab:goals-status}
\end{table}

\subsection{Participants}

We consider the following participants:

\begin{itemize}[noitemsep]
\item The \emph{treasury operator}, controlling policy, signing authority, and the configuration of domains;
\item \emph{Service providers} such as custodians, exchanges, prime brokers, and second-layer operators that hold or manage some of the company's Bitcoin;
\item \emph{Internal stakeholders}, including the board, risk committee, and internal audit;
\item \emph{External stakeholders}, including external auditors, regulators, and public-market investors.
\end{itemize}

\subsection{Adversaries}

Adversaries may include:

\begin{itemize}[noitemsep]
\item A dishonest treasury operator attempting to overstate Bitcoin holdings, hide encumbrances, or misrepresent domain exposures;
\item A compromised or colluding service provider that fabricates balance statements or withholds adverse information;
\item An external attacker who seeks to infer sensitive operational wallet structures or trading strategies from published artefacts;
\item An ``honest-but-curious'' observer (including some regulators) who follows the protocol but aims to learn more than the policy allows.
\end{itemize}

\paragraph{Minimal non-collusion assumption.}
\begin{assumption}[Minimal non-collusion]\label{assump:noncollusion}
For each domain $d \in \mathcal{D}$ we assume that at least one of the following roles remains honest and uncompromised:
(i) the implementation of the PoR primitive and its signing keys for $d$;
(ii) the PoT service provider(s) responsible for issuing receipts that involve $d$;
(iii) the keys used to anchor TPL commitments to Bitcoin on behalf of $d$; or
(iv) an independent external auditor that periodically reconstructs and checks the domain's state.
Scenarios in which all of these roles collude to fabricate both primitive evidence and audit trails are considered
outside the scope of the cryptographic guarantees that we prove, although they may still be bounded by external
legal, regulatory, or operational controls.
This ``at least one honest role per domain'' condition is explicitly organisational and mirrors the standard
``at least one honest log server or monitor'' assumptions used in transparency-log and accountable-logging models.
In practice, role~(iv) is aligned with statutory independence requirements for external audit firms, while role~(iii)
can be instantiated using anchoring keys operated directly by a prudential regulator or by a regulated multi-party HSM
under regulatory oversight.
These governance arrangements are not modelled cryptographically, but they provide concrete ways to justify the
assumption in real deployments.
\end{assumption}
\begin{lemma}[Necessity of non-collusion]
\label{lem:necessity-noncollusion}
If all parties that can influence the inputs to TPL for a given deployment (the treasury operator, all proof-of-reserves and proof-of-transit providers, and all anchoring keys) are controlled by a single adversarial entity, then for any polynomial-time TPL verifier and any target exposure vector consistent with Bitcoin's supply cap there exists a PPT strategy for the adversary that makes the verifier accept while the realised economic exposures of the treasury differ arbitrarily from that target.
In particular, no exposure-soundness property stronger than checking algebraic consistency of the reported balances can be achieved in this fully colluding setting.
\end{lemma}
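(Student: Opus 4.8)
The plan is to prove this by explicit construction of an adversarial strategy, showing that when a single entity controls all inputs to TPL, the verifier's view is entirely determined by adversarially chosen data and therefore carries no binding information about real-world exposures beyond internal algebraic consistency. The key observation is that every check a polynomial-time verifier can perform decomposes into two categories: (a) cryptographic consistency checks (signatures verify, hash chains link correctly, PoR/PoT predicates are internally satisfied, anchoring transactions are confirmed), and (b) algebraic consistency checks (the reported balance vectors satisfy the update rule~\eqref{eq:balance-update} and the conservation invariant of Theorem~\ref{thm:conservation}). Since the fully colluding adversary holds \emph{all} signing, PoR, PoT and anchoring keys, it can produce valid cryptographic evidence for \emph{any} syntactically well-formed history it chooses, so category~(a) imposes no constraint tying the log to reality.

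First I would fix an arbitrary target exposure vector $\vec{B}^{\star}$ consistent with the supply cap and describe the simulator: the adversary simply picks any sequence of events whose net effect produces reported balances equal to $\vec{B}^{\star}$ (respecting the conservation law so that category~(b) checks pass), then honestly runs the \emph{genuine} TPL algorithms on this fabricated event stream using the keys it controls --- generating real PoT receipts, real PoR attestations, real signatures, and a real anchoring transaction. Because these are produced by the legitimate algorithms, they pass every verification check by construction, so the verifier accepts with probability~$1$. Second, I would argue that the realised economic exposures are unconstrained: nothing in the accepted transcript forces the off-chain custodial balances, derivative positions, or encumbrances to match what was reported, since for off-chain domains the ``evidence'' is itself a signed statement produced by the colluding parties, and even for on-chain domains the adversary can point PoR proofs at UTXOs it does not truly control in the intended fiduciary sense (or can have fabricated the ownership attestations outright). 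Hence the gap between reported and realised exposure can be made arbitrarily large.

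For the second sentence of the claim --- the impossibility of any soundness notion stronger than algebraic consistency --- I would formalise it as a two-worlds indistinguishability argument: given any two histories $\mathsf{Hist}_0,\mathsf{Hist}_1$ that are each internally algebraically consistent, the adversary can present either one to the verifier with identically distributed accepting transcripts, so no verifier can distinguish a truthful log from a fabricated one that differs arbitrarily in real exposures. This shows that algebraic consistency is exactly the residual guarantee, matching the statement. The structure here mirrors the standard argument that non-equivocation and transparency-log guarantees collapse when all log servers and monitors are corrupt; our Assumption~\ref{assump:noncollusion} is precisely what excludes this degenerate regime.

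The main obstacle I anticipate is making precise what ``realised economic exposures'' means, since this quantity lives outside the cryptographic model --- it refers to ground-truth facts about custody and obligations that the formalism deliberately abstracts away. I would handle this by treating the realised exposure as an external parameter of the environment that is \emph{not} an output of any TPL algorithm and on which the verifier has no independent channel; the proof then only needs the (immediate) fact that the accepting transcript is a deterministic function of the adversary's chosen inputs and is independent of this external parameter, so the two can be decoupled. A secondary subtlety is cleanly separating which verifier checks are ``algebraic'' versus ``cryptographic,'' which I would address by appealing to the fixed syntax of TPL views and the verification predicate defined for \textsf{GenView} and the anchored ledger, ensuring the enumeration of checks is exhaustive.
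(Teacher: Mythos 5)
Your proposal is correct and follows essentially the same route as the paper's own proof sketch: the paper likewise has the fully colluding adversary fabricate an internally consistent history (PoR snapshots, PoT receipts, anchored commitments) matching the target exposure vector, and then observes that any verifier accepting honest executions must accept this one, since it only ever sees the fabricated artefacts and has no channel to the underlying economic reality. Your additional scaffolding --- the cryptographic-versus-algebraic check decomposition, the two-worlds indistinguishability phrasing, and treating realised exposure as an external parameter decoupled from the transcript --- is a more explicit rendering of the same argument rather than a different one.
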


\begin{proof}[Proof sketch]
In the fully colluding scenario the adversary can choose arbitrary internal books and flows, fabricate consistent PoR snapshots and PoT receipts, and anchor arbitrary commitments while ensuring that all published artefacts are algebraically consistent with the target exposure vector.
Any experiment for exposure soundness that accepts honest executions must also accept this colluding execution, since the verifier only interacts with the fabricated artefacts and not with the underlying economic reality.
\end{proof}

We assume standard cryptographic hardness for hash functions and digital signatures.
In particular, let $H : \{0,1\}^\star \rightarrow \{0,1\}^\kappa$ be the hash function used to commit to ledger prefixes and auxiliary data.

\begin{definition}[Hash collision resistance]
\label{def:hash-coll}
For a PPT adversary $\mathcal{A}$ and security parameter $\lambda$, the experiment $\mathsf{Exp}^{\mathsf{coll}}_{\mathcal{A}}(\lambda)$ runs $\mathcal{A}(1^\lambda)$ to obtain a pair of strings $(x,x')$ and outputs $1$ if $x \neq x'$ but $H(x)=H(x')$.
The collision advantage of~$\mathcal{A}$ is
\[
  \mathsf{Adv}^{\mathsf{coll}}_{\mathcal{A}}(\lambda)
  = \Pr\big[\mathsf{Exp}^{\mathsf{coll}}_{\mathcal{A}}(\lambda)=1\big].
\]
We say that $H$ is collision-resistant if $\mathsf{Adv}^{\mathsf{coll}}_{\mathcal{A}}(\lambda)$ is negligible in $\lambda$ for all PPT adversaries~$\mathcal{A}$.
\end{definition}

\subsection{Security and governance goals}

TPL is designed to satisfy the following goals:

\begin{itemize}[noitemsep]
\item \textbf{Soundness of exposures:} it should be computationally infeasible for a treasury to report total Bitcoin exposure above reality without breaking cryptographic primitives or violating the minimal non-collusion assumption of Section~\ref{sec:threat};
\item \textbf{Completeness of flows:} all material inter-domain movements of Bitcoin must eventually be explainable as sequences of PoT receipts consistent with observed PoR snapshots;
\item \textbf{Policy-correct views:} for any observer class $\alpha$, the published view $\mathrm{View}_\alpha$ must be a deterministic function of the committed ledger and the declared policy $V_\alpha$, with no manual editing of underlying data;
\item \textbf{Operational privacy:} public and semi-public views should reveal only the information about exposures that is authorised by the corresponding policy and leakage function. In this work we only prove a restricted privacy theorem for a specific public-investor policy (Theorem~\ref{thm:pub-privacy}); a full treatment for richer policies and leakage functions is left to future work;
\item \textbf{Accountability and non-repudiation:} once an event has been recorded and anchored in TPL, neither the treasury nor a service provider can plausibly deny its occurrence without contradicting previously published commitments.
\end{itemize}

We formalise exposure soundness and policy completeness as \emph{target} experiments
(Definitions~\ref{def:exp-soundness} and~\ref{def:policy-completeness}): they capture the level of robustness
we would like real deployments to achieve in the long run, rather than the exact guarantees we are able to prove today.
In this paper we prove a \emph{restricted} exposure-soundness theorem (Theorem~\ref{lem:restricted-exp-soundness}) for
closed sets of domains under a faithful exposure policy, together with a simple policy-completeness lemma for a
history-revealing policy (Lemma~\ref{lem:policy-completeness-history}).
A full analysis of exposure soundness and policy completeness for expressive policy languages and open domain sets
is left to future work.

The remainder of the paper turns these informal goals into explicit game-based security notions.
We prove three core integrity properties for an idealised TPL construction: conservation across domains; non-equivocation (and hence forward integrity and non-repudiation) of logged events; and correctness of policy-based views for deterministic, stateless policies. We also give a restricted exposure-soundness lemma for closed sets of domains and a restricted privacy theorem for a public-investor policy.

These goals complement, rather than replace, traditional financial-statement assertions such as existence, completeness, rights and obligations, and presentation under IFRS or US~GAAP.

\subsection{Assumptions and scope}

TPL does not attempt to address all risks faced by digital-asset treasuries.
To make the guarantees of our construction precise, we make the following assumptions explicit.
\paragraph{Summary.}
All guarantees in this paper are proved \emph{only} under the following conditions, which we make explicit upfront:
\begin{itemize}[noitemsep]
\item \textbf{Cryptographic assumptions:} an existence- and ownership-sound proof-of-reserves (PoR) primitive, an unforgeable proof-of-transit (PoT) construction, and a collision- and preimage-resistant hash function with domain separation across its uses.
\item \textbf{System assumptions:} an anchoring substrate that behaves as an append-only log with eventual finality after $k$ confirmations (Assumption~\ref{assump:anchoring-substrate}).
\item \textbf{Interface assumptions:} an algebraically consistent PoR interface that reports exposures consistent with the balance-update rule~\eqref{eq:balance-update}, including explicit flows to and from the fee domain and modelled counterparties (Assumption~\ref{assump:algebraic-consistency}).
\item \textbf{Modelling assumptions:} domain completeness for the set of domains $\mathcal{D}$ (Assumption~\ref{assump:domain-completeness}) so that all economically relevant positions are assigned to some domain, together with correctly configured reporting schedules and policy predicates.
\item \textbf{Trust assumptions:} the minimal non-collusion assumption of Section~\ref{sec:threat}, applied per domain, which rules out full collusion between all PoR/PoT providers, anchoring keys, and auditors for that domain.
\end{itemize}

\begin{assumption}[Domain completeness]
\label{assump:domain-completeness}
For every economically relevant Bitcoin-denominated position held by the treasury at logical time $t$, there exists a domain $d \in \mathcal{D}$ such that this position is reflected in $B_d(t)$ according to the balance-update rule~\eqref{eq:balance-update}.
\end{assumption}

\paragraph{Cryptographic assumptions.}
\emph{Proof-of-reserves soundness.}
We treat the underlying proof-of-reserves mechanism as an external primitive that
outputs snapshots and inclusion proofs with standard existence and ownership soundness
properties (for example, as in prior work on provable reserves~\cite{Dagher2015,Chalkias2018}).
In particular, a dishonest party should not be able to convince verifiers of non-existent
Bitcoin holdings except with negligible probability.

\emph{Hash-function assumptions.}
We assume that the hash function $H$ used to commit to ledger states and PoT receipts
is collision resistant and preimage resistant, and that its different uses
(for example, ledger commitments, PoT receipts, and any PoR-related commitments)
are domain-separated, for instance via distinct prefix tags, so that re-use of
the same primitive does not introduce cross-protocol collisions.

\paragraph{System assumptions.}
We treat the anchoring substrate (e.g., Bitcoin L1) as providing eventual finality after $k$ confirmations:
once an anchor transaction is $k$ blocks deep it is economically infeasible to revert.

\begin{assumption}[Anchoring substrate]\label{assump:anchoring-substrate}
The anchoring substrate provides eventual finality after $k$ confirmations in the sense that once an anchor transaction is
$k$ blocks deep it is economically infeasible to revert, and behaves as an append-only log for the TPL protocol.
For the concrete case in which Bitcoin itself is used as the anchoring substrate, this corresponds to the standard
assumption that an adversary controls strictly less than half of the network's hash power, so that the probability of
reverting a transaction with $k$ confirmations is negligible as a function of~$k$.
Denial-of-service attacks and long-range reorganisations beyond this abstraction are outside our model.
\end{assumption}
We do not model denial-of-service attacks or long-range consensus failures, and we abstract away transaction-fee dynamics except where explicitly noted.
The TPL itself is an append-only log with a well-defined total order on events, maintained by the treasury operator or an appointed service provider.

\paragraph{Proof-of-reserves (PoR) interface.}
We treat the underlying proof-of-reserves primitive as a black-box interface that exposes the current reserves state as a multiset of abstract Bitcoin-denominated exposure items, which may be positive or negative (for example, long and short positions).
Concrete instantiations may range from on-chain UTXOs to off-chain channel states or collateral positions, provided they admit
such a proof-of-reserves interface; fully modelling the semantics of complex off-chain domains is left to future work. For domains with net short or collateralised positions we model exposure items at this interface level as signed margin or collateral records; we treat these as abstract claims about Bitcoin-denominated exposures and do not attempt to capture their full legal or economic semantics.

\begin{assumption}[Algebraic consistency]\label{assump:algebraic-consistency}
For each domain $d \in \mathcal{D}$ and logical time $t$, let $B_d(t)$ denote the algebraic Bitcoin-equivalent exposure
defined by the balance-update rule~\eqref{eq:balance-update}, including explicit flows to and from the fee domain
$d_{\mathrm{fee}}$ and any modelled external counterparties.
We assume that whenever the PoR primitive is invoked for domain $d$ at a time aligned with a TPL snapshot or event,
the exposure reported by the PoR interface for~$d$ coincides with $B_d(t)$ up to this explicit handling of fees and
external flows.
In particular, PoR snapshots and balance proofs do not create or destroy exposure beyond what is accounted for in the
treasury-event sequence and the fee domain.
\end{assumption}

This assumption should be read as a governance and interface requirement rather than as a standalone cryptographic
guarantee: the treasury must configure its proof-of-reserves interface so that the algebraic exposure space used by TPL
reflects the economic notion of exposure it intends to attest to (for example, whether particular derivative positions,
haircuts, or segregated client assets are in or out of scope).

At any time the treasury can run a snapshot procedure over a multiset of unspent outputs $S_{\mathsf{snap}}$ to obtain a public commitment $C_{\mathsf{snap}}$ and auxiliary data that later support membership and balance proofs.
Given a commitment $C_{\mathsf{snap}}$, a description of an exposure claim $(x,a)$ (for example, an address or script and a value $a$), and a proof $\pi$, a deterministic verification algorithm
\[
  \mathsf{PoR.Verify}(C_{\mathsf{snap}}, x, a, \pi) \in \{0,1\}
\]
returns $1$ if and only if the claim is accepted.
We assume two soundness properties: \emph{existence soundness}, meaning that any accepted proof corresponds to a real unspent output of value at least $a$ in $S_{\mathsf{snap}}$, and \emph{ownership soundness}, meaning that such an output is spendable under keys controlled by the treasury.

\begin{definition}[PoR existence soundness]
\label{def:por-existence}
Let $\mathsf{Exp}^{\mathsf{por-exist}}_{\mathcal{A}}(\lambda)$ be the following experiment.
The challenger fixes a multiset $S$ of treasury-controlled Bitcoin positions at some snapshot time, where each $(x,a)\in S$ represents a ledger position $x$ (for example, a specific UTXO or account identifier) with associated Bitcoin value $a$, and runs
the PoR snapshot procedure to obtain a commitment $C_{\mathsf{snap}}$ for $S$.
It gives $1^\lambda$ and $C_{\mathsf{snap}}$ to a PPT adversary~$\mathcal{A}$, which may make arbitrary
(polynomially many) auxiliary queries to the challenger about $S$ that do not reveal secret keys.
Eventually $\mathcal{A}$ outputs a tuple $(x^\star,a^\star,\pi^\star)$.
The experiment outputs $1$ if and only if
$\mathsf{PoR.Verify}(C_{\mathsf{snap}},x^\star,a^\star,\pi^\star)=1$ but $(x^\star,a^\star)\not\in S$.
The \emph{existence-soundness advantage} of~$\mathcal{A}$ is
\[
  \mathsf{Adv}^{\mathsf{por-exist}}_{\mathcal{A}}(\lambda)
  = \Pr\big[\mathsf{Exp}^{\mathsf{por-exist}}_{\mathcal{A}}(\lambda)=1\big].
\]
We say that the PoR primitive satisfies existence soundness if $\mathsf{Adv}^{\mathsf{por-exist}}_{\mathcal{A}}(\lambda)$
is negligible in~$\lambda$ for all PPT adversaries~$\mathcal{A}$.
\end{definition}

\begin{definition}[PoR ownership soundness]
\label{def:por-ownership}
Let $\mathsf{Exp}^{\mathsf{por-own}}_{\mathcal{A}}(\lambda)$ be defined as above except that the experiment first fixes
a set of treasury-held secret keys $K_{\mathrm{treas}}$ and a multiset $S$ of Bitcoin positions spendable under keys in $K_{\mathrm{treas}}$ at snapshot time.
The challenger runs the snapshot procedure to obtain $C_{\mathsf{snap}}$ for $S$ and gives $1^\lambda$ and $C_{\mathsf{snap}}$
to~$\mathcal{A}$, which outputs $(x^\star,a^\star,\pi^\star)$.
The experiment outputs $1$ if and only if
$\mathsf{PoR.Verify}(C_{\mathsf{snap}},x^\star,a^\star,\pi^\star)=1$ but the underlying coin $x^\star$ is not spendable
under any key in $K_{\mathrm{treas}}$.
The \emph{ownership-soundness advantage} of~$\mathcal{A}$ is
\[
  \mathsf{Adv}^{\mathsf{por-own}}_{\mathcal{A}}(\lambda)
  = \Pr\big[\mathsf{Exp}^{\mathsf{por-own}}_{\mathcal{A}}(\lambda)=1\big].
\]
We say that the PoR primitive satisfies ownership soundness if $\mathsf{Adv}^{\mathsf{por-own}}_{\mathcal{A}}(\lambda)$
is negligible in~$\lambda$ for all PPT adversaries~$\mathcal{A}$.
\end{definition}

Existing constructions in the Provisions/DAPOL family instantiate this interface for Bitcoin UTXOs and similar account-based ledgers.

\paragraph{Example instantiations.}
For concreteness, and to demonstrate that our abstract interfaces are non-vacuous, TPL can be instantiated
using existing proof-of-reserves and anchoring mechanisms that satisfy Definitions~\ref{def:por-existence}--\ref{def:por-ownership}
and~\ref{def:pot-unforgeability}.
One natural choice is to use a Provisions-style Merkle-based PoR scheme~\cite{Dagher2015} for snapshots: for each reporting time~$t_i$ the vector of domain exposures $(B_d(t_i))_{d\in\mathcal{D}}$ is encoded as leaves in a Merkle tree, the PoR snapshot set $\mathcal{S}_i$ contains the resulting root and auxiliary data, and inclusion proofs returned by $\mathsf{GenView}$ are Merkle paths checked by $\mathsf{VerifyView}$.
In parallel, PoT receipts can be instantiated by adapting the PoTT construction summarised in Section~\ref{sec:background}: the primitive evidence $\mathsf{evid}_i$ consists of one or more Bitcoin transaction identifiers moving value between controlled UTXOs, and the per-event record $\mathsf{rec}_i$ includes a hash-chain value $R_i$ together with digital signatures by the treasury and any external service provider on $(h_i,d_{\mathrm{src}},d_{\mathrm{dst}},t_i,R_i)$.
Finally, the ledger commitment $C_i = H(\mathrm{TPL}_i)$ can be embedded into Bitcoin either via an \texttt{OP\_RETURN} output or as data committed in a Taproot script path, so that anyone with access to the blockchain can retrieve and verify the sequence of anchors.
Our formal treatment remains independent of the specific choice of PoR and PoT primitives: any instantiation that satisfies the existence-, ownership-, unforgeability-, and collision-resistance assumptions stated above can be plugged into the TPL interface.
\paragraph{Privacy assumptions for restricted results.}
For the restricted privacy theorem for the public-investor policy (Theorem~\ref{thm:pub-privacy}), we assume that the underlying PoR and PoT primitives admit zero-knowledge or witness-indistinguishable proofs of correctness for the public values that appear in TPL (for example, that a PoR commitment really binds to an internal ledger state and that each PoT receipt corresponds to a batch of Bitcoin transactions moving a specified amount of value). We also assume that any digital signatures used in TPL are standard existentially unforgeable schemes whose outputs can be generated by a simulator given the same signing keys as in the real execution (or wrapped inside the same zero-knowledge proof system), so that signatures reveal no auxiliary information beyond what is explicit in their messages and public keys.
Intuitively, we require that the public transcripts derived from these primitives leak no information beyond the values that are already explicit in the corresponding policy-based views and anchoring transactions.
In practice this can be realised, for example, by bundling signatures and PoR/PoT statements inside zero-knowledge proofs of knowledge, or by using schemes whose outputs are statistically close to uniform conditioned on the public keys.
These stronger assumptions ensure that the artefacts exposed through $\mathsf{GenView}$ and the on-chain commitments can be simulated given only the leakage function $L_{\mathsf{pub}}$ and the anchoring transactions, rather than relying purely on existence-and-ownership soundness and collision resistance.

These privacy assumptions are stated at the level of abstract primitives; we deliberately do not commit to a specific instantiation and treat Theorem~\ref{thm:pub-privacy} as an existence-style result that shows such leakage profiles are achievable in principle.

\paragraph{Organisational and governance assumptions.}
TPL is designed to support, not replace, traditional governance and audit processes.
We assume that internal approval workflows (e.g., board resolutions, risk-committee sign-off) are correctly followed whenever they are invoked, and that at least one auditor or oversight body will check consistency between PoR attestations, PoT receipts, and the committed TPL state.
Cases where all internal and external actors collude to fabricate PoR evidence remain outside the scope of purely cryptographic guarantees; exposure soundness is therefore only intended to hold under a minimal non-collusion assumption, namely that for each domain at least one PoR/PoT service provider behaves honestly or is periodically checked by an independent auditor.

\paragraph{Domain completeness and anchoring control.}
TPL models the treasury as a collection of abstract domains $\mathcal{D}$.
We assume that all Bitcoin-denominated positions that management wishes to treat
as part of the treasury are assigned to some modelled domain $d \in \mathcal{D}$.
TPL cannot, by itself, detect assets that are omitted from the domain decomposition;
assessing the completeness of domain coverage remains a traditional audit and
governance responsibility.
We further assume that at least one party controlling the keys used to embed
anchors on Bitcoin behaves honestly or is periodically checked by an independent
auditor.
If a single malicious party controlled all anchoring keys and all internal records,
TPL would still prevent that party from later equivocating about its own committed
history, but it could not stop them from logging a systematically false history
from the outset.

\paragraph{Scope and non-goals.}
We explicitly \emph{do not} address the correctness of valuation policies (for example, how to translate Bitcoin-denominated positions into fiat reporting currencies), nor do we attempt to model market-liquidity risk or price impact.
Similarly, we assume that off-chain legal contracts governing custodial relationships, derivatives, or structured products are honoured by counterparties; TPL's role is to capture cryptographic evidence consistent with those contracts and expose them through policy-aware views.
Finally, we do not attempt to provide differential-privacy-style guarantees; our privacy goal is operational: to ensure that only those events and positions authorised by a given policy become visible to that policy's observers. Intuitively, observers should learn nothing beyond the policy-defined view, its associated Bitcoin anchors, and what is already implied by the public Bitcoin ledger; formalising this via an explicit leakage function and simulation-based privacy definition is left to future work.
\section{The Treasury Proof Ledger}
\label{sec:tpl}

From this point onwards we work with an \emph{idealised} TPL construction that sits on top of the
system and cryptographic assumptions of Section~\ref{sec:threat}.
In particular, we assume that the proof-of-reserves interface satisfies the existence- and ownership-
soundness properties of Definitions~\ref{def:por-existence} and~\ref{def:por-ownership}, that the
proof-of-transit construction satisfies receipt unforgeability (Definition~\ref{def:pot-unforgeability}),
that the Bitcoin anchoring substrate provides eventual finality and the liveness bounds from
Section~\ref{sec:threat}, and that the minimal non-collusion assumption of Section~\ref{sec:threat} holds.
All security theorems in this and subsequent sections are stated with respect to this idealised construction.

\subsection{Ledger structure}

The TPL is a cryptographically committed ledger consisting of:

\begin{enumerate}[noitemsep]
\item \textbf{PoR Snapshots}  
Merkle-committed snapshots of all domain exposures $B_d(t)$.
\item \textbf{PoT Records}  
Receipt chains adapted from PoTT, documenting treasury events.
\item \textbf{Policy Metadata}  
Tags indicating regulatory category, materiality, risk, sensitivity.
\end{enumerate}
The minimal non-collusion assumption of Section~\ref{sec:threat} (Assumption~\ref{assump:noncollusion}) is intentionally limited: it rules out the degenerate case in which all PoR and PoT providers, anchoring keys, and auditors collude to hide missing exposures.

The ledger forms an append-only structure with regular commitment to a public anchoring medium on the Bitcoin blockchain (e.g., via OP\_RETURN commitments or Taproot-embedded commitments).

\subsection{Proof-of-Transit for treasury flows}

Each treasury event
\[
e_i = (t_i, d_{\mathrm{src}}, d_{\mathrm{dst}}, v_i, \mathsf{evid}_i, m_i)
\]
creates a PoT receipt $R_i$.
From the primitive evidence and policy metadata we derive a digest
\[
h_i := H(\mathsf{evid}_i \,\Vert\, v_i \,\Vert\, m_i)
\]
and then define
\[
R_i := H(R_{i-1} \,\Vert\, h_i \,\Vert\, d_{\mathrm{src}} \,\Vert\, d_{\mathrm{dst}} \,\Vert\, t_i).
\]
We initialise the chain with a fixed public seed $R_0 := 0^\lambda$.

For each treasury event $e_i$ we write the full Proof-of-Transit record as
\[
  \mathsf{rec}_i := (t_i, d_{\mathrm{src}}, d_{\mathrm{dst}}, v_i,
                     \mathsf{evid}_i, m_i, h_i, R_i,
                     \sigma^{\mathsf{treas}}_i, \sigma^{\mathsf{prov}}_i),
\]
where $\sigma^{\mathsf{treas}}_i$ and $\sigma^{\mathsf{prov}}_i$ are digital signatures
under the treasury and service-provider keys on the authenticated tuple
$(h_i, d_{\mathrm{src}}, d_{\mathrm{dst}}, t_i, R_i)$.
These receipts and their signatures create a verifiable chain of custody without revealing underlying wallet addresses.

\subsection{State evolution}
\label{subsec:state-evolution}
Let $\mathrm{TPL}_0$ denote the empty ledger at the time a company adopts the scheme.
At each reporting interval $i$, the ledger evolves as
\[
  \mathrm{TPL}_i = \mathrm{TPL}_{i-1} \,\Vert\, (\mathcal{S}_i, \mathcal{R}_i, M_i),
\]
where $\mathcal{S}_i$ is the set of PoR snapshots, $\mathcal{R}_i$ is the set of PoT receipts, and $M_i$ is the multiset
of treasury events appended in the reporting interval corresponding to an appropriate index value.
We distinguish notationally between the per-event PoT chain digest $R_i$ defined in the preceding subsection and the
per-interval set of PoT records $\mathcal{R}_i$: the roman symbol $R_i$ always denotes a single hash-chain value,
whereas the calligraphic symbol $\mathcal{R}_i$ denotes a set of receipts.
The ledger state is summarised by a commitment $C_i = H(\mathrm{TPL}_i)$ that can be anchored periodically to Bitcoin,
a notary chain, or a widely monitored transparency log.

To simplify notation we write $\mathrm{TPL}_i, \mathcal{S}_i, \mathcal{R}_i$ for the ledger prefix and associated artefacts
after reporting interval~$i$, and we use $t_i$ for the logical time associated with that prefix.
To keep notation compact we write $e_k$ for the $k$-th treasury event in the globally ordered sequence and
let $t_k$ denote its timestamp.
Reporting intervals are indexed separately by $i$, with $\mathrm{TPL}_i$ and $C_i$ referring to the ledger prefix
and commitment after interval~$i$ and logical time $t_i$ denoting the right end-point of that interval.
Where convenient we reuse the symbol $i$ for both an event index and a reporting-interval index, but the underlying
objects are always clear from context: $e_i$ denotes an event, whereas $\mathrm{TPL}_i$ and~$C_i$ denote a ledger
prefix and its commitment.

For each domain $d$ and time $t$, the ledger yields a derived exposure $B_d(t)$ obtained by aggregating all events up to time~$t$
together with all intervening PoT receipts and any encumbrance metadata.
Subject to valuation policies, aggregate on-balance-sheet Bitcoin exposure at time $t$ is then
\[
  B_{\mathrm{total}}(t) = \sum_{d \in \mathcal{D}} B_d(t).
\]

\subsection{Protocol syntax and algorithms}

We model the TPL as a tuple of probabilistic polynomial-time algorithms
\[
  \Pi_{\mathsf{TPL}} = (\mathsf{Setup}, \mathsf{AppendEvent}, \mathsf{Snapshot}, \mathsf{Anchor}, \mathsf{GenView}, \mathsf{VerifyView})
\]
that run under a security parameter $\lambda$ and maintain an internal ledger state and associated cumulative commitments.
We write $\mathsf{TPL}$ (without an index) for the abstract protocol and $\mathrm{TPL}_i$ for the ledger prefix after $i$ inputs.
Formally, each ledger state $\mathrm{TPL}_i$ is a prefix of a linear sequence of typed records $(\mathrm{rec}_1,\ldots,\mathrm{rec}_i)$, each of which is authenticated under the cumulative commitment $C_i$.

\begin{description}
  \item[$\mathsf{Setup}(1^\lambda)$:]
  Initialise an empty ledger $\mathrm{TPL}_0 := \epsilon$, set the initial commitment $C_0 := H(\epsilon)$, and generate any signing keys required for treasury and service-provider authentication. Output the public parameters $\mathsf{pp}$ and initial state $(\mathrm{TPL}_0, C_0)$.

  \item[$\mathsf{AppendEvent}(\mathrm{TPL}_{i-1}, e_i)$:]
  Parse the event $e_i$ as $(t_i, d_{\mathrm{src}}, d_{\mathrm{dst}}, v_i, \mathsf{evid}_i, m_i)$,
  where $t_i$ is a timestamp, $d_{\mathrm{src}}$ and $d_{\mathrm{dst}}$ are source and destination
  domains, $v_i$ is the BTC-denominated transfer value, $\mathsf{evid}_i$ is primitive evidence
  (for example, on-chain transaction identifiers, custody statements, or contractual references),
  and $m_i$ is policy metadata.
  Compute a primitive hash $h_i := H(\mathsf{evid}_i \,\Vert\, v_i \,\Vert\, m_i)$ and a PoT receipt
  \[
     R_i := H(R_{i-1} \,\Vert\, h_i \,\Vert\, d_{\mathrm{src}} \,\Vert\, d_{\mathrm{dst}} \,\Vert\, t_i),
  \]
  and form the signed PoT record $\mathsf{rec}_i$ as above (including treasury and service-provider signatures on $(h_i, d_{\mathrm{src}}, d_{\mathrm{dst}}, t_i, R_i)$).
  Extend the ledger by setting $\mathcal{R}_i := \mathcal{R}_{i-1} \cup \{\mathsf{rec}_i\}$, updating domain balances,
  and appending the new record to obtain $\mathrm{TPL}_i$ and commitment $C_i := H(\mathrm{TPL}_i)$.

  \item[$\mathsf{Snapshot}(\mathrm{TPL}_i)$:]
  Using the current ledger prefix, recompute domain exposures $B_d(t_i)$ by folding all PoT receipts and encumbrance metadata since the last snapshot.
  Commit the vector $(B_d(t_i))_{d\in\mathcal{D}}$ using an authenticated data structure (for example, a Merkle tree or vector commitment), and record the resulting digest inside $\mathcal{S}_i$.

  \item[$\mathsf{Anchor}(C_i)$:]
  Construct a Bitcoin transaction embedding $C_i$ in an \texttt{OP\_RETURN} output (or equivalent notary mechanism), broadcast it to the network, and record the transaction identifier and block height in $M_i$ once the anchor reaches the required confirmation depth.

  \item[$\mathsf{GenView}(\mathrm{TPL}_i,\alpha)$:]
  Apply a policy $V_\alpha$ to the ledger prefix to derive a view $\mathrm{View}_\alpha := V_\alpha(\mathrm{TPL}_i)$.
  The policy specifies filters, aggregations, redactions, and timing rules (for example, minimum materiality thresholds or reporting lags) and returns both the derived table of balances and flows and the subset of PoR and PoT artefacts needed to verify it.

  \item[$\mathsf{VerifyView}(\mathrm{View}_\alpha,\alpha,\mathcal{A}_{\mathrm{BTC}})$:]
  Given a candidate view, the public policy description, and access to the set of Bitcoin anchors $\mathcal{A}_{\mathrm{BTC}}$, reconstruct the sequence of commitments $\{C_j\}$ referenced in the view, check that they are all anchored with sufficient confirmations, and verify that the included PoR and PoT artefacts are consistent with the commitments.
  Accept if and only if all checks pass and recomputing $V_\alpha(\mathrm{TPL}_i)$ from the implied ledger digest yields $\mathrm{View}_\alpha$.
\end{description}
\paragraph{Bitcoin verification oracle.}
We model access to the underlying Bitcoin ledger via an abstract oracle $\mathcal{A}_{\mathrm{BTC}}$ that, given a set of transactions and a confirmation depth parameter, checks their validity and inclusion at the required depth.

This interface abstracts over concrete implementation choices for the primitive authenticated data structures and signature schemes, and will be used in the security definitions below.

\subsection{TPL as a state machine}
\label{subsec:tpl-state-machine}

For the security analysis it is convenient to refine the state evolution of Section~\ref{subsec:state-evolution}
and view the TPL as a stateful machine whose state evolves monotonically over a sequence of inputs.
Let $\Sigma$ be the set of internal states and let $\mathcal{I}$ be the set of inputs.
A state $\sigma_i \in \Sigma$ at logical step $i$ has the form
\[
  \sigma_i = (\mathrm{TPL}_i, C_i, \mathsf{aux}_i)
\]
where:
\begin{itemize}[noitemsep]
  \item $\mathrm{TPL}_i$ is the ledger prefix after processing $i$ events,
  \item $C_i := H(\mathrm{TPL}_i)$ is the corresponding commitment, and
  \item $\mathsf{aux}_i$ denotes auxiliary bookkeeping state
        (such as the current vector of domain exposures, the latest snapshot and anchor metadata,
        and any policy catalogue) that can be recomputed from $\mathrm{TPL}_i$ if needed.
\end{itemize}
The initial state $\sigma_0 = (\mathrm{TPL}_0, C_0, \mathsf{aux}_0)$ is produced by
$\mathsf{Setup}(1^\lambda)$.

The input alphabet $\mathcal{I}$ comprises four kinds of inputs:
\begin{itemize}[noitemsep]
  \item \emph{event inputs} $(\mathsf{event}, e_i)$ representing treasury events,
  \item \emph{snapshot triggers} $(\mathsf{snapshot})$,
  \item \emph{anchoring triggers} $(\mathsf{anchor})$, and
  \item \emph{view queries} $(\mathsf{view}, \alpha)$ for a public policy identifier $\alpha$.
\end{itemize}
We write $\delta : \Sigma \times \mathcal{I} \rightarrow \Sigma \times \mathcal{O}$
for the (probabilistic) transition function, where $\mathcal{O}$ is the set of outputs.
Informally, $\delta$ is induced by the algorithms of $\Pi_{\mathsf{TPL}}$ as follows:
\begin{itemize}[noitemsep]
  \item On input $(\mathsf{event}, e_i)$ in state $\sigma_{i-1}$, the machine runs
        $\mathsf{AppendEvent}(\mathrm{TPL}_{i-1}, e_i)$ to obtain $(\mathrm{TPL}_i, C_i)$
        and updates the auxiliary state accordingly.
  \item On a snapshot trigger, the machine runs $\mathsf{Snapshot}(\mathrm{TPL}_i)$,
        appending the resulting snapshot record to $\mathrm{TPL}_i$ and updating $\mathsf{aux}_i$.
  \item On an anchoring trigger, the machine runs $\mathsf{Anchor}(C_i)$
        and updates the anchor metadata in $\mathsf{aux}_i$ when the transaction reaches the required depth.
  \item On a view query $(\mathsf{view}, \alpha)$, the machine runs
        $\mathsf{GenView}(\mathrm{TPL}_i,\alpha)$ to derive $\mathrm{View}_\alpha$
        and may subsequently call $\mathsf{VerifyView}$ to check a candidate view.
        The internal state $(\mathrm{TPL}_i, C_i, \mathsf{aux}_i)$ is not changed by such queries.
\end{itemize}

Thus the TPL induces a (possibly randomised) state machine
\[
  \mathcal{M}_{\mathsf{TPL}} = (\Sigma, \sigma_0, \mathcal{I}, \mathcal{O}, \delta)
\]
whose state sequence $(\sigma_i)_{i\geq 0}$ is a deterministic function of the
initial state and the sequence of inputs, given the random coins of the underlying
algorithms.
Crucially, the ledger component is \emph{append-only}:
\[
  \mathrm{TPL}_0 \;\preceq\; \mathrm{TPL}_1 \;\preceq\; \cdots \;\preceq\; \mathrm{TPL}_i \;\preceq\; \cdots
\]
where $\preceq$ denotes the prefix relation.
This monotonicity underlies the forward-integrity and non-equivocation guarantees
formalised in the next subsection.

\subsection{Security properties}
\label{subsec:security}

We now formalise the core integrity and soundness properties of the TPL
under the system and threat model of Sections~\ref{sec:model}--\ref{sec:threat}
and the protocol interface of the previous subsection.
Throughout this section we assume the admissible schedules from Definition~\ref{def:exp-soundness}; these bounds affect only liveness and not the integrity properties we prove. The conservation, non-equivocation, and view-correctness theorems (Theorems~\ref{thm:conservation}--\ref{thm:view-correctness}) rely only on the cryptographic assumptions of Section~\ref{sec:model-assumptions}. The minimal non-collusion condition in item~(5) of Theorem~\ref{lem:restricted-exp-soundness} is needed only for our restricted exposure-soundness result.
We first give game-based definitions, then relate them to the hash-chained,
anchored-log structure via reduction-style theorems.

\paragraph{Game-based security definitions.}

Throughout we fix a security parameter $\lambda$ and consider adversaries
$\mathcal{A}$ that are probabilistic polynomial-time (PPT) in~$\lambda$.
We write $\mathsf{negl}(\cdot)$ for negligible functions, i.e., functions that
decrease faster than any inverse polynomial in~$\lambda$.
The challenger runs the TPL as the state machine
$\mathcal{M}_{\mathsf{TPL}}$ defined in Section~\ref{subsec:tpl-state-machine},
with honest treasurers and service providers following the protocol.

\begin{definition}[PoT receipt unforgeability]
\label{def:pot-unforgeability}
Let $\mathsf{Exp}^{\mathsf{pot-forge}}_{\mathcal{A}}(\lambda)$ be the following experiment.
The challenger runs $\mathsf{Setup}(1^\lambda)$ to obtain public parameters
$\mathsf{pp}$ and initial state $(\mathrm{TPL}_0,C_0)$ and gives $\mathsf{pp}$ to~$\mathcal{A}$.
The adversary has oracle access to an \emph{append oracle}
$\mathcal{O}_{\mathsf{append}}$ that, on input a well-formed event $e$,
executes $\mathsf{AppendEvent}$ on the current ledger state, updates the state,
and returns the resulting signed PoT record $\mathsf{rec}_i$ (including $R_i$ and the treasury and service-provider signatures).
At the end of the interaction $\mathcal{A}$ outputs a purported PoT chain
$\mathcal{R}^\star = (R^\star_1,\ldots,R^\star_\ell)$ together with a sequence
of associated records $\mathsf{rec}^\star_i = (h^\star_i,d^\star_{\mathrm{src},i},
d^\star_{\mathrm{dst},i},t^\star_i,\sigma^\star_{\mathrm{treas},i},\sigma^\star_{\mathrm{prov},i})$.
The experiment checks that:
\begin{enumerate}[noitemsep]
  \item the sequence $\mathcal{R}^\star$ is internally consistent with the
        public PoT chaining rule and the signature scheme: for all $1 \leq i \leq \ell$ we have
        \[
          R^\star_i = H(R^\star_{i-1} \,\Vert\, h^\star_i \,\Vert\, d^\star_{\mathrm{src},i} \,\Vert\, d^\star_{\mathrm{dst},i} \,\Vert\, t^\star_i),
        \]
        and the treasury and service-provider signatures $\sigma^\star_{\mathrm{treas},i},\sigma^\star_{\mathrm{prov},i}$ on $(h^\star_i,d^\star_{\mathrm{src},i},d^\star_{\mathrm{dst},i},t^\star_i,R^\star_i)$ verify under their respective public keys; and
  \item $\mathcal{R}^\star$ is \emph{not} equal to any prefix of the honestly
        generated sequence of receipts returned by $\mathcal{O}_{\mathsf{append}}$:
        that is, there exists an index $1 \leq j \leq \ell$ such that
        $R^\star_j$ was never output by $\mathcal{O}_{\mathsf{append}}$ in this
        experiment.
\end{enumerate}
If both checks pass the experiment outputs~$1$.
We say that the PoT construction is \emph{unforgeable} if for all PPT adversaries
$\mathcal{A}$ the probability
$\Pr[\mathsf{Exp}^{\mathsf{pot-forge}}_{\mathcal{A}}(\lambda)=1]$
is negligible in $\lambda$.
\end{definition}

We stress that exposure soundness, and the related notion of policy completeness introduced below, are strong target notions:
in the remainder of the paper we prove only a restricted variant for closed sets of domains under faithful policies
(see Theorem~\ref{lem:restricted-exp-soundness}), while the full experiment
$\mathsf{Exp}^{\mathsf{exp-sound}}$ for arbitrary policies and partially participating domains is left as a goal for
future work.

\begin{definition}[Exposure soundness]
\label{def:exp-soundness}
Let $\mathsf{Exp}^{\mathsf{exp-sound}}_{\mathcal{A}}(\lambda)$ be the following
experiment between a challenger and a PPT adversary~$\mathcal{A}$.
\begin{enumerate}[noitemsep]
  \item The challenger runs $\mathsf{Setup}(1^\lambda)$ to obtain
        $(\mathsf{pp},\mathrm{TPL}_0,C_0)$ and gives $\mathsf{pp}$ to~$\mathcal{A}$.
  \item Interacting with honest treasurers and service providers that follow
        the protocol, $\mathcal{A}$ obtains PoR and PoT artefacts, snapshots,
        and anchored commitments, and may adaptively influence the timing and
        sequence of inputs to $\mathcal{M}_{\mathsf{TPL}}$.
        The resulting schedule of inputs must respect the liveness bounds
        $(\Delta_{\mathsf{event}},\Delta_{\mathsf{snap}},\Delta_{\mathsf{anchor}})$
        from Section~\ref{sec:threat}: every valid treasury event is either logged or
        rejected within $\Delta_{\mathsf{event}}$, snapshot triggers produce PoR snapshots
        at least once every $\Delta_{\mathsf{snap}}$, and TPL commitments are anchored
        to Bitcoin within $\Delta_{\mathsf{anchor}}$ of submission. We implicitly restrict all subsequent experiments to such admissible schedules.
  \item Eventually $\mathcal{A}$ outputs an alternative ledger prefix
        $\mathrm{TPL}^\star_i$, a corresponding set of Bitcoin transactions
        $\mathcal{M}^\star$ that it claims anchor commitments for
        $\mathrm{TPL}^\star_i$, and a logical time index $t_i$.
  \item The experiment recomputes domain balances $B_d(t_i)$ from the honest
        ledger $\mathrm{TPL}_i$ and $B_d^\star(t_i)$ from $\mathrm{TPL}^\star_i$,
        using the PoR and PoT artefacts implied by each ledger.
        It checks that all PoR/PoT verifications on $\mathrm{TPL}^\star_i$ and
        its anchors succeed and that $\mathcal{M}^\star$ are valid Bitcoin
        transactions at sufficient confirmation depth embedding the claimed
        commitments.
        It outputs $1$ (meaning $\mathcal{A}$ wins) if
        \[
          \sum_{d\in\mathcal{D}} B_d^\star(t_i) \;>\;
          \sum_{d\in\mathcal{D}} B_d(t_i)
        \]
        while all verification procedures accept.
\end{enumerate}
We say that TPL satisfies \emph{exposure soundness} if, for all PPT adversaries
$\mathcal{A}$, the probability
\[
  \Pr[\mathsf{Exp}^{\mathsf{exp-sound}}_{\mathcal{A}}(\lambda)=1]
\]
is negligible in $\lambda$. As throughout, this experiment is interpreted under the implicit assumptions of domain completeness for $\mathcal{D}$ and algebraic consistency (Assumption~\ref{assump:algebraic-consistency}).
\end{definition}

\begin{definition}[Policy completeness]
\label{def:policy-completeness}
Fix a public policy predicate $P(e,t,\alpha)$ with a latency bound $\Delta_P$ that, given an event $e$, a time $t$, and an observer class $\alpha$, specifies whether $e$ should be visible to $\alpha$ by time $t$.
In the experiment $\mathsf{Exp}^{\mathsf{pol-comp}}_{\mathcal{A}}(\lambda)$
the challenger and adversary first interact as in
$\mathsf{Exp}^{\mathsf{exp-sound}}_{\mathcal{A}}(\lambda)$.
As in Definition~\ref{def:exp-soundness}, the induced schedule of inputs must respect the
liveness bounds $(\Delta_{\mathsf{event}},\Delta_{\mathsf{snap}},\Delta_{\mathsf{anchor}})$
from Section~\ref{sec:threat}; adversarial strategies that violate these bounds fall
outside the modelled setting. This interaction yields a ledger
state $\mathrm{TPL}_i$ at time $t_i$ and the corresponding set of Bitcoin anchors for the prefix.
The adversary then outputs a candidate view $\mathrm{View}^\star_\alpha$ for some
policy $V_\alpha$ implementing~$P$ and an observer class~$\alpha$.
The experiment recomputes the honest ledger prefix $\mathrm{TPL}_i$ and runs
$\mathsf{VerifyView}(\mathrm{View}^\star_\alpha,\alpha,\mathcal{A}_{\mathrm{BTC}})$
using the public anchors and policy description.
If this check rejects, the experiment outputs~$0$.
Otherwise it checks whether there exists an event $e$ such that $P(e,t_i,\alpha)=1$ but
$e$ is absent from $\mathrm{View}^\star_\alpha$.
If such an $e$ is found, the experiment outputs~$1$; otherwise it outputs~$0$.
We say that TPL satisfies \emph{policy completeness} for~$P$ if for all PPT
adversaries $\mathcal{A}$ the probability
$\Pr[\mathsf{Exp}^{\mathsf{pol-comp}}_{\mathcal{A}}(\lambda)=1]$
is negligible in $\lambda$.
\end{definition}

As in Definition~\ref{def:exp-soundness}, all interactions in $\mathsf{Exp}^{\mathsf{pol-comp}}_{\mathcal{A}}(\lambda)$
are implicitly restricted to admissible schedules that respect the liveness parameters
$(\Delta_{\mathsf{event}},\Delta_{\mathsf{snap}},\Delta_{\mathsf{anchor}})$ introduced in Section~\ref{sec:threat};
adversarial strategies that violate these bounds fall outside the modelled setting.

We emphasise that, like exposure soundness, policy completeness is intended as a strong, deployment-level target notion.
In this paper we do not attempt to show that arbitrary real-world deployments of TPL satisfy
$\mathsf{Exp}^{\mathsf{pol-comp}}$; instead we study more restricted settings and leave a full analysis of
policy completeness for expressive policy languages to future work.
As a simple negative example, consider a treasury that actually operates an additional derivatives desk $d_\star$ that is never declared as a domain in $\mathcal{D}$.
The operator can route all risky exposures to $d_\star$ while only logging conservative flows among the declared domains.
By construction, the balance vector on the reported domain set will satisfy conservation and all PoR/PoT checks, so the exposure-soundness experiment of Definition~\ref{def:exp-soundness} will accept, even though the true economic exposure of the treasury is misrepresented.
No cryptographic construction over TPL can prevent this: the failure lies in the reporting perimeter (domain completeness), not in the protocol or its primitives.

We can however establish policy completeness for a simple but non-trivial history-revealing policy, which serves as a sanity check on Definition~\ref{def:policy-completeness}.

\begin{lemma}[Policy completeness for a history-revealing policy]
\label{lem:policy-completeness-history}
Fix an observer class $\alpha = \mathsf{hist}$ and let $P_{\mathsf{hist}}(e,t,\alpha)$ be the predicate that holds
iff event $e$ occurs at or before logical time $t$ and is not filtered out by the policy-language filters for~$\alpha$.
Let $V_{\mathsf{hist}}$ be the deterministic, stateless policy that, on input a ledger prefix $\mathrm{TPL}_i$, outputs
exactly the subsequence of events $e$ with $P_{\mathsf{hist}}(e,t_i,\alpha)=1$ together with the PoR, PoT, and anchoring
artefacts needed for $\mathsf{VerifyView}$ to re-check these events and the corresponding commitment $C_i$.

Assume that the liveness bounds from Section~\ref{sec:threat} hold, that TPL satisfies non-equivocation in the sense of
Definition~\ref{def:non-equivocation}, and that view correctness holds for $V_{\mathsf{hist}}$
(Theorem~\ref{thm:view-correctness}).
Then TPL satisfies policy completeness for $P_{\mathsf{hist}}$ in the sense of Definition~\ref{def:policy-completeness}.
\end{lemma}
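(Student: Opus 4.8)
The plan is to reduce a successful policy-completeness adversary to a break of either non-equivocation or collision resistance, using the fact that $V_{\mathsf{hist}}$ is deterministic and stateless so that, once the underlying ledger is fixed, the honest view is uniquely determined. I would first establish a purely liveness-based claim: every event $e$ with $P_{\mathsf{hist}}(e,t_i,\alpha)=1$ already appears in the honest prefix $\mathrm{TPL}_i$. Indeed, such an $e$ occurs at or before $t_i$ and, since we are restricted to admissible schedules, the bound $\Delta_{\mathsf{event}}$ forces it to have been incorporated into the internal log by logical time $t_i$. As $P_{\mathsf{hist}}$ also requires $e$ to survive the policy-language filters for $\alpha$, and $V_{\mathsf{hist}}$ by construction emits exactly the filtered subsequence of events of $\mathrm{TPL}_i$ up to $t_i$, the honest view $V_{\mathsf{hist}}(\mathrm{TPL}_i)$ contains every such $e$.

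Next I would unfold the winning condition: $\mathsf{VerifyView}(\mathrm{View}^\star_\alpha,\alpha,\mathcal{A}_{\mathrm{BTC}})=1$ while some $e$ with $P_{\mathsf{hist}}(e,t_i,\alpha)=1$ is missing from $\mathrm{View}^\star_\alpha$. By the acceptance clause of $\mathsf{VerifyView}$, the commitments referenced by $\mathrm{View}^\star_\alpha$ are anchored at the required depth and applying $V_{\mathsf{hist}}$ to the ledger implied by those commitments reproduces $\mathrm{View}^\star_\alpha$. The heart of the argument is to force this implied ledger to coincide with $\mathrm{TPL}_i$. Concretely, completeness is enforced by the PoT hash chain: the anchored commitment $C_i = H(\mathrm{TPL}_i)$ binds the terminal chain value $R_i$, and any proper subsequence of events that nonetheless reproduces $R_i$ yields a collision of $H$. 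If instead $\mathrm{View}^\star_\alpha$ references a distinct but validly anchored commitment for a conflicting prefix, I extract exactly a non-equivocation break in the sense of Definition~\ref{def:non-equivocation}. In either branch the implied ledger equals $\mathrm{TPL}_i$ except with negligible probability.

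Once the implied ledger is identified with $\mathrm{TPL}_i$, determinism of $V_{\mathsf{hist}}$ (view correctness, Theorem~\ref{thm:view-correctness}) gives $\mathrm{View}^\star_\alpha = V_{\mathsf{hist}}(\mathrm{TPL}_i)$, which by the first step contains $e$, contradicting the assumption that $e$ is absent. A union bound then caps the adversary's advantage by its non-equivocation advantage plus the collision advantage against $H$, both negligible in $\lambda$.

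The step I expect to be the main obstacle is the middle one: rigorously tying the ledger implied by the adversary's anchored commitments back to the honest prefix. This requires pinning down $\mathsf{VerifyView}$'s reconstruction of the commitment sequence $\{C_j\}$ precisely enough that a mismatch produces a clean reduction, and in particular handling the case in which the adversary commits to a ledger that agrees with $\mathrm{TPL}_i$ on all revealed events but differs on the omitted $e$ --- this is exactly the scenario that non-equivocation over the anchored, hash-chained history rules out, but the reduction must exhibit the equivocating pair (or the colliding inputs) explicitly.
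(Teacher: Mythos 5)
Your proposal is correct and follows essentially the same route as the paper's proof: liveness (under admissible schedules) places every event satisfying $P_{\mathsf{hist}}$ into the honest prefix $\mathrm{TPL}_i$, and then non-equivocation together with view correctness forces any view accepted by $\mathsf{VerifyView}$ to coincide with the honest output of $\mathsf{GenView}$, yielding the contradiction and a negligible bound. The only difference is presentational: your middle step explicitly re-derives the binding of anchored commitments to a unique ledger via collision resistance of $H$, whereas the paper simply invokes Theorem~\ref{thm:view-correctness}, which already encapsulates exactly that argument.
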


\begin{proof}[Proof sketch]
Consider any PPT adversary $\mathcal{A}$ in $\mathsf{Exp}^{\mathsf{pol-comp}}_{\mathcal{A}}(\lambda)$ for
$P_{\mathsf{hist}}$ and $V_{\mathsf{hist}}$.
By liveness, every valid event that should be visible by time $t_i$ appears in the honest ledger prefix $\mathrm{TPL}_i$
with high probability.
By view correctness and non-equivocation, there is at most one view consistent with $\mathrm{TPL}_i$ and the public
anchors that will be accepted by $\mathsf{VerifyView}$ for $V_{\mathsf{hist}}$, namely the one output by an honest
execution of $\mathsf{GenView}$.
Any $\mathrm{View}^\star_\alpha$ that omits an event with $P_{\mathsf{hist}}(e,t_i,\alpha)=1$ while still passing
$\mathsf{VerifyView}$ would therefore contradict either liveness (the event never being logged) or the combination of
non-equivocation and view correctness (two distinct accepted views for the same anchored ledger).
Hence the probability that $\mathsf{Exp}^{\mathsf{pol-comp}}_{\mathcal{A}}(\lambda)$ outputs~$1$ is negligible.
\end{proof}

This definition captures the inability of any efficient adversary to produce a \emph{valid} policy-based view that omits events that should be visible under $P$: the view must first pass $\mathsf{VerifyView}$ before any omission is counted as a win. It is therefore a property of policy-based projections of the ledger, complementary to exposure soundness, which captures whether the underlying ledger itself correctly records the relevant events.

In the remainder of this section we use these experiments as targets for three integrity theorems:
conservation across domains, forward integrity and non-repudiation, and view correctness for policy-based views.
A full reduction showing that arbitrary deployments of TPL satisfy exposure soundness and policy completeness under all
operational assumptions is left to future work; here we restrict attention to these three core properties, which can be
established under standard cryptographic assumptions.

To connect these properties to the exposure-soundness notion of Definition~\ref{def:exp-soundness},
we record a simple restricted lemma that captures how conservation, PoR/PoT soundness, and
non-equivocation jointly constrain attempts to overstate total exposure.

\begin{definition}[Faithful policy]
\label{def:faithful-policy}
A policy $V_\alpha$ is \emph{faithful} if for any two ledgers $\mathrm{TPL}$
and $\mathrm{TPL}'$ and any index~$i$,
whenever the sequences of events that are visible under $V_\alpha$ in
$\mathrm{TPL}_i$ and $\mathrm{TPL}'_i$ coincide, we have
\[
  V_\alpha(\mathrm{TPL}_i) = V_\alpha(\mathrm{TPL}'_i).
\]
\end{definition}

\paragraph{Remark.}
Many natural treasury policies are faithful in this sense.
Intuitively, faithfulness requires that a policy's output be determined solely by the subsequence of events that the policy is
allowed to inspect, rather than by any hidden details of ledger evolution.
This matches the class of policies we intend TPL to support for assurance purposes.

\begin{lemma}
\label{lem:policy-language-faithful}
Every policy expressible in the simple policy language described in Section~\ref{sec:responsible} is faithful in the sense of Definition~\ref{def:faithful-policy}.
\end{lemma}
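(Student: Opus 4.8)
The plan is to prove faithfulness by \emph{structural induction} on the grammar of the simple policy language of Section~\ref{sec:responsible}, showing that every policy $V_\alpha$ factors through the subsequence of events it is permitted to inspect. Concretely, for each observer class $\alpha$ the language fixes a per-event visibility predicate $\mathsf{vis}_\alpha$, and I would define the extraction map $\phi_\alpha(\mathrm{TPL}_i)$ that returns exactly the ordered subsequence of events $e$ in $\mathrm{TPL}_i$ (up to logical time $t_i$) with $\mathsf{vis}_\alpha(e)=1$; this is precisely the ``sequence of events visible under $V_\alpha$'' appearing in Definition~\ref{def:faithful-policy}. The goal is to exhibit a post-processing map $\psi_\alpha$ with $V_\alpha = \psi_\alpha \circ \phi_\alpha$, after which faithfulness is immediate: $\phi_\alpha(\mathrm{TPL}_i) = \phi_\alpha(\mathrm{TPL}'_i)$ forces $V_\alpha(\mathrm{TPL}_i) = V_\alpha(\mathrm{TPL}'_i)$.

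For the base cases I would treat each atomic construct of the language---the filters, projections/redactions, aggregations, and timing (latency/materiality) rules that $\mathsf{GenView}$ is specified to apply---and check that each is a function of only the (possibly already transformed) visible event data. A filter tests per-event attributes ($t$, $d_{\mathrm{src}}$, $d_{\mathrm{dst}}$, $v$, $m$) that live inside the event tuple itself; a redaction blanks or coarsens such fields; an aggregation sums or folds per-event values; and a timing rule compares the event timestamp against the fixed query time $t_i$. Crucially, none of these consults ledger state outside the visible subsequence---it never reads hash-chain internals $R_j$, anchor metadata, or the balances of events with $\mathsf{vis}_\alpha = 0$---so each atomic operation maps equal visible inputs to equal outputs.

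The inductive step uses that a policy is a \emph{finite} composition of such operations and that the language is \emph{stateless}: no operation carries a hidden accumulator across events outside its declared inputs, so every intermediate artefact is itself a deterministic function of $\phi_\alpha(\mathrm{TPL}_i)$. Composing functions that each depend only on their visible input yields a function of $\phi_\alpha$ alone, establishing $V_\alpha = \psi_\alpha \circ \phi_\alpha$ and hence faithfulness. I would close by noting that the PoR/PoT/anchoring artefacts returned alongside the derived table are themselves selected per visible event, so they introduce no additional dependence on hidden state.

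The step I expect to be the main obstacle is pinning down that \emph{visibility itself} is well-defined and stable, i.e.\ that the inspection scope $\mathsf{vis}_\alpha$ is a fixed, per-event, observer-indexed predicate rather than something whose value at one event depends on the presence or absence of other (possibly hidden) events. If the language permitted conditional disclosures of the form ``reveal $e$ only if some undisclosed $e'$ occurred,'' then $\phi_\alpha$ would no longer be determined by the visible subsequence and the factorisation would fail. The real content of the proof is therefore to verify that the grammar of Section~\ref{sec:responsible} rules out such cross-event visibility dependencies; once the deterministic, stateless, per-event character of $\mathsf{vis}_\alpha$ is confirmed, the induction above goes through essentially mechanically.
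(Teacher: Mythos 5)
Your proposal is correct and takes essentially the same route as the paper: both arguments establish that $V_\alpha$ factors through the visible subsequence of events (the events passing the filter), with relabelling, bucketing, materiality thresholding, and aggregation as deterministic post-processing, so equal visible subsequences force equal views. Your structural-induction framing is harmless but collapses to a direct check here, since the language of Section~\ref{sec:responsible} is a fixed pipeline $(\mathsf{Filter}_\alpha,\mathsf{Label}_\alpha,\mathsf{Agg}_\alpha,\Delta_\alpha,\theta_\alpha)$ rather than a recursive grammar, and its per-event $\mathsf{Filter}_\alpha(d,t,e)$ already rules out the cross-event visibility dependencies you identify as the main obstacle.
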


\begin{proof}[Proof sketch]
Fix an observer class~$\alpha$ and a policy specified by $(\mathsf{Filter}_\alpha,\mathsf{Label}_\alpha,\mathsf{Agg}_\alpha,\Delta_\alpha,\theta_\alpha)$ in the language above.
By construction, the value $V_\alpha(\mathrm{TPL}_i)$ is a deterministic function of the subsequence of events in $\mathrm{TPL}_i$ that satisfy $\mathsf{Filter}_\alpha$, together with their domains and timestamps as seen through $\mathsf{Label}_\alpha$ and the bucketing induced by~$\Delta_\alpha$.
If two ledgers $\mathrm{TPL}$ and $\mathrm{TPL}'$ have the same visible subsequence of events for observer~$\alpha$ up to index~$i$, then after applying the same relabelling, bucketing, materiality filter~$\theta_\alpha$, and aggregation operator~$\mathsf{Agg}_\alpha$ to both, we obtain identical outputs.
Hence $V_\alpha(\mathrm{TPL}_i) = V_\alpha(\mathrm{TPL}'_i)$ whenever the sequences of events that are visible under $V_\alpha$ in $\mathrm{TPL}_i$ and $\mathrm{TPL}'_i$ coincide, which is exactly the requirement of Definition~\ref{def:faithful-policy}.
\end{proof}

\begin{definition}[Closed subset of domains]
\label{def:closed-domains}
Let $\mathcal{D}$ be the global set of domains and let $d_{\mathrm{fee}} \in \mathcal{D}$ denote the distinguished fee domain.
For a time horizon $[0,t_i]$ we say that a subset $\mathcal{D}_0 \subseteq \mathcal{D} \setminus \{d_{\mathrm{fee}}\}$ is \emph{closed over $[0,t_i]$}
if for every treasury event $e_k = (t_k,d_{\mathrm{src}},d_{\mathrm{dst}},v,\mathsf{evid},m)$ with $t_k \le t_i$ and
$\{d_{\mathrm{src}},d_{\mathrm{dst}}\} \cap \mathcal{D}_0 \neq \emptyset$ we have
$d_{\mathrm{src}},d_{\mathrm{dst}} \in \mathcal{D}_0 \cup \{d_{\mathrm{fee}}\}$.
Equivalently, no event up to time $t_i$ transfers value between a domain in $\mathcal{D}_0$ and any external counterparty.
\end{definition}
\paragraph{Example (closed and non-closed domain sets).}
For example, suppose $\mathcal{D}_0$ consists of the cold-storage domain and a set of regulated custodians, and that all exchange balances are treated as external counterparties.
Under a typical treasury policy that never moves funds directly from custodians to exchanges without logging a PoT event, $\mathcal{D}_0$ will be closed over $[0,t_i]$ in the sense of Definition~\ref{def:closed-domains}.
By contrast, if there are direct, unlogged transfers between custodian wallets and external exchanges, then $\mathcal{D}_0$ fails to be closed and our restricted exposure-soundness guarantee no longer applies.

\begin{theorem}[Restricted exposure soundness]
\label{lem:restricted-exp-soundness}
Consider a deployment of TPL and a fixed logical time index~$i$. under the implicit assumptions of domain completeness for $\mathcal{D}$ and algebraic consistency (Assumption~\ref{assump:algebraic-consistency}).
Let $\mathcal{D}_0 \subseteq \mathcal{D} \setminus \{d_{\mathrm{fee}}\}$ be a subset that is \emph{closed over $[0,t_i]$}
in the sense of Definition~\ref{def:closed-domains}.
Suppose there exists a faithful policy $V_{\mathsf{exp}}$ that, on input $\mathrm{TPL}_i$,
outputs either the vector $(B_d(t_i))_{d \in \mathcal{D}_0}$ or their algebraic sum.
Assume that:
\begin{enumerate}[noitemsep]
  \item the underlying PoR primitive is existence- and ownership-sound;
  \item the PoT construction is unforgeable in the sense of Definition~\ref{def:pot-unforgeability};
  \item the hash function $H$ is collision-resistant; and
  \item TPL satisfies non-equivocation in the sense of Definition~\ref{def:non-equivocation}.
  \item the \textbf{minimal non-collusion assumption} of Section~\ref{sec:threat} holds for all domains in $\mathcal{D}_0$.
\end{enumerate}
By Theorem~\ref{thm:non-equivocation}, assumption~(4) in turn holds whenever the hash function and PoT scheme satisfy
the conditions of that theorem.
Then any PPT adversary that wins an exposure-soundness experiment restricted to $\mathcal{D}_0$ at time $t_i$, that is,
produces an alternative anchored ledger prefix $\mathrm{TPL}^\star_i$ for which
\[
  \sum_{d \in \mathcal{D}_0} B^\star_d(t_i)
  \;>\;
  \sum_{d \in \mathcal{D}_0} B_d(t_i)
\]
while all PoR/PoT verification procedures and view verifications for $V_{\mathsf{exp}}$ accept, can be turned into a PPT
adversary that breaks at least one of the above assumptions with non-negligible probability.
In particular, under these assumptions and the conservation property of Theorem~\ref{thm:conservation}, the success
probability of any PPT adversary in this restricted exposure-soundness game, that is, in the experiment $\mathsf{Exp}^{\mathsf{exp\mbox{-}sound},\mathcal{D}_0}_{\mathcal{A}}(\lambda)$ of Appendix~\ref{app:exp-soundness-reduction}, is negligible in~$\lambda$.
\end{theorem}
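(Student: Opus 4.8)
The plan is to prove the contrapositive via a reduction: assume a PPT adversary $\mathcal{A}$ wins the restricted exposure-soundness experiment $\mathsf{Exp}^{\mathsf{exp\text{-}sound},\mathcal{D}_0}_{\mathcal{A}}(\lambda)$ with non-negligible probability, and construct from it a PPT adversary that breaks one of assumptions~(1)--(4). The key structural tool is the closure hypothesis on $\mathcal{D}_0$ (Definition~\ref{def:closed-domains}): because no event up to time $t_i$ moves value between $\mathcal{D}_0$ and an external counterparty, the only leakage of value out of $\mathcal{D}_0$ is to the fee domain $d_{\mathrm{fee}}$. I would first invoke the conservation property of Theorem~\ref{thm:conservation} to show that, for \emph{any} ledger prefix whose events all pass PoT verification and are consistent with the PoR snapshots, the quantity $\sum_{d\in\mathcal{D}_0} B_d(t_i)$ is completely determined by the initial exposures on $\mathcal{D}_0$ minus the cumulative fee outflow recorded in the logged events. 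This reduces the adversary's claimed inequality to a statement about the recorded event sequence and snapshot values.

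\emph{First} I would argue that winning the game forces $\mathcal{A}$ to produce an alternative anchored prefix $\mathrm{TPL}^\star_i$ that differs from the honest prefix $\mathrm{TPL}_i$ in at least one of three ways: a fabricated PoR snapshot (reporting a domain exposure larger than the true treasury-controlled amount), a forged or altered PoT receipt (changing a recorded flow so that less value appears to leak to $d_{\mathrm{fee}}$, or injecting a spurious inflow), or an inconsistency in the hash-chained commitment that is nonetheless anchored and accepted. Since $V_{\mathsf{exp}}$ is faithful (Definition~\ref{def:faithful-policy}, and guaranteed by Lemma~\ref{lem:policy-language-faithful} for our policy language), the accepted view $V_{\mathsf{exp}}(\mathrm{TPL}^\star_i)$ is a deterministic function of exactly the $V_{\mathsf{exp}}$-visible event subsequence; so any discrepancy in the reported sum must be traceable to a discrepancy in that visible subsequence or in the underlying snapshot values it aggregates. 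Non-equivocation (assumption~(4), via Theorem~\ref{thm:non-equivocation}) rules out the third case: $\mathcal{A}$ cannot maintain a second anchored history consistent with the Bitcoin anchoring schedule except with negligible probability, so with overwhelming probability $\mathrm{TPL}^\star_i$ must agree with the honestly anchored commitment $C_i$.

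\emph{Then} I would dispatch the remaining two cases by a standard case split and embedding. Conditioned on the PoR snapshot being honest, an overstatement $\sum_{d\in\mathcal{D}_0}B^\star_d(t_i) > \sum_{d\in\mathcal{D}_0}B_d(t_i)$ together with conservation forces at least one accepted PoR proof to attest a position not in the true reserve multiset $S$, or a position not spendable under treasury keys; the reduction then forwards exactly that tuple $(x^\star,a^\star,\pi^\star)$ to break existence soundness (Definition~\ref{def:por-existence}) or ownership soundness (Definition~\ref{def:por-ownership}). Conditioned on the PoR snapshots being faithful, the overstatement must instead arise from a fee-outflow or flow discrepancy in the PoT chain, so the reduction extracts a receipt record $\mathsf{rec}^\star_j$ whose chain value $R^\star_j$ was never emitted by the honest append oracle, yielding a forgery against Definition~\ref{def:pot-unforgeability}; a hash collision in the chaining relation is folded into assumption~(3) by the usual argument. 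A union bound over these events shows that $\mathcal{A}$'s success probability is bounded by the sum of the corresponding primitive advantages, hence negligible. The minimal non-collusion assumption~(5) is what makes the reduction non-vacuous: it guarantees that for each $d\in\mathcal{D}_0$ at least one honest role underlies the PoR/PoT evidence, so that a fabricated artefact genuinely constitutes an attack against a primitive rather than a benign output of a fully corrupted domain (cf.\ Lemma~\ref{lem:necessity-noncollusion}).

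\emph{The hard part} will be the bookkeeping that cleanly separates the PoR-fabrication case from the PoT-forgery case while accounting for the fee domain. Conservation gives an equation of the form $\sum_{d\in\mathcal{D}_0}B_d(t_i) = \sum_{d\in\mathcal{D}_0}B_d(0) - (\text{fees out of }\mathcal{D}_0)$, and I must show that a strict overstatement in the left-hand side, under an accepted view, necessarily localises to a single corrupted artefact that can be forwarded to a primitive challenger — without the reduction needing to know \emph{which} domain or event is responsible in advance. The standard device is to have the reduction guess the offending snapshot or receipt index (incurring only a polynomial factor loss), or to argue that the first point of divergence between the honest and adversarial derived balances, read off the hash-chained ledger, pins down the artefact deterministically. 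Making this divergence argument rigorous against an adaptively scheduling adversary, while respecting the admissible-schedule restriction from Definition~\ref{def:exp-soundness}, is where the bulk of the technical care lies; the rest is a routine hybrid-and-union-bound argument.
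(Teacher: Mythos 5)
Your proposal matches the paper's proof essentially step for step: conservation (Theorem~\ref{thm:conservation}) plus closure of $\mathcal{D}_0$ pins the restricted sum down to initial exposures and fee flows, the first point of divergence between the honest and adversarial ledgers localises the discrepancy, and a three-way case split (PoR snapshot, PoT receipt, commitment/anchor) reduces to PoR existence/ownership soundness, PoT unforgeability, and collision resistance or non-equivocation respectively, finished by a union bound exactly as in Appendix~\ref{app:exp-soundness-reduction}. One minor wording slip: in your second paragraph the first case should be conditioned on the PoT chain and commitments being honest (so that the overstatement localises to a fabricated PoR proof), not on ``the PoR snapshot being honest'' --- but the intended case split is evident and correct.
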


Intuitively, the minimal non-collusion condition in item~(5) ensures that the net flow of value
between the closed domain set $\mathcal{D}_0$ and the fee domain $d_{\mathrm{fee}}$ that drives
the algebraic conservation argument of Theorem~\ref{thm:conservation} is faithfully reflected
in the PoR snapshots and PoT receipts logged for those domains.
Without at least one honest role per domain, a fully colluding deployment could hide additional
liabilities or fabricate inflows while still passing all local verification checks.

More generally, we do not claim exposure soundness outside closed domain sets and faithful policies: in the unrestricted multi-domain setting, exposure soundness as formalised in Definition~\ref{def:exp-soundness} remains a target notion rather than a proved theorem in this work.

\begin{proof}[Proof sketch]
Fix $\mathcal{D}_0$ and~$i$ as in the statement and let $\mathcal{A}$ be any
PPT adversary for the restricted exposure-soundness game.
Write $E$ for the event that $\mathcal{A}$ succeeds, i.e., that it outputs an
alternative anchored ledger prefix $\mathrm{TPL}^\star_i$ such that
\[
  \sum_{d \in \mathcal{D}_0} B^\star_d(t_i)
  \;>\;
  \sum_{d \in \mathcal{D}_0} B_d(t_i)
\]
while all PoR and PoT verifications and the view verification for
$V_{\mathsf{exp}}$ accept.

Because $\mathcal{D}_0$ is closed over $[0,t_i]$ in the sense of
Definition~\ref{def:closed-domains}, every treasury event that affects a
domain in~$\mathcal{D}_0$ up to time~$t_i$ is internal to
$\mathcal{D}_0 \cup \{d_{\mathrm{fee}}\}$.
By Theorem~\ref{thm:conservation}, the algebraic sum
\[
  S(t) \;:=\; \sum_{d \in \mathcal{D}_0} B_d(t)
\]
is therefore completely determined, for each $t \le t_i$, by the initial
exposures of domains in~$\mathcal{D}_0$ together with the net flow of value
between $\mathcal{D}_0$ and $d_{\mathrm{fee}}$ up to time~$t$.
The minimal non-collusion assumption guarantees that this net flow is fully
reflected in the logged PoR and PoT artefacts for domains in~$\mathcal{D}_0$.

In the honest execution these artefacts induce some canonical value
$S(t_i)$.
On event~$E$, the adversary outputs an alternative ledger prefix
$\mathrm{TPL}^\star_i$ and corresponding artefacts that are all locally
valid, yet yield a strictly larger sum $S^\star(t_i)$ when interpreted under
the same faithful policy $V_{\mathsf{exp}}$.
Consequently, there must exist an earliest time $t_k \le t_i$ at which the
effect of the ledger on $S(t)$ differs between the honest and adversarial
executions.
We now analyse the event at time~$t_k$ and map it to a violation of one of the
assumptions in the theorem.

\begin{itemize}[noitemsep]
  \item \emph{PoR discrepancy.}
        If the first difference arises from a PoR snapshot whose claimed total
        exposure for some domain in~$\mathcal{D}_0$ differs between
        $\mathrm{TPL}_i$ and $\mathrm{TPL}^\star_i$, yet all inclusion proofs
        and verification checks accept, then we can build from $\mathcal{A}$
        an adversary against existence/ownership soundness of the PoR
        primitive.
        This adversary runs $\mathcal{A}$ as a subroutine, extracts the
        mismatched snapshot and its proof from the transcript at time~$t_k$,
        and outputs it as a forgery witnessing a coin that either does not
        exist or is not controlled by the treasury, contradicting assumption~(1).

  \item \emph{PoT discrepancy.}
        If the first difference arises from a transfer between domains in
        $\mathcal{D}_0 \cup \{d_{\mathrm{fee}}\}$ whose PoT receipt appears in
        $\mathrm{TPL}^\star_i$ but not in $\mathrm{TPL}_i$, or vice versa,
        while all PoT verifications accept, then we can build an adversary
        against PoT unforgeability (assumption~(2)).
        This adversary again runs $\mathcal{A}$, locates the earliest such
        discrepant receipt and the preceding honest PoT chain in the ledger,
        and outputs them as a forged extension that is not a prefix of the
        honest chain.

  \item \emph{Commitment/anchor discrepancy.}
        If all PoR snapshots and PoT receipts that affect
        $\mathcal{D}_0$ up to time~$t_i$ are identical in $\mathrm{TPL}_i$
        and $\mathrm{TPL}^\star_i$ and yet $S^\star(t_i) > S(t_i)$, then the
        serialised ledger prefixes must differ in some other way that changes
        the implied balances for domains in~$\mathcal{D}_0$.
        Since all anchors and view verifications are assumed to succeed,
        both ledger prefixes must be consistent with the on-chain commitments
        and the policy $V_{\mathsf{exp}}$.
        In this case we obtain either
        (i) two distinct ledger prefixes that are both consistent with the
        same anchored commitment value, yielding a collision in~$H$ and
        contradicting assumption~(3); or
        (ii) a violation of non-equivocation in the sense of
        Definition~\ref{def:non-equivocation}, contradicting assumption~(4),
        because an observer would be presented with two different, yet
        apparently valid, exposures for~$\mathcal{D}_0$ at time~$t_i$.
\end{itemize}

In all cases, a successful adversary $\mathcal{A}$ for the restricted
exposure-soundness game gives rise to a PPT adversary that breaks at least one
of assumptions~(1)--(4) with non-negligible probability, contradicting those
assumptions.
Hence the success probability of any PPT adversary in the restricted
exposure-soundness game is negligible in~$\lambda$, as claimed.
\end{proof}
\begin{definition}[Non-equivocation]
\label{def:non-equivocation}
Let $\mathsf{Exp}^{\mathsf{neq}}_{\mathcal{A}}(\lambda)$ be the following
experiment between a challenger and a PPT adversary~$\mathcal{A}$.
\begin{enumerate}[noitemsep]
  \item The challenger runs $\mathsf{Setup}(1^\lambda)$ to obtain public parameters
        $\mathsf{pp}$ and an initial state $(\mathrm{TPL}_0,C_0)$ and gives
        $\mathsf{pp}$ to~$\mathcal{A}$, who may interact with honest treasurers
        and service providers as in the previous experiments.
  \item Eventually $\mathcal{A}$ outputs two ledger prefixes
        $\mathrm{TPL}_i$ and $\mathrm{TPL}'_i$ together with two sets of Bitcoin
        transactions $\mathcal{M}$ and $\mathcal{M}'$ that it claims anchor
        commitments for the respective ledgers.
  \item The experiment checks that all transactions in $\mathcal{M}$ and
        $\mathcal{M}'$ are valid Bitcoin transactions at sufficient confirmation
        depth and that both ledgers are syntactically well formed.
  \item It then checks whether there exists an observer class $\alpha$ and
        associated policy $V_\alpha$ such that, letting
        $\mathrm{View}_\alpha := V_\alpha(\mathrm{TPL}_i)$ and
        $\mathrm{View}'_\alpha := V_\alpha(\mathrm{TPL}'_i)$,
        both $\mathsf{VerifyView}(\mathrm{View}_\alpha,\alpha,\mathcal{A}_{\mathrm{BTC}})$
        and $\mathsf{VerifyView}(\mathrm{View}'_\alpha,\alpha,\mathcal{A}_{\mathrm{BTC}})$
        accept, and $\mathrm{View}_\alpha \neq \mathrm{View}'_\alpha$.
        If so, the experiment outputs~$1$; otherwise it outputs~$0$.
\end{enumerate}
We say that TPL satisfies \emph{non-equivocation} if, for all PPT adversaries
$\mathcal{A}$, the probability
\[
  \Pr[\mathsf{Exp}^{\mathsf{neq}}_{\mathcal{A}}(\lambda)=1]
\]
is negligible in~$\lambda$.
\end{definition}
We do not address these challenges here. The restricted theorem below should be read as a proof-of-concept template rather than a deployable privacy guarantee, illustrating how to
phrase simulation-based privacy arguments for TPL.

For a simple public-investor policy we
state below a restricted privacy theorem that illustrates how such a treatment can be instantiated in our framework.

\begin{theorem}[Restricted privacy for a public-investor policy]
\label{thm:pub-privacy}
Fix an observer class $\alpha = \mathsf{pub}$ with policy $V_{\mathsf{pub}}$ as above and define the leakage function
$L_{\mathsf{pub}}$ that, on input $\mathrm{TPL}_i$, outputs for each reporting interval the pair consisting of
(i) the total Bitcoin exposure $B_{\mathrm{tot}}(t)$ across all domains and (ii) the aggregate encumbered exposure
$B_{\mathrm{enc}}(t)$ over all domains marked as encumbered in the policy metadata.
Consider the following two experiments for a PPT adversary~$\mathcal{A}$. We restrict attention here to a single, fixed public-investor policy and to non-adaptive observations at reporting dates: the adversary sees the sequence of views and anchors produced by an honest execution, but cannot adaptively change the policy or reporting schedule.
In the \emph{real} experiment the challenger runs the TPL protocol with honest parties,
restricted to the public-investor policy $V_{\mathsf{pub}}$, and lets $\mathcal{A}$ observe all
protocol messages, the resulting policy-based views $V_{\mathsf{pub}}(\mathrm{TPL}_i)$, and the Bitcoin
transactions used to anchor commitments.
In the \emph{ideal} experiment an ideal functionality first samples an internal execution of TPL,
computes $L_{\mathsf{pub}}(\mathrm{TPL}_i)$ and the associated sequence of anchoring transactions that appear on the
public Bitcoin chain, and then gives only this information to a PPT simulator~$\mathcal{S}$.
The simulator interacts with $\mathcal{A}$ and must produce a simulated view for $\mathcal{A}$.
Assume that:
\begin{enumerate}[noitemsep]
  \item the only artefacts made public about the treasury state are the Bitcoin anchoring transactions and the
        policy-based views $V_{\mathsf{pub}}(\mathrm{TPL}_i)$; and
  \item the underlying PoR and PoT primitives admit zero-knowledge or witness-indistinguishable proofs of correctness for their public outputs with respect to the committed ledger values, and any signatures used in TPL are standard existentially unforgeable schemes whose outputs can be generated by the simulator given the same signing keys as in the real execution (or are wrapped inside the same zero-knowledge proof system), so that these primitives reveal no auxiliary information beyond what is explicit in their outputs and the committed values.
        (These are abstract, idealised PoR/PoT primitives; we do not claim that existing Merkle-based PoR deployments
        are already zero-knowledge or witness-indistinguishable for this leakage profile.)
\end{enumerate}

As a concrete instantiation of assumption~(2), one can, for example, express a Provisions-style Merkle-tree PoR
\cite{Dagher2015} and the hash-chain transition relation for PoT receipts as arithmetic circuits and wrap them in a
general-purpose zk-SNARK or zk-STARK.
The public inputs of the proof are the current commitment $C_i$, the reporting interval, and the aggregate values
leaked by $L_{\mathsf{pub}}(\mathrm{TPL}_i)$, while the witness consists of the underlying wallet balances, encumbrance
flags, and receipt chain.
A zero-knowledge proof system for this statement yields PoR/PoT artefacts whose distribution can be simulated given
only the committed state and $L_{\mathsf{pub}}(\mathrm{TPL}_i)$, matching the leakage profile required by the theorem.

Then there exists a PPT simulator $\mathcal{S}$ such that for every PPT distinguisher~$\mathcal{D}$ the absolute
difference between
the probability that $\mathcal{D}$ outputs~$1$ on $\mathcal{A}$'s view in the real experiment and the probability
that $\mathcal{D}$ outputs~$1$ on $\mathcal{A}$'s view in the ideal experiment is negligible in the security parameter.
Equivalently, TPL is $(L_{\mathsf{pub}}, \mathsf{pub})$-private in the sense that the real and ideal experiments defined above are
computationally indistinguishable for any PPT adversary and distinguisher.
In particular, under these assumptions any information that an efficient public-market adversary learns about the
live treasury wallets from TPL is limited to what is already implied by $L_{\mathsf{pub}}(\mathrm{TPL}_i)$ and the
anchoring transactions themselves.
This theorem should therefore be read as an existence-style result illustrating how simulation-based privacy can be phrased for TPL under strong assumptions, rather than as a guarantee for any particular deployed primitive.
A high-level description of the simulator $\mathcal{S}$ and the associated hybrid argument
is given in the proof below.
A fully detailed simulation-based treatment of privacy for arbitrary policies and richer leakage functions is left
to future work.
\end{theorem}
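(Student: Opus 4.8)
The plan is to give a standard simulation-based argument under assumptions~(1) and~(2): first exhibit a PPT simulator $\mathcal{S}$ that, from the leakage $L_{\mathsf{pub}}(\mathrm{TPL}_i)$ and the public anchoring transactions alone, reconstructs a transcript computationally indistinguishable from the adversary's real view, and then establish indistinguishability by a sequence of hybrids that successively replaces each witness-dependent artefact with a simulated one.

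First I would describe $\mathcal{S}$. Since the ideal functionality hands the simulator the genuine sequence of anchoring transactions, the commitments $C_i$ embedded in them, and hence the public inputs against which $\mathsf{VerifyView}$ checks, are available to $\mathcal{S}$ without any simulation. The policy view itself, $V_{\mathsf{pub}}(\mathrm{TPL}_i)$, is by definition exactly the pair $(B_{\mathrm{tot}}(t),B_{\mathrm{enc}}(t))$ for each interval, which is precisely $L_{\mathsf{pub}}(\mathrm{TPL}_i)$, so $\mathcal{S}$ populates the view tables verbatim from the leakage. The only remaining public artefacts under assumption~(1) are the PoR commitments and proofs, the PoT receipts with their correctness proofs, and the treasury and service-provider signatures. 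For each of these $\mathcal{S}$ invokes the zero-knowledge (or witness-indistinguishability) simulator of the corresponding primitive on the public statement, namely the fixed $C_i$, the reporting interval, and the leaked aggregates, to produce accepting proofs without any witness, and it generates the signature components either from the signing keys supplied to it or, when signatures are folded into the proof system, as part of the same simulated proof.

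Then I would run the hybrid argument over the view seen by the distinguisher $\mathcal{D}$. Let $H_0$ be the real experiment. Define $H_1$ by replacing every PoR proof with one produced by the PoR zero-knowledge simulator; since each such statement is true in the real execution (the committed ledger genuinely satisfies the aggregate relation), $|\Pr[\mathcal{D}(H_0)=1]-\Pr[\mathcal{D}(H_1)=1]|$ is bounded by the PoR zero-knowledge advantage. Define $H_2$ by further replacing every PoT correctness proof with a simulated one, bounded by the PoT zero-knowledge advantage, and $H_3$ by replacing the signatures with their simulated counterparts, bounded by the signature-simulation advantage of assumption~(2). Because each primitive is invoked a polynomial number of times across the polynomially many intervals, I would swap the proofs one at a time, so that the total gap telescopes to a polynomial multiple of the single-instance simulation advantage and remains negligible. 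In $H_3$ the adversary's view is a function solely of $L_{\mathsf{pub}}(\mathrm{TPL}_i)$ and the anchoring transactions, so it coincides with the output of $\mathcal{S}$ in the ideal experiment; hence $H_3$ is the ideal world and the overall distinguishing advantage is $\mathsf{negl}(\lambda)$.

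The hard part is not the hybrid bookkeeping, which is routine, but discharging the joint simulatability demanded by assumption~(2): I must ensure that the simulated PoR/PoT proofs remain bound to the externally given $C_i$, so that $\mathsf{VerifyView}$ still reconstructs and accepts the same anchors, while revealing nothing beyond the aggregates, and that the signatures over $(h_i,d_{\mathrm{src}},d_{\mathrm{dst}},t_i,R_i)$, which carry domain identifiers and timestamps, do not leak structure outside the leakage profile. This forces the signed messages, the per-event digests $h_i$, and the hash-chain values $R_i$ to be hidden inside the zero-knowledge layer rather than published in the clear, which is exactly the idealisation the theorem assumes and which no existing Merkle-based PoR deployment satisfies. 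I would make this explicit by insisting that the statement proved in zero knowledge fix only $C_i$, the interval index, and the leaked aggregates, with all wallet balances, encumbrance flags, receipt records, and signatures relegated to the witness, so that a single NIZK simulation trapdoor suffices for the entire transcript.
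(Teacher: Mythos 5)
Your proposal follows the same overall strategy as the paper's proof: build a simulator from the leakage $L_{\mathsf{pub}}(\mathrm{TPL}_i)$ plus the anchoring transactions, then run a hybrid argument that replaces the real PoR artefacts, PoT artefacts, and signatures with simulated ones one invocation at a time, bounding each step by the corresponding primitive's simulation advantage. The one genuine structural difference is in the simulator itself, and it is exactly where a gap opens up. The paper's simulator begins by sampling a \emph{synthetic internal ledger} $\mathrm{TPL}'_i$ --- an arbitrary decomposition of the leaked totals $B_{\mathrm{tot}}(t)$ and $B_{\mathrm{enc}}(t)$ across dummy domains and wallet addresses, consistent with conservation and the policy metadata --- and then produces all primitive artefacts \emph{relative to that synthetic state}. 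Your simulator skips this step entirely and invokes what you call ``the zero-knowledge (or witness-indistinguishability) simulator'' directly on the public statement $(C_i, \text{interval}, \text{aggregates})$, with no witness at all.

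For the zero-knowledge branch of assumption~(2) your version is sound, and arguably cleaner on one point: you keep the genuine commitments $C_i$ from the real anchors, whereas the paper's simulator manufactures anchoring transactions around synthetic commitments $H(\mathrm{TPL}'_i)$ and implicitly relies on those being indistinguishable from the real ones. But for the witness-indistinguishable branch your argument fails: a WI proof system does not come equipped with a simulator that outputs accepting proofs without a witness --- WI only guarantees that proofs computed from two \emph{different witnesses} are indistinguishable. To exploit WI the simulator must supply some witness, and the only witness it can construct from the leakage alone is a fabricated one; that fabrication is precisely what the paper's synthetic-ledger sampling step provides (the simulator can then run the honest prover and signer on the dummy state). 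Since the theorem is stated for ``zero-knowledge or witness-indistinguishable'' primitives, your proof as written covers only the ZK half. Restoring the dummy-witness step --- sample any exposure decomposition matching the leaked aggregates, then prove or sign relative to it --- closes the gap, and it also subsumes your (correct, and well-identified) observation that the signed per-event tuples $(h_i, d_{\mathrm{src}}, d_{\mathrm{dst}}, t_i, R_i)$ must be confined to the witness rather than published in the clear.
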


\subsection{A candidate TPL instantiation}
\label{sec:candidate-instantiation}

The discussion above of assumption~(2) implicitly fixes one natural candidate instantiation for TPL: a Provisions-style Merkle-tree PoR for snapshots, a PoTT-style PoT mechanism that signs ordered receipts and links them in a hash chain, and a general-purpose zk argument system (for example a SNARK or STARK) that proves the joint correctness of the PoR and PoT artefacts and of the derived exposure vector and public-investor view.
In this instantiation the public inputs of the proof are the current commitment $C_i$, the reporting interval, and the aggregate values leaked by $L_{\mathsf{pub}}(\mathrm{TPL}_i)$, while the witness consists of the underlying wallet balances, encumbrance flags, and receipt chain.
The resulting on-chain artefacts have size dominated by logarithmic-size Merkle authentication paths for the UTXOs touched in a given reporting interval together with a small number of succinct proofs.
External verifiers need only check these proofs and recompute a small number of hashes, so their work is polylogarithmic in the number of underlying wallet entries and PoT receipts.
We deliberately do not fix a particular proof system, curve, or parameter set, since these choices are engineering trade-offs that may evolve; instead, our complexity and overhead discussion in Section~\ref{sec:complexity-overhead} provides order-of-magnitude cost estimates for realistic treasuries, and we leave detailed benchmarking of concrete stacks to future implementation work.

Intuitively, policies for which simulation-based privacy is plausible in our framework are those whose published views can be expressed as deterministic functions of a relatively small leakage profile, such as low-dimensional aggregates of domain balances and coarse-grained timing information.
Policies that encode highly path-dependent logic, fine-grained timing or ordering constraints, or rich cross-correlation structure between domains will in general require leakage functions that are too informative to be captured succinctly by $L_\alpha$; for such policies, a strong simulation-based guarantee cannot be expected without substantially weakening the privacy or functionality goals.

Concretely, public-investor policies such as $V_{\mathsf{pub}}$ can be instantiated using proof-of-reserves and proof-of-transit primitives that are wrapped in zero-knowledge or witness-indistinguishable arguments, together with standard digital-signature schemes modelled in the usual way. In practice, this can be achieved, for example, by embedding PoR and PoT statements in SNARK-based proofs and applying the Fiat--Shamir transform in the random-oracle model so that verifiers learn nothing beyond the leakage profile $L_{\mathsf{pub}}$.

Theorem~\ref{thm:pub-privacy} should therefore be read as a proof-of-concept for a single, simple leakage profile $L_{\mathsf{pub}}$: it shows that simulation-based privacy is compatible with our TPL framework for carefully designed public-investor policies, while extending such guarantees to richer policy languages and leakage functions remains an open problem.

\begin{proof}[Proof sketch]
Fix a PPT adversary $\mathcal{A}$ and a PPT distinguisher
$\mathcal{D}$.
Let $\mathsf{Real}_{\mathcal{A},\mathcal{D}}(\lambda)$ denote the
probability that $\mathcal{D}$ outputs~$1$ on $\mathcal{A}$'s view in the
real experiment described in the theorem, and let
$\mathsf{Ideal}_{\mathcal{A},\mathcal{D}}(\lambda)$ denote the corresponding
probability in the ideal experiment with simulator~$\mathcal{S}$.
We must construct $\mathcal{S}$ such that
\[
  \bigl|
    \mathsf{Real}_{\mathcal{A},\mathcal{D}}(\lambda)
    - \mathsf{Ideal}_{\mathcal{A},\mathcal{D}}(\lambda)
  \bigr|
\]
is negligible in~$\lambda$.

In the ideal experiment the functionality samples an internal execution of
TPL, obtains the ledger prefixes $\mathrm{TPL}_i$ and the associated leakage
values $L_{\mathsf{pub}}(\mathrm{TPL}_i)$ together with the sequence of
Bitcoin anchoring transactions, and gives only this information to the
simulator~$\mathcal{S}$.
On input this leakage, $\mathcal{S}$ proceeds in three steps.

\begin{enumerate}[noitemsep]
  \item \emph{Sampling a consistent internal state.}
        For each reporting interval it chooses an arbitrary decomposition of
        the leaked total exposure $B_{\mathrm{tot}}(t)$ and encumbered exposure
        $B_{\mathrm{enc}}(t)$ across an internal collection of synthetic
        domains and wallet addresses.
        This yields a synthetic internal ledger $\mathrm{TPL}'_i$ whose
        aggregate exposures match $L_{\mathsf{pub}}(\mathrm{TPL}_i)$ by
        construction.

  \item \emph{Simulating PoR, PoT, and signatures.}
        By assumption~(2), the PoR, PoT, and signature primitives admit
        zero-knowledge or witness-indistinguishable simulation for their
        public outputs.
        In particular, for each PoR invocation there exists a PPT simulator
        that, given only the public commitment and attested amount, produces a
        transcript that is indistinguishable from a real PoR transcript with
        some witness; similarly for PoT receipts and signatures.
        Simulator~$\mathcal{S}$ invokes these underlying simulators on the
        synthetic internal state to obtain PoR snapshots, PoT receipts, and
        signatures consistent with $\mathrm{TPL}'_i$.
        It then embeds the corresponding commitment values in Bitcoin
        anchoring transactions whose pattern (number and timing) matches the
        schedule provided by the ideal functionality.

  \item \emph{Computing the public-investor view.}
        Finally, $\mathcal{S}$ computes
        $\mathrm{View}^{\mathsf{sim}}_{\mathsf{pub}}
          := V_{\mathsf{pub}}(\mathrm{TPL}'_i)$.
        By the definition of $L_{\mathsf{pub}}$ and the construction in the
        first step, this view depends only on the leaked totals
        $B_{\mathrm{tot}}(t)$ and $B_{\mathrm{enc}}(t)$ and not on the
        particular internal decomposition chosen by~$\mathcal{S}$.
        The simulator feeds to $\mathcal{A}$ the resulting protocol messages,
        anchors, and public views, and forwards $\mathcal{A}$'s output to
        $\mathcal{D}$.
\end{enumerate}

Consider a hybrid execution in which we start from a real execution and
gradually replace the real PoR and PoT artefacts and signatures with the
simulated ones produced as above, keeping the leakage
$L_{\mathsf{pub}}(\mathrm{TPL}_i)$ and the anchor schedule fixed.
By the zero-knowledge or witness-indistinguishability guarantees in
assumption~(2), any PPT distinguisher can detect each such replacement only
with negligible advantage.
Assumption~(1) ensures that, apart from these artefacts, the only information
about the treasury state that is ever made public consists of the Bitcoin
anchoring transactions and the views $V_{\mathsf{pub}}(\mathrm{TPL}_i)$, both
of which are reproduced exactly by~$\mathcal{S}$ from
$L_{\mathsf{pub}}(\mathrm{TPL}_i)$.

It follows by a standard hybrid argument that the overall joint distribution
of the public transcript and $\mathcal{A}$'s view in the ideal execution with
simulator~$\mathcal{S}$ is computationally indistinguishable from that in the
real execution.
Equivalently, for every PPT distinguisher~$\mathcal{D}$ the quantity
$\bigl|
  \mathsf{Real}_{\mathcal{A},\mathcal{D}}(\lambda)
  - \mathsf{Ideal}_{\mathcal{A},\mathcal{D}}(\lambda)
\bigr|$
is negligible in~$\lambda$, which is exactly the privacy claim of the theorem.
\end{proof}
\paragraph{Forward integrity and non-repudiation.}
Informally, TPL enjoys forward integrity if, once a prefix of the ledger has been anchored, no adversary can later change previously committed events without causing a verification failure for any policy-based view.
Any such attack would necessarily produce two anchored prefixes with different projections under some policy and thus constitutes a win in the non-equivocation experiment of Definition~\ref{def:non-equivocation}.
We therefore treat non-equivocation as the main formal vehicle for arguing forward integrity and non-repudiation of logged events.
\begin{proposition}[Non-equivocation implies forward integrity]\label{prop:neq-forward}
Suppose that TPL satisfies non-equivocation in the sense of Definition~\ref{def:non-equivocation}.
Let $V_\alpha$ be any deterministic, stateless policy-based view in our policy language.
Then any PPT adversary that outputs two distinct anchored prefixes $(\mathrm{TPL}_i,C_i)$ and $(\mathrm{TPL}'_i,C'_i)$ which both verify under $V_\alpha$ while $V_\alpha(\mathrm{TPL}_i) \neq V_\alpha(\mathrm{TPL}'_i)$ has non-negligible advantage in $\mathsf{Exp}^{\mathsf{neq}}$.
\end{proposition}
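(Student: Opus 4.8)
The plan is to give a direct, advantage-preserving reduction from the forward-integrity task to the non-equivocation experiment $\mathsf{Exp}^{\mathsf{neq}}$ of Definition~\ref{def:non-equivocation}. Let $\mathcal{A}$ be any PPT adversary that, with probability $\epsilon(\lambda)$, outputs two anchored prefixes $(\mathrm{TPL}_i,C_i)$ and $(\mathrm{TPL}'_i,C'_i)$ such that both $\mathrm{View}_\alpha := V_\alpha(\mathrm{TPL}_i)$ and $\mathrm{View}'_\alpha := V_\alpha(\mathrm{TPL}'_i)$ pass $\mathsf{VerifyView}$ under the fixed policy $V_\alpha$ while $\mathrm{View}_\alpha \neq \mathrm{View}'_\alpha$. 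I would construct a non-equivocation adversary $\mathcal{B}$ that runs $\mathcal{A}$ on the same public parameters, relays $\mathcal{A}$'s interaction with the honest treasurers and service providers, and on $\mathcal{A}$'s output extracts the two ledger prefixes together with the sets $\mathcal{M},\mathcal{M}'$ of Bitcoin anchoring transactions that $\mathsf{VerifyView}$ consumed to validate the two views, outputting $(\mathrm{TPL}_i,\mathrm{TPL}'_i,\mathcal{M},\mathcal{M}')$.

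First I would verify that $\mathcal{B}$'s output satisfies the syntactic preconditions of steps~(3)--(4) of $\mathsf{Exp}^{\mathsf{neq}}$. Well-formedness of both prefixes and validity of the anchoring transactions at sufficient confirmation depth are not extra hypotheses to be established separately: they are exactly the checks that $\mathsf{VerifyView}$ performs when it reconstructs the referenced commitments and confirms, against the oracle $\mathcal{A}_{\mathrm{BTC}}$, that each is anchored with the required depth. Hence whenever $\mathcal{A}$'s two views verify, the transaction sets $\mathcal{M},\mathcal{M}'$ that witnessed this verification are precisely the valid, sufficiently-confirmed anchors that step~(3) of $\mathsf{Exp}^{\mathsf{neq}}$ re-checks, so the experiment's independent anchor- and well-formedness tests also pass.

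Next I would discharge step~(4). The non-equivocation experiment accepts if there exists \emph{some} observer class and policy under which the two projections both verify and differ; the fixed $V_\alpha$ from the proposition is a legitimate witness for this existential. Because $V_\alpha$ is deterministic and stateless, the experiment recomputes $V_\alpha(\mathrm{TPL}_i)$ and $V_\alpha(\mathrm{TPL}'_i)$ identically to $\mathcal{A}$'s values, so the two accepting, distinct views carry over verbatim. Distinctness of the views also forces $\mathrm{TPL}_i \neq \mathrm{TPL}'_i$, since a deterministic policy applied to equal prefixes yields equal views, matching the two-prefix format of the experiment; whether $C_i = C'_i$ is immaterial, as $\mathsf{Exp}^{\mathsf{neq}}$ imposes no relation on the commitments. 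Therefore $\mathcal{B}$ wins whenever $\mathcal{A}$ succeeds, giving $\Pr[\mathsf{Exp}^{\mathsf{neq}}_{\mathcal{B}}(\lambda)=1] \geq \epsilon(\lambda)$, which is the claimed advantage; the non-equivocation hypothesis then yields forward integrity by contraposition.

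The reduction is essentially syntactic, so I do not expect a hard technical core. The one point requiring care, which I would treat as the main (if minor) obstacle, is the bookkeeping around the anchoring transactions: making precise the correspondence between an \emph{anchored prefix} $(\mathrm{TPL}_i,C_i)$ as phrased in the proposition and the explicit transaction set $\mathcal{M}$ demanded by Definition~\ref{def:non-equivocation}, and ensuring that the anchor set handed to the non-equivocation challenger is literally the one validated inside $\mathsf{VerifyView}$, so that the experiment's separate anchor-validity test is satisfied automatically rather than being an additional condition the adversary must independently meet.
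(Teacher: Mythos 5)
Your proposal is correct and follows essentially the same route as the paper's own proof: a black-box reduction in which the non-equivocation adversary relays the forward-integrity adversary's oracle interaction and outputs its two prefixes (with the anchors) as the $\mathsf{Exp}^{\mathsf{neq}}$ win, using the fixed $V_\alpha$ as the witness for the experiment's existential over policies. Your treatment is somewhat more careful than the paper's sketch on the bookkeeping of the anchor sets $\mathcal{M},\mathcal{M}'$ and the well-formedness checks, but the underlying argument is identical.
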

\begin{proof}[Proof sketch]
A forward-integrity adversary $\mathcal{A}$ producing such a pair can be used as a black box to win $\mathsf{Exp}^{\mathsf{neq}}$:
the experiment simulates the TPL interfaces for $\mathcal{A}$, forwards its oracle queries to its own oracles, and outputs the two prefixes that $\mathcal{A}$ returns.
By construction these prefixes are well formed, consistent with the same commitments and local checks, and induce two different views for the same policy, so they are a valid non-equivocation win.
Thus any non-negligible forward-integrity advantage would give rise to a non-negligible non-equivocation advantage.
\end{proof}

\paragraph{Integrity theorems.}
In this work, under the cryptographic assumptions and the minimal non-collusion assumption of Section~\ref{sec:threat},
we focus on three integrity properties of the idealised TPL state machine:
(i) conservation across domains, (ii) non-equivocation (and hence forward integrity and non-repudiation),
and (iii) view correctness for policy-based projections.
Besides these three properties, we also prove a restricted exposure-soundness result (Theorem~\ref{lem:restricted-exp-soundness}) for closed sets of domains under a faithful exposure policy.
Outside this restricted case we do not assert that deployed instances of TPL fully realise exposure soundness or policy completeness as defined below. Rather, we treat them as target security notions for future, more detailed analysis in richer deployment settings.

Note that view correctness (Theorem~\ref{thm:view-correctness}) requires only that the policy $V_\alpha$ be
deterministic and stateless; the additional faithfulness condition is needed solely for the restricted
exposure-soundness theorem (Theorem~\ref{lem:restricted-exp-soundness}) on closed sets of domains.

\begin{table}[H]
\centering
\begin{tabularx}{\textwidth}{@{}l l X@{}}
\toprule
Result & Property & Additional assumptions \\
\midrule
Theorem~\ref{thm:conservation} & Conservation across domains &
Balance-update rule~\eqref{eq:balance-update}; Assumption~\ref{assump:algebraic-consistency} \\
Theorem~\ref{thm:non-equivocation} & Non-equivocation, forward integrity &
PoT unforgeability (Definition~\ref{def:pot-unforgeability}); collision resistance of $H$ \\
Theorem~\ref{thm:view-correctness} & View correctness for a fixed policy &
Deterministic, stateless policy $V_\alpha$; assumptions of Theorem~\ref{thm:non-equivocation} \\
Theorem~\ref{lem:restricted-exp-soundness} & Restricted exposure soundness &
Faithful exposure policy $V_{\mathsf{exp}}$; closed domain set; PoR existence- and ownership-soundness; PoT unforgeability; collision resistance of $H$; non-equivocation; minimal non-collusion (Section~\ref{sec:threat}) \\
Theorem~\ref{thm:pub-privacy} & Restricted privacy for public-investors &
Leakage limited to $L_{\mathsf{pub}}$; only anchors and $V_{\mathsf{pub}}(\mathrm{TPL}_i)$ are public; zero-knowledge / witness-indistinguishability for PoR, PoT, and signatures \\
\bottomrule
\end{tabularx}
\caption{Summary of main results and their cryptographic and organisational assumptions.}
\label{tab:assumption-map}
\end{table}

\begin{theorem}[Ledger-level security composition]
\label{thm:ledger-level-security}
Assume that the hash function $H$ is collision-resistant, that the PoT scheme is unforgeable 
(Definition~\ref{def:pot-unforgeability}), and that the PoR schemes used to back snapshots satisfy 
existence and ownership (Definitions~\ref{def:por-existence} and~\ref{def:por-ownership}).
Assume further that admissible liveness bounds hold and that organisational non-collusion assumptions
between the treasurer, PoR/PoT providers, and external auditors are satisfied.

Fix a closed set of domains $S \subseteq \mathcal{D} \setminus \{d_{\mathrm{fee}}\}$, a faithful exposure policy
$V_{\mathsf{exp}}$ on $S$, and an observer class $\alpha$ with deterministic, stateless policy $V_\alpha$ that refines
$V_{\mathsf{exp}}$.
Then for any anchored ledger prefix $\mathrm{TPL}_i$ the following hold:
\begin{enumerate}[noitemsep]
\item There exists a unique policy-based view $V_\alpha(\mathrm{TPL}_i)$ that is determined entirely by the internal
      TPL log and by the sequence of Bitcoin anchoring transactions.
\item Any probabilistic polynomial-time adversary that, with non-negligible probability, produces
      an anchored ledger prefix and associated views such that
      \begin{itemize}[noitemsep]
      \item[(a)] two distinct views for the same policy and state both verify successfully while disagreeing on some
                 reported quantity, or
      \item[(b)] a view omits required events or inflates the reported exposure of $S$ beyond the true Bitcoin holdings,
      \end{itemize}
      either violates one of the modelling assumptions above (for example, faithfulness of the policy,
      closedness of $S$, admissible liveness, or non-collusion), or yields a PPT algorithm that breaks at
      least one of the underlying primitives (PoR existence/ownership, PoT unforgeability, or collision
      resistance of $H$).
\end{enumerate}
\end{theorem}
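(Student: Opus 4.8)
The plan is to treat this statement as a repackaging of the earlier integrity results, establishing each clause by reduction to a single previously proved theorem and otherwise doing only bookkeeping. First I would dispose of claim~(1). Since $V_\alpha$ is deterministic and stateless, the value $V_\alpha(\mathrm{TPL}_i)$ is a function of the ledger prefix alone, and by collision resistance of $H$ the anchored commitment $C_i = H(\mathrm{TPL}_i)$ pins down $\mathrm{TPL}_i$ up to negligible probability. Combining these with view correctness (Theorem~\ref{thm:view-correctness}), which guarantees that an honest $\mathsf{GenView}$ execution reproduces exactly this value and that $\mathsf{VerifyView}$ accepts no competing view, yields uniqueness: the only object that both equals $V_\alpha$ applied to the log consistent with the anchors and passes verification is the honest view.

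For claim~(2a) I would observe that an adversary producing two distinct views for the same policy $V_\alpha$ and the same anchored state, both passing $\mathsf{VerifyView}$ while disagreeing on a reported quantity, is by definition a winning adversary in $\mathsf{Exp}^{\mathsf{neq}}$ (Definition~\ref{def:non-equivocation}); this is precisely the situation covered by Proposition~\ref{prop:neq-forward}. By Theorem~\ref{thm:non-equivocation} such an adversary's advantage is negligible unless it can be converted into a PPT algorithm breaking collision resistance of $H$ or PoT unforgeability, which gives the claimed dichotomy between violating a modelling assumption and breaking a primitive.

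Claim~(2b) splits into two sub-cases. For inflation of the reported exposure of $S$, I would invoke the restricted exposure-soundness theorem (Theorem~\ref{lem:restricted-exp-soundness}): the hypotheses that $S$ is closed over $[0,t_i]$, that $V_{\mathsf{exp}}$ is faithful, and that the minimal non-collusion assumption holds for every domain in $S$ are exactly the closedness, faithfulness, and non-collusion conditions required there. Because $V_\alpha$ refines $V_{\mathsf{exp}}$---meaning $V_{\mathsf{exp}}(\mathrm{TPL}_i)$ can be recomputed as a deterministic post-processing of $V_\alpha(\mathrm{TPL}_i)$---any view inflating the $S$-exposure visible through $V_\alpha$ also inflates the aggregate $\sum_{d\in S} B_d(t_i)$ reported by $V_{\mathsf{exp}}$, so the inflating adversary instantiates the restricted exposure-soundness experiment and yields a break of PoR existence/ownership soundness, PoT unforgeability, collision resistance, or non-equivocation. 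For omission of a required event I would argue as in Lemma~\ref{lem:policy-completeness-history}: by the admissible liveness bounds every event that must be visible by time $t_i$ appears in the honest prefix, and by view correctness together with non-equivocation there is a unique accepted view for $V_\alpha$ on that anchored prefix, so a verifying view dropping such an event contradicts either liveness or the combination of non-equivocation and view correctness.

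I expect the main obstacle to be making the refinement relation between $V_\alpha$ and $V_{\mathsf{exp}}$ precise enough to transfer the exposure-soundness guarantee. Theorem~\ref{lem:restricted-exp-soundness} is stated for the faithful policy $V_{\mathsf{exp}}$ that outputs $(B_d(t_i))_{d\in S}$ or its sum, whereas here the adversary attacks the possibly finer-grained $V_\alpha$; I would need to fix a definition of ``refines'' under which $V_{\mathsf{exp}}$ is a function of $V_\alpha$ and verify that this function commutes with the recomputation performed by $\mathsf{VerifyView}$, so that a successful attack on $V_\alpha$ genuinely produces the strict inequality $\sum_{d\in S} B^\star_d(t_i) > \sum_{d\in S} B_d(t_i)$ demanded by that theorem. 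The remaining steps are then routine invocations of the cited results, with a union bound over the constantly many sub-cases absorbing the negligible error terms.
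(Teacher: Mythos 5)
Your proposal is correct and follows essentially the same route as the paper, whose own ``proof'' of this theorem is only a high-level remark deferring item~(1) to view correctness (Theorem~\ref{thm:view-correctness}) and non-equivocation (Theorem~\ref{thm:non-equivocation}), and item~(2) to the restricted exposure-soundness theorem (Theorem~\ref{lem:restricted-exp-soundness}) together with conservation and the primitive assumptions. Your decomposition is in fact more precise than the paper's (mapping (2a) to the non-equivocation experiment and (2b) to exposure soundness plus a policy-completeness-style liveness argument), and the obstacle you flag---that the ``refines'' relation between $V_\alpha$ and $V_{\mathsf{exp}}$ is never formally defined, so the transfer of exposure soundness from $V_{\mathsf{exp}}$ to $V_\alpha$ needs an explicit definition---is a genuine imprecision in the paper's statement that its own one-paragraph justification silently glosses over.
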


\noindent
At a high level, Item~(1) is a direct consequence of view correctness
(Theorem~\ref{thm:view-correctness}) and non-equivocation
(Theorem~\ref{thm:non-equivocation}), while Item~(2) combines the restricted exposure-soundness theorem
(Theorem~\ref{lem:restricted-exp-soundness}) with the conservation law
(Theorem~\ref{thm:conservation}) and the security of the underlying primitives.
We emphasise that this theorem applies to the restricted setting of faithful policies on closed sets of domains;
extending exposure-soundness and policy completeness to fully general policies remains open work.

Table~\ref{tab:assumption-map} also makes explicit how our main results layer assumptions. In brief, Theorems~\ref{thm:conservation},~\ref{thm:non-equivocation} and~\ref{thm:view-correctness} rely only on hash-function collision resistance, proof-of-reserves and proof-of-transit soundness, and access to a Bitcoin anchoring substrate. Theorem~\ref{lem:restricted-exp-soundness} additionally requires minimal non-collusion and domain completeness at the organisational level, while Theorem~\ref{thm:pub-privacy} further assumes zero-knowledge or witness-indistinguishability properties for the underlying PoR, PoT and signature primitives.

We model the treasury as a finite set of domains $\mathcal{D}$ and write
$B_d(t)$ for the Bitcoin-denominated exposure of domain $d \in \mathcal{D}$ at
logical time $t$, as derived from the committed ledger.
As a notational convention we use $k$ for event indices and $i$ for reporting intervals; roman symbols such as
$R_i$ denote the PoT hash-chain commitment for interval~$i$, while calligraphic symbols such as $\mathcal{R}_i$
denote the set of PoT receipts in that interval.
Where reuse of indices could otherwise be ambiguous, the surrounding context will always make the intended
meaning clear.
For on-chain and custodial domains this quantity is induced by the PoR snapshots and inclusion proofs recorded in the log up to time~$t$ (for example, from UTXO-level PoR for a Bitcoin-custody domain).
For derivative and collateral domains we deliberately abstract away detailed contract semantics and treat $B_d(t)$ as the net Bitcoin-denominated exposure reported by the domain-specific PoR or attestation primitive.
We do not attempt to model all economic effects of complex derivatives; such modelling is outside the scope of this work, and we always work with the algebraic sum across domains.
In particular, complex derivative and collateral arrangements must be modelled so that any change in their
Bitcoin-equivalent exposure appears as a corresponding flow between appropriately defined domains (for example,
between a derivative domain and a collateral or counterparty domain). Positions or flows that are not represented
in this way fall under Assumption~\ref{assump:domain-completeness} and Assumption~\ref{assump:algebraic-consistency},
and are therefore outside the scope of the conservation guarantee.

We also introduce an explicit sink domain $d_{\mathrm{fee}}$ capturing
transaction fees and protocol-level losses.

\begin{theorem}[Conservation across domains]
\label{thm:conservation}
Ignoring $d_{\mathrm{fee}}$, internal transfers between domains preserve total
Bitcoin exposure.
For any interval $[t_i,t_{i+1}]$ we have
\[
  \sum_{d \in \mathcal{D} \setminus \{d_{\mathrm{fee}}\}} B_d(t_{i+1})
  \;=\;
  \sum_{d \in \mathcal{D} \setminus \{d_{\mathrm{fee}}\}} B_d(t_i)
\]
except for value arriving from or departing to explicitly external domains.
\end{theorem}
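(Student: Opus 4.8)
The plan is to prove the statement by a telescoping (double-entry bookkeeping) argument over the discrete event sequence falling in the interval, treating the balance-update rule~\eqref{eq:balance-update} as the only source of state change. Write $T(t) := \sum_{d \in \mathcal{D} \setminus \{d_{\mathrm{fee}}\}} B_d(t)$ for the aggregate non-fee exposure, and let $e_{k},\dots,e_{k+m}$ be the treasury events with timestamps in $(t_i,t_{i+1}]$, processed in their total order. Since exposures change only at event times, it suffices to control the per-event increment $T(t_\ell^{+}) - T(t_\ell^{-})$ and then telescope across the interval, using that $B_d(t_i) = B_d(t_i^{+})$ and $B_d(t_{i+1}) = B_d(t_{i+1}^{+})$ under the pre/post convention of Section~\ref{sec:model}.

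First I would establish the per-event invariant by a case analysis on where the two legs of event $e_\ell = (t_\ell, d_{\mathrm{src}}, d_{\mathrm{dst}}, v, \cdot, \cdot)$ land. By~\eqref{eq:balance-update} the only balances that change are $d_{\mathrm{src}}$ (by $-v$) and $d_{\mathrm{dst}}$ (by $+v$). Restricting the sum to $\mathcal{D} \setminus \{d_{\mathrm{fee}}\}$ yields four cases: (i) both legs lie in $\mathcal{D} \setminus \{d_{\mathrm{fee}}\}$, so the $-v$ and $+v$ cancel and $T$ is unchanged --- this is the genuinely \emph{internal} transfer; (ii) only the source lies in $\mathcal{D} \setminus \{d_{\mathrm{fee}}\}$ while the destination is $d_{\mathrm{fee}}$ or an external counterparty, so $T$ decreases by $v$; (iii) symmetrically, only the destination is a modelled non-fee domain, so $T$ increases by $v$; and (iv) neither leg is a modelled non-fee domain, so $T$ is unchanged. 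Hence every fully internal event contributes zero and only boundary-crossing events move $T$.

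Second I would telescope: summing the per-event increments gives
\[
  T(t_{i+1}) - T(t_i) \;=\; \sum_{\ell \,:\, e_\ell \text{ crosses the boundary}} \varepsilon_\ell\, v_\ell,
\]
where $\varepsilon_\ell = +1$ for inflows (case iii) and $\varepsilon_\ell = -1$ for outflows (case ii), and the right-hand side is exactly the net value arriving from or departing to the fee domain and external counterparties over the interval. When no boundary-crossing events occur the sum is empty and we recover the clean equality $T(t_{i+1}) = T(t_i)$, which is the stated conservation law; the residual term is precisely the advertised exception.

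The main obstacle is not the arithmetic but pinning down the semantics of the boundary, since external counterparties are deliberately \emph{not} members of $\mathcal{D}$ and the update rule~\eqref{eq:balance-update} is written only for $d \in \mathcal{D}$. I would handle this by fixing the convention, already signalled after~\eqref{eq:balance-update} and in the worked example, that an event with an external leg updates only its modelled endpoint, so that such an event acts as a pure inflow or outflow to the modelled set; flows to $d_{\mathrm{fee}}$ are the special case in which the sink is explicitly modelled but excluded from $T$. With this convention the case analysis is exhaustive, and the only genuinely delicate point --- ensuring that no economically relevant leg is silently dropped so that some untracked term contributes to $T(t_{i+1}) - T(t_i)$ --- is exactly what Assumptions~\ref{assump:domain-completeness} and~\ref{assump:algebraic-consistency} are there to guarantee, so I would invoke them to certify that the logged events faithfully account for all value movement.
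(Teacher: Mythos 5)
Your proposal is correct and follows essentially the same route as the paper's own proof: a per-event case analysis on the balance-update rule~\eqref{eq:balance-update} showing that the $-v$/$+v$ legs cancel for internal transfers, followed by telescoping (induction) over the events in the interval, with flows to $d_{\mathrm{fee}}$ and external counterparties isolated as the residual term. Your explicit four-case enumeration and signed-sum form of the boundary term are slightly more detailed than the paper's write-up, but they do not constitute a different argument.
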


\noindent
This is an algebraic conservation statement over the reported domain exposures $B_d(t)$ introduced above;
it does not attempt to capture defaults, margin calls, or other economic dynamics of complex instruments.

\begin{proof}
Fix an interval $[t_i,t_{i+1}]$ and write
\[
  S(t) \;:=\; \sum_{d \in \mathcal{D} \setminus \{d_{\mathrm{fee}}\}} B_d(t)
\]
for the total exposure over all non-fee domains at time~$t$.
Let $(e_k)_{k_0 \le k \le k_1}$ be the subsequence of treasury events with timestamps
$t_k \in [t_i,t_{i+1}]$, ordered by increasing time, and let
$B_d(t_k^-)$ and $B_d(t_k^+)$ denote, respectively, the exposure of domain~$d$
immediately before and after processing~$e_k$.
Define $S(t_k^-)$ and $S(t_k^+)$ analogously.

For an event
$e_k = (t_k,d_{\mathrm{src}},d_{\mathrm{dst}},v,\mathsf{evid},m)$,
the balance update rule~\eqref{eq:balance-update} gives, for every
$d \in \mathcal{D} \setminus \{d_{\mathrm{fee}}\}$,
\[
  B_d(t_k^+) - B_d(t_k^-)
  =
  \begin{cases}
    -v, & \text{if } d = d_{\mathrm{src}} \text{ and } d_{\mathrm{src}},d_{\mathrm{dst}} \in \mathcal{D} \setminus \{d_{\mathrm{fee}}\},\\[0.4ex]
    +v, & \text{if } d = d_{\mathrm{dst}} \text{ and } d_{\mathrm{src}},d_{\mathrm{dst}} \in \mathcal{D} \setminus \{d_{\mathrm{fee}}\},\\[0.4ex]
    0,  & \text{otherwise,}
  \end{cases}
\]
where we treat transfers to explicitly external counterparties and to
$d_{\mathrm{fee}}$ as external flows
and therefore exclude them from the conservation claim.
Summing over $d \in \mathcal{D} \setminus \{d_{\mathrm{fee}}\}$ we obtain
\[
  S(t_k^+) - S(t_k^-)
  =
  (-v) + (+v) = 0
\]
whenever both endpoints of the transfer lie in
$\mathcal{D} \setminus \{d_{\mathrm{fee}}\}$,
and $S(t_k^+) - S(t_k^-) = 0$ for all other events that do not change any
$B_d$ with $d \in \mathcal{D} \setminus \{d_{\mathrm{fee}}\}$.
Thus every internal transfer between modelled domains preserves~$S$;
only events whose source or destination is an explicitly external counterparty
or the fee domain can change the total.

By induction over $k_0,\dots,k_i$ it follows that, if we ignore the net effect
of transfers to and from external counterparties and $d_{\mathrm{fee}}$,
the value of $S(t)$ is invariant over the interval $[t_i,t_{i+1}]$:
\[
  S(t_{i+1}) = S(t_i).
\]
Unfolding the definition of $S$ yields
\[
  \sum_{d \in \mathcal{D} \setminus \{d_{\mathrm{fee}}\}} B_d(t_{i+1})
  \;=\;
  \sum_{d \in \mathcal{D} \setminus \{d_{\mathrm{fee}}\}} B_d(t_i),
\]
except for the explicitly modelled external inflows and outflows,
which proves the stated conservation property.
\end{proof}

More informally, this conservation property underlies the exposure-soundness experiment of
Definition~\ref{def:exp-soundness}.  The following lemma makes explicit how conservation,
together with the balance-update rule and the minimal non-collusion assumption, yields a
completeness-of-flows guarantee for non-fee domains.

\begin{lemma}[Flow completeness under non-collusion]
\label{lem:flow-completeness}
Let $[t_i,t_{i+1}]$ be an interval aligned with successive snapshots and let
$d \in \mathcal{D} \setminus \{d_{\mathrm{fee}}\}$.
Under the balance-update rule~\eqref{eq:balance-update}, conservation
(Theorem~\ref{thm:conservation}), and the minimal non-collusion assumption of
Section~\ref{sec:threat}, the net change
$B_d(t_{i+1}) - B_d(t_i)$ is fully determined by the multiset of PoR snapshots and PoT receipts
involving $d$ and the fee domain $d_{\mathrm{fee}}$ that occur in $[t_i,t_{i+1}]$.
In particular, any admissible execution that yields the same PoR/PoT artefacts on this interval
must induce the same vector of domain balances at time $t_{i+1}$ for all
$d \in \mathcal{D} \setminus \{d_{\mathrm{fee}}\}$.
\end{lemma}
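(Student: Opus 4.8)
The plan is to specialise the telescoping argument behind Theorem~\ref{thm:conservation} to a single fixed domain, and then to invoke non-collusion to argue that the logged artefacts enumerate exactly the flows appearing in the resulting identity. First I would fix $d \in \mathcal{D} \setminus \{d_{\mathrm{fee}}\}$ and let $(e_k)_{k_0 \le k \le k_1}$ be the subsequence of treasury events with timestamps $t_k \in [t_i, t_{i+1}]$, ordered by increasing time. Applying the balance-update rule~\eqref{eq:balance-update} to each $e_k$ and summing the per-event differences $B_d(t_k^+) - B_d(t_k^-)$ over the interval, every event whose source and destination both differ from $d$ leaves $B_d$ unchanged, so that
\[
  B_d(t_{i+1}) - B_d(t_i)
  = \sum_{\substack{k_0 \le k \le k_1 \\ d_{\mathrm{dst},k} = d}} v_k
  - \sum_{\substack{k_0 \le k \le k_1 \\ d_{\mathrm{src},k} = d}} v_k .
\]
Each summand is contributed by an event that has $d$ as one of its two endpoints, and in particular fee payments out of $d$ appear as terms with $d_{\mathrm{src},k} = d$ and $d_{\mathrm{dst},k} = d_{\mathrm{fee}}$. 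This is the purely algebraic half of the claim: the net change is a fixed, sign-respecting function of the values and directions of exactly those events that touch $d$ (possibly with $d_{\mathrm{fee}}$ as counterpart).

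Next I would match each such event to its committed evidence. By construction every event $e_k$ produces a signed PoT record $\mathsf{rec}_k$ carrying the fields $(d_{\mathrm{src},k}, d_{\mathrm{dst},k}, v_k)$ together with the chain value $R_k$ and the treasury and service-provider signatures over $(h_k, d_{\mathrm{src},k}, d_{\mathrm{dst},k}, t_k, R_k)$. Hence the multiset of PoT receipts that involve $d$ (including those whose counterpart endpoint is $d_{\mathrm{fee}}$) records precisely the value and direction of every term on the right-hand side above, so the net change can be recomputed from these receipts alone. The opening PoR snapshot at $t_i$ pins down $B_d(t_i)$ by Assumption~\ref{assump:algebraic-consistency}, and the closing snapshot at $t_{i+1}$ provides an independent check on $B_d(t_{i+1})$; together they bracket the net change that the receipts compute.

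The substantive step is to rule out flows touching $d$ that are not reflected in these artefacts, and this is exactly where the minimal non-collusion Assumption~\ref{assump:noncollusion} enters. For the domain $d$ at least one of its PoR/PoT providers, anchoring keys, or external auditors is honest; combined with algebraic consistency and PoR existence/ownership soundness (Definitions~\ref{def:por-existence}--\ref{def:por-ownership}), this honest role guarantees that the snapshots for $d$ coincide with the balance-update value $B_d(t)$ and that no transfer into or out of $d$ can be omitted from, or fabricated in, the logged PoT chain while still passing verification. Consequently the logged receipts enumerate \emph{all and only} the true flows touching $d$ over $[t_i, t_{i+1}]$, so the telescoped sum computed from them equals the genuine net change. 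I expect this to be the main obstacle: the telescoping and the record-matching are routine, but making precise that non-collusion forbids hidden or phantom flows---rather than merely supplying a consistency check---is what upgrades the arithmetic identity into a completeness guarantee, and it requires care in stating what each honest role is assumed to detect.

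Finally, for the ``in particular'' clause I would apply the above to every $d \in \mathcal{D} \setminus \{d_{\mathrm{fee}}\}$ simultaneously. Two admissible executions that yield the same PoR snapshots and PoT receipts on the interval therefore agree on both the opening balances (via Assumption~\ref{assump:algebraic-consistency}) and every per-domain net change (via the telescoped sum), so they induce the same exposure vector $(B_d(t_{i+1}))_{d \in \mathcal{D} \setminus \{d_{\mathrm{fee}}\}}$, as claimed.
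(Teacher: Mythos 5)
Your proposal is correct and takes essentially the same route as the paper's own proof: a per-domain telescoping of the balance-update rule~\eqref{eq:balance-update}, matching each resulting term to its signed PoT record, and invoking the minimal non-collusion assumption together with algebraic consistency to guarantee that the logged artefacts faithfully enumerate all flows touching $d$. If anything, your write-up is more explicit than the paper's, which states the determinism and faithfulness in prose rather than writing out the telescoped sum or noting that the opening snapshot at $t_i$ pins down $B_d(t_i)$.
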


\begin{proof}
By Theorem~\ref{thm:conservation}, the total exposure
$\sum_{d \in \mathcal{D} \setminus \{d_{\mathrm{fee}}\}} B_d(t)$
over non-fee domains evolves only through value exchanged with $d_{\mathrm{fee}}$ and any explicitly
external domains.
The balance-update rule~\eqref{eq:balance-update} specifies how each logged event updates the
per-domain balances $(B_d(t))_d$ as a deterministic function of its associated PoR and PoT artefacts.
The minimal non-collusion assumption guarantees that, for each domain, the PoR snapshots and PoT receipts
logged in $[t_i,t_{i+1}]$ faithfully capture the net flow of value between that domain, $d_{\mathrm{fee}}$,
and any external domains.
Thus the sequence of PoR/PoT artefacts on $[t_i,t_{i+1}]$ uniquely determines $B_d(t_{i+1})$ given
$B_d(t_i)$, which implies the claim.
\end{proof}

Together, Theorem~\ref{thm:conservation} and Lemma~\ref{lem:flow-completeness} formalise the
completeness-of-flows goal from Section~\ref{sec:threat}: any legitimate change in exposures between
snapshots must be backed by concrete PoR and PoT artefacts.
\begin{theorem}[Forward integrity and non-repudiation]
\label{thm:non-equivocation}
Assume that the hash function $H$ used in the commitments and PoT receipts is
collision-resistant and that the PoT scheme is unforgeable in the sense of
Definition~\ref{def:pot-unforgeability}.
Then for every PPT adversary $\mathcal{A}$ there exist PPT algorithms
$\mathcal{B}$ and $\mathcal{C}$ such that
\[
  \mathsf{Adv}^{\mathsf{neq}}_{\mathcal{A}}(\lambda)
  \;\leq\;
  \mathsf{Adv}^{\mathsf{pot-forge}}_{\mathcal{B}}(\lambda)
  \;+\;
  \mathsf{Adv}^{\mathsf{coll}}_{\mathcal{C}}(\lambda),
\]
where $\mathsf{Adv}^{\mathsf{neq}}$ is the advantage in
$\mathsf{Exp}^{\mathsf{neq}}$, $\mathsf{Adv}^{\mathsf{pot-forge}}$ is the
advantage in $\mathsf{Exp}^{\mathsf{pot-forge}}$, and
$\mathsf{Adv}^{\mathsf{coll}}$ is the advantage of finding a collision in~$H$.
In particular, if $H$ is collision-resistant and PoT receipts are unforgeable,
TPL satisfies non-equivocation and hence provides forward integrity and
non-repudiation of logged events.
\end{theorem}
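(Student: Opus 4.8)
The plan is to prove the advantage bound by a reduction that, from any PPT adversary $\mathcal{A}$ winning $\mathsf{Exp}^{\mathsf{neq}}$, extracts either a PoT forgery or a hash collision. First I would unpack the winning condition of Definition~\ref{def:non-equivocation}: $\mathcal{A}$ outputs two anchored prefixes $\mathrm{TPL}_i,\mathrm{TPL}'_i$ and an observer class $\alpha$ for which $\mathsf{VerifyView}$ accepts both $\mathrm{View}_\alpha = V_\alpha(\mathrm{TPL}_i)$ and $\mathrm{View}'_\alpha = V_\alpha(\mathrm{TPL}'_i)$ yet $\mathrm{View}_\alpha \neq \mathrm{View}'_\alpha$. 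Since $V_\alpha$ is a deterministic function of the committed prefix, differing views force $\mathrm{TPL}_i \neq \mathrm{TPL}'_i$; and because the honestly generated, hash-chained record sequence is uniquely determined by the events submitted through the append oracle $\mathcal{O}_{\mathsf{append}}$, the two prefixes cannot both coincide with the honest log at interval~$i$. Hence at least one of them, say $\mathrm{TPL}'_i$, deviates from the honest history while still passing verification.

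Next I would split on how this deviating prefix is bound to Bitcoin. By Assumption~\ref{assump:anchoring-substrate} the anchored commitment values are effectively immutable once $k$-confirmed, so $\mathsf{VerifyView}$ pins each accepted view to a fixed on-chain commitment. In the \emph{collision} case, the deviating prefix reconstructs to the same anchored commitment value $C$ as its sibling, i.e.\ $H(\mathrm{TPL}'_i) = C = H(\mathrm{TPL}_i)$ with $\mathrm{TPL}'_i \neq \mathrm{TPL}_i$; walking the hash-chained commitment and the PoT chain from the anchored root to the first position at which the two serialisations diverge yields two distinct preimages with equal images under $H$. In the \emph{forgery} case the deviating prefix is bound to a genuinely different anchored commitment, so its PoT chain is not a prefix of any honestly returned sequence; yet $\mathsf{VerifyView}$ accepts only when the chaining relation $R^\star_j = H(R^\star_{j-1} \Vert \cdots \Vert t^\star_j)$ holds and the treasury and service-provider signatures on each record verify, which is exactly a winning transcript for $\mathsf{Exp}^{\mathsf{pot-forge}}$ of Definition~\ref{def:pot-unforgeability}.

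I would then construct the two reductions. Both $\mathcal{B}$ (against PoT unforgeability) and $\mathcal{C}$ (against collision resistance) run $\mathcal{A}$ as a subroutine, answering its interaction with honest treasurers and service providers by relaying to their own oracles or to an internally simulated honest execution. When $\mathcal{A}$ halts with a winning pair, $\mathcal{B}$ outputs the deviating PoT chain and its records as a forgery, while $\mathcal{C}$ outputs the two diverging substrings located in the commitment structure as a collision. Because the collision and forgery cases are mutually exclusive and jointly cover the winning event (the remaining possibility, that both prefixes equal the honest log, contradicts $\mathrm{View}_\alpha \neq \mathrm{View}'_\alpha$), a case analysis and union bound give $\mathsf{Adv}^{\mathsf{neq}}_{\mathcal{A}}(\lambda) \leq \mathsf{Adv}^{\mathsf{pot-forge}}_{\mathcal{B}}(\lambda) + \mathsf{Adv}^{\mathsf{coll}}_{\mathcal{C}}(\lambda)$. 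Forward integrity and non-repudiation then follow via Proposition~\ref{prop:neq-forward}, since any later tampering with an already anchored prefix is a special case of producing two verifying, disagreeing views.

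The main obstacle I anticipate is the bookkeeping in the collision case: I must descend through the nested hash-chained PoT digests and the Merkle-committed snapshot vectors, using only the artefacts that $\mathsf{VerifyView}$ actually reconstructs from the view and its anchor, to isolate a single pair of distinct inputs colliding under $H$ while all ancestor digests agree. Cleanly separating this branch from the forgery branch, and discharging the negligible event that a $k$-confirmed anchor is reverted (absorbed into Assumption~\ref{assump:anchoring-substrate}), is where the argument needs the most care; the signature and chaining checks bundled into Definition~\ref{def:pot-unforgeability} are precisely what let the forgery branch subsume any deviation in the PoT records without incurring a separate signature-unforgeability term.
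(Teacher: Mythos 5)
Your high-level strategy (a case analysis extracting either a PoT forgery or a hash collision, followed by a union bound) matches the paper's, but your case split is keyed to the wrong object, and this opens a genuine gap. The paper splits on the \emph{location of the first divergence} between the two accepted prefixes: if the signed PoT records differ at the first index $j$ with $\mathrm{TPL}_j \neq \mathrm{TPL}'_j$, the two distinct valid records at position~$j$ yield a PoT forgery (event $E_{\mathsf{pot}}$); if the PoT records there are identical but the serialisations still differ, the paper argues that both prefixes must be consistent with the \emph{same} anchored commitment $C_j$, so $H(\mathrm{TPL}_j)=H(\mathrm{TPL}'_j)$ gives a collision (event $E_{\mathsf{coll}}$). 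You instead split on whether the deviating prefix is bound to the same or to a genuinely different anchored commitment, and in the latter case you claim that ``its PoT chain is not a prefix of any honestly returned sequence.'' That inference does not hold: a ledger prefix can carry exactly the honest PoT receipt chain yet differ from the honest ledger in its non-PoT records --- the PoR snapshot digests $\mathcal{S}_i$ or the policy metadata $M_i$ --- and therefore hash to a different commitment. Nothing in step~3 of Definition~\ref{def:non-equivocation} restricts the adversary's transaction set $\mathcal{M}'$ to the honest anchoring schedule, so the adversary can anchor that different commitment itself with valid, confirmed Bitcoin transactions. This execution lands in your forgery branch, but $\mathcal{B}$ has no forgery to output (every receipt in the chain was honestly issued and signed) and $\mathcal{C}$ has no collision (the two commitments differ). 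Neither reduction succeeds on this sub-case, so your two branches do not jointly cover the winning event and the union bound fails.

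To close the gap you must either (i) adopt the paper's decomposition, in which any difference confined to non-PoT content falls into the collision branch and is discharged by the uniqueness of the per-index on-chain anchor (the same property invoked in the proof of Theorem~\ref{thm:view-correctness}), or (ii) explicitly prove that your ``different anchored commitment'' case is vacuous, i.e.\ that $\mathsf{VerifyView}$ only accepts views whose referenced commitments lie on the single honest anchor chain --- which is again an anchoring-uniqueness argument, not a consequence of PoT unforgeability. As written, your forgery branch conflates ``the ledger deviates from the honest history'' with ``the PoT chain deviates from the honest chain,'' and these are not the same statement.
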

\noindent
This theorem is purely cryptographic: it ensures that, given honest implementations of the primitives, an adversary cannot equivocate about the contents of the hash-chained, anchored ledger without breaking collision resistance or PoT unforgeability.
When we later derive deployment-level guarantees such as restricted exposure soundness (Theorem~\ref{lem:restricted-exp-soundness}), this non-equivocation property is combined with the organisational minimal non-collusion assumption (Assumption~\ref{assump:noncollusion}) to rule out attacks in which all relevant roles collude to fabricate a consistent but economically false ledger.

\begin{proof}
The structure of this reduction closely follows the standard non-equivocation arguments used for hash-chained transparency logs such as Certificate Transparency and CONIKS, specialised here to TPL's multi-domain state and PoT receipts.
Let $\mathcal{A}$ be any PPT adversary for $\mathsf{Exp}^{\mathsf{neq}}$.
Write $E$ for the event that
$\mathsf{Exp}^{\mathsf{neq}}_{\mathcal{A}}(\lambda)=1$, so that
$\mathsf{Adv}^{\mathsf{neq}}_{\mathcal{A}}(\lambda) = \Pr[E]$.
By definition of the experiment, on~$E$ the adversary outputs two ledger
prefixes $\mathrm{TPL}_i$ and $\mathrm{TPL}'_i$ together with two sets of
anchoring transactions $\mathcal{M}$ and $\mathcal{M}'$ such that there exists
an observer class~$\alpha$ and policy $V_\alpha$ for which both
views $\mathrm{View}_\alpha := V_\alpha(\mathrm{TPL}_i)$ and
$\mathrm{View}'_\alpha := V_\alpha(\mathrm{TPL}'_i)$ are accepted by
$\mathsf{VerifyView}$, yet
$\mathrm{View}_\alpha \neq \mathrm{View}'_\alpha$.

Let $j \le i$ be the smallest index such that
$\mathrm{TPL}_j \neq \mathrm{TPL}'_j$.
Consider the PoT records and signatures associated with position~$j$ in the
two ledgers.
We distinguish two disjoint cases and define corresponding events:
\begin{itemize}[noitemsep]
  \item $E_{\mathsf{pot}}$: the PoT receipts (including the chain of prior
        receipts on which they build) differ at position~$j$ while both ledgers
        are accepted in $\mathsf{Exp}^{\mathsf{neq}}$;
  \item $E_{\mathsf{coll}}$: the PoT receipts at position~$j$ are identical,
        but the serialised ledger prefixes $\mathrm{TPL}_j$ and
        $\mathrm{TPL}'_j$ differ.
\end{itemize}
By construction $E = E_{\mathsf{pot}} \cup E_{\mathsf{coll}}$ and the two
events are disjoint, so
\[
  \Pr[E] = \Pr[E_{\mathsf{pot}}] + \Pr[E_{\mathsf{coll}}].
\]

We first construct a PPT adversary $\mathcal{B}$ for
$\mathsf{Exp}^{\mathsf{pot-forge}}$.
Algorithm~$\mathcal{B}$ internally runs $\mathcal{A}$ and perfectly simulates
$\mathsf{Exp}^{\mathsf{neq}}$ using its own access to the PoT oracle provided
by the challenger.
When $\mathcal{A}$ halts and event $E_{\mathsf{pot}}$ occurs, $\mathcal{B}$
computes the minimal index $j$ as above and extracts from the two transcripts
the honest PoT chain up to index $j-1$ and the two distinct PoT records at
position~$j$.
By definition of $E_{\mathsf{pot}}$, each of these records is individually
valid with respect to the same prior chain and anchors, and at least one of
them is not a prefix of the honest chain.
Algorithm~$\mathcal{B}$ outputs this record (together with the prior chain) as
its forgery in $\mathsf{Exp}^{\mathsf{pot-forge}}$.
The simulation of $\mathsf{Exp}^{\mathsf{neq}}$ is perfect, so
\[
  \mathsf{Adv}^{\mathsf{pot-forge}}_{\mathcal{B}}(\lambda)
  = \Pr[E_{\mathsf{pot}}].
\]

We next construct a PPT adversary $\mathcal{C}$ against the collision
resistance of~$H$.
Algorithm~$\mathcal{C}$ runs $\mathcal{A}$ in a real execution of
$\mathsf{Exp}^{\mathsf{neq}}$ with honest PoT and commitment functionality.
When $\mathcal{A}$ halts and event $E_{\mathsf{coll}}$ occurs,
$\mathcal{C}$ again determines the minimal index $j$ such that
$\mathrm{TPL}_j \neq \mathrm{TPL}'_j$, which exists by definition of
$E_{\mathsf{coll}}$.
In a correct execution of TPL, the ledger commitment at index~$j$ is
$C_j = H(\mathrm{TPL}_j)$.
Since the Bitcoin transactions in $\mathcal{M}$ and $\mathcal{M}'$ pass all
checks in $\mathsf{Exp}^{\mathsf{neq}}$, the same commitment value $C_j$
must be embedded in the corresponding on-chain anchors for both ledger
prefixes.
Thus $\mathcal{C}$ obtains two distinct inputs $\mathrm{TPL}_j \neq
\mathrm{TPL}'_j$ with
\[
  H(\mathrm{TPL}_j) = H(\mathrm{TPL}'_j),
\]
which it outputs as a collision.
Hence
\[
  \mathsf{Adv}^{\mathsf{coll}}_{\mathcal{C}}(\lambda)
  = \Pr[E_{\mathsf{coll}}].
\]

Combining the above equalities yields
\[
  \mathsf{Adv}^{\mathsf{neq}}_{\mathcal{A}}(\lambda)
  = \Pr[E_{\mathsf{pot}}] + \Pr[E_{\mathsf{coll}}]
  = \mathsf{Adv}^{\mathsf{pot-forge}}_{\mathcal{B}}(\lambda)
    + \mathsf{Adv}^{\mathsf{coll}}_{\mathcal{C}}(\lambda),
\]
which implies the claimed bound.
In particular, if the PoT scheme is unforgeable and $H$ is
collision-resistant, then $\mathsf{Adv}^{\mathsf{neq}}_{\mathcal{A}}(\lambda)$
is negligible for every PPT~$\mathcal{A}$.
\end{proof}
Policies map the full ledger to observer-specific projections. We model each policy $V_\alpha$ as a deterministic, stateless function from ledger prefixes to views; any randomness lies only in the underlying cryptographic primitives, not in the policy logic.

\begin{theorem}[View correctness]
\label{thm:view-correctness}
Fix a policy $V_\alpha$ and an anchored ledger state
$\mathrm{TPL}_i$ with commitment $C_i$.
Under the assumptions above, the view $\mathrm{View}_\alpha = V_\alpha(\mathrm{TPL}_i)$
returned by an honest execution of $\mathsf{GenView}$ is the \emph{only} view
that will be accepted by $\mathsf{VerifyView}$ for policy~$V_\alpha$ and the
underlying Bitcoin chain, except with negligible probability.
\end{theorem}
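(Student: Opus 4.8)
The plan is to prove uniqueness by reducing to non-equivocation (Theorem~\ref{thm:non-equivocation}), leveraging the fact that $V_\alpha$ is deterministic and stateless. Indeed, view correctness is essentially the specialisation of non-equivocation to a \emph{single fixed} anchored state. I would argue the contrapositive: suppose some PPT adversary $\mathcal{A}$, given the honestly anchored state $(\mathrm{TPL}_i, C_i)$ and the honest view $\mathrm{View}_\alpha = V_\alpha(\mathrm{TPL}_i)$, outputs with non-negligible probability a second view $\mathrm{View}'_\alpha \neq \mathrm{View}_\alpha$ for the same policy that $\mathsf{VerifyView}$ accepts against the same Bitcoin anchors. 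I would then build from $\mathcal{A}$ either a collision finder for $H$ or a non-equivocation adversary, so that a non-negligible success probability contradicts the hypotheses inherited from Theorem~\ref{thm:non-equivocation}, namely collision resistance of $H$ and PoT unforgeability.

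The first key step is to unpack the acceptance condition of $\mathsf{VerifyView}$. By its specification, accepting $\mathrm{View}'_\alpha$ requires reconstructing the commitments referenced in the view, checking that they are anchored at sufficient depth, verifying the bundled PoR/PoT artefacts against those commitments, and recomputing $V_\alpha$ from the implied ledger digest to obtain exactly $\mathrm{View}'_\alpha$. Because the anchored commitment $C_i$ is fixed, acceptance therefore pins $\mathrm{View}'_\alpha$ to some ledger prefix $\mathrm{TPL}'_i$ with $H(\mathrm{TPL}'_i) = C_i$ and $V_\alpha(\mathrm{TPL}'_i) = \mathrm{View}'_\alpha$. I would then split into two cases. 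If $\mathrm{TPL}'_i = \mathrm{TPL}_i$, then determinism and statelessness of $V_\alpha$ force $\mathrm{View}'_\alpha = V_\alpha(\mathrm{TPL}'_i) = V_\alpha(\mathrm{TPL}_i) = \mathrm{View}_\alpha$, contradicting $\mathrm{View}'_\alpha \neq \mathrm{View}_\alpha$; hence this case cannot occur, and we must have $\mathrm{TPL}'_i \neq \mathrm{TPL}_i$.

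The second step handles the remaining case $\mathrm{TPL}'_i \neq \mathrm{TPL}_i$ with $H(\mathrm{TPL}'_i) = C_i = H(\mathrm{TPL}_i)$. This is immediately a collision for $H$, so a reduction $\mathcal{C}$ that outputs the pair $(\mathrm{TPL}_i, \mathrm{TPL}'_i)$ wins $\mathsf{Exp}^{\mathsf{coll}}$ (Definition~\ref{def:hash-coll}) with the same probability. Equivalently, and more in keeping with the structure of the paper, one can feed $(\mathrm{TPL}_i, \mathcal{M})$ and $(\mathrm{TPL}'_i, \mathcal{M}')$ into the non-equivocation experiment of Definition~\ref{def:non-equivocation}: both prefixes are well formed and anchored to the same commitment value, both induced views verify under $V_\alpha$, and they differ, so this is a non-equivocation win. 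By Theorem~\ref{thm:non-equivocation} the resulting advantage is at most $\mathsf{Adv}^{\mathsf{pot\text{-}forge}}_{\mathcal{B}}(\lambda) + \mathsf{Adv}^{\mathsf{coll}}_{\mathcal{C}}(\lambda)$, which is negligible, completing the reduction.

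I expect the main obstacle to lie not in the case analysis but in the first step: pinning down precisely what an accepted view commits to. I would need the specification of $\mathsf{VerifyView}$ to guarantee that acceptance against a fixed anchor genuinely determines a single consistent ledger prefix hashing to $C_i$, rather than merely a partial or policy-filtered fragment. This is where the determinism and statelessness of $V_\alpha$, together with the append-only, hash-chained structure of $\mathrm{TPL}$ from Section~\ref{subsec:tpl-state-machine}, must be invoked to rule out an adversary that passes verification while corresponding to no full prefix or to several mutually incompatible fragments. Once this commitment property is nailed down, the reduction to non-equivocation and collision resistance is essentially mechanical.
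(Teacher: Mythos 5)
Your proposal is correct and follows essentially the same route as the paper's proof: unpack the acceptance condition of $\mathsf{VerifyView}$ to pin the accepted view to a ledger prefix consistent with the anchored commitment $C_i$, invoke collision resistance of $H$ to conclude that this prefix must equal $\mathrm{TPL}_i$ (except with negligible probability), and then use determinism and statelessness of $V_\alpha$ to force the views to coincide. Your contrapositive framing and the optional detour through the non-equivocation experiment are only cosmetic variations on the paper's direct argument, which likewise rests on exactly the two ingredients you identify.
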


\begin{proof}
Let $\mathrm{View}_\alpha$ denote the output of an honest execution of
$\mathsf{GenView}(\mathrm{TPL}_i,\alpha)$ and let
$\mathrm{View}'_\alpha$ be any string such that
$\mathsf{VerifyView}(\mathrm{View}'_\alpha,\alpha,\mathcal{A}_{\mathrm{BTC}})$
accepts.
We must show that, except with negligible probability,
$\mathrm{View}'_\alpha = \mathrm{View}_\alpha$.

By the definition of
$\mathsf{VerifyView}$, acceptance implies that there exists a ledger prefix
$\mathrm{TPL}'_i$ and a commitment value $C'_i$ such that:
(i) the Bitcoin anchors obtained via $\mathcal{A}_{\mathrm{BTC}}$ are valid
and bind to~$C'_i$;
(ii) $\mathrm{TPL}'_i$ is consistent with~$C'_i$ and with the PoR and PoT
artefacts included in the view; and
(iii) recomputing the policy view on $\mathrm{TPL}'_i$ yields the claimed
view, that is,
\[
  \mathrm{View}'_\alpha
  =
  V_\alpha(\mathrm{TPL}'_i).
\]

On the other hand, in the honest execution the anchored commitment associated
with $\mathrm{TPL}_i$ is $C_i = H(\mathrm{TPL}_i)$, and the Bitcoin
anchoring mechanism guarantees that $C_i$ is the unique value embedded in the
relevant on-chain transactions for index~$i$.
By collision resistance of~$H$, there is at most one ledger prefix whose
serialisation hashes to~$C_i$.
Thus, except with negligible probability of finding a collision in~$H$, any
ledger $\mathrm{TPL}'_i$ that is consistent with the same anchors must satisfy
$\mathrm{TPL}'_i = \mathrm{TPL}_i$.

Since $V_\alpha$ is deterministic and stateless by assumption,
it follows that
\[
  \mathrm{View}'_\alpha
  = V_\alpha(\mathrm{TPL}'_i)
  = V_\alpha(\mathrm{TPL}_i)
  = \mathrm{View}_\alpha.
\]
Hence any view accepted by $\mathsf{VerifyView}$ must coincide with the honest
view produced by $\mathsf{GenView}$, except with negligible probability due to
a violation of the underlying assumptions.
\end{proof}
In particular, for any fixed deterministic, stateless policy $V_\alpha$ and anchored prefix $\mathrm{TPL}_i$,
Theorem~\ref{thm:view-correctness} implies that any view accepted in the policy-completeness experiment
(Definition~\ref{def:policy-completeness}) is uniquely determined by the underlying anchored ledger:
if $\mathsf{VerifyView}$ accepts, the accepted view must coincide with the honest output of $\mathsf{GenView}$.
Potential violations of policy completeness therefore arise from omissions in the ledger itself or from the
policy specification, rather than from ambiguity in the verification procedure.
These results do not constitute a complete simulation-based security proof for
TPL, nor do they provide a full analysis of the exposure-soundness and policy-completeness experiments introduced above.
Our theorems deliberately focus on three core integrity guarantees that are sufficient for many practical assurance use cases:
(i) conservation of value across domains, (ii) forward integrity and non-repudiation via non-equivocation of the anchored log,
and (iii) correctness of any accepted policy-based view with respect to a fixed policy.
Stronger properties such as general exposure soundness for arbitrary (possibly open) sets of domains, policy completeness for rich
policy languages, and simulation-based privacy with respect to stakeholder-specific leakage functions remain target notions for
future work.
\section{Responsible Transparency Policies}
\label{sec:responsible}

\begin{definition}[Transparency Policy]
A policy is a view function
\[
V_\alpha: \mathrm{TPL} \to \mathrm{View}_\alpha
\]
mapping the full ledger to an observer-specific projection $\alpha$.

\end{definition}

Throughout, we use the term ``responsible transparency policy'' informally for transparency policies
that are deployed in a way that respect external constraints such as regulatory delay and materiality
requirements; in the formal development below, all of our results only require policies to be deterministic
and stateless, and, where explicitly indicated, faithful in the sense of Definition~\ref{def:faithful-policy}.

In this work we restrict attention to deterministic, stateless policies: for any fixed ledger prefix $\mathrm{TPL}_i$ the view $V_\alpha(\mathrm{TPL}_i)$ is uniquely determined and does not depend on any hidden state or randomness beyond the ledger itself.

Observer classes include:

\begin{itemize}[noitemsep]
\item Internal treasury,
\item Internal audit,
\item External auditors,
\item Regulators,
\item Public markets.
\end{itemize}

Policies encode:

\begin{itemize}[noitemsep]
\item Redaction,
\item Aggregation,
\item Delay rules,
\item Materiality thresholds,
\item Domain masking.
\end{itemize}

In practice, we represent a policy as a small schema specifying these choices.
For concreteness, we describe a simple policy language that is expressive enough for the stakeholder classes above
while remaining easy to analyse.
A policy in this language consists of:
\begin{itemize}[noitemsep]
  \item a predicate $\mathsf{Filter}_\alpha(d,t,e) \in \{\mathsf{true},\mathsf{false}\}$ over domains $d \in \mathcal{D}$, discrete reporting times $t$ (for example, the indices $t_i$ used for snapshots), and event descriptors $e$ drawn from the event types already used in the TPL state machine, which decides whether an event is \emph{visible} to observer class~$\alpha$;
  \item a finite set of externally visible labels $L_\alpha$ together with a relabelling function $\mathsf{Label}_\alpha : \mathcal{D} \to L_\alpha$ capturing domain masking and grouping;
  \item an aggregation operator $\mathsf{Agg}_\alpha$ that maps finite multisets of (masked) events, together with their values, into numeric summaries (for example, sums of balances or flows per label and time bucket);
  \item fixed delay and materiality parameters $(\Delta_\alpha,\theta_\alpha)$ applied uniformly to all events selected by $\mathsf{Filter}_\alpha$, modelling disclosure lags and minimum reportable magnitudes.
\end{itemize}
Given a ledger prefix $\mathrm{TPL}_i$, the view $V_\alpha(\mathrm{TPL}_i)$ in this language is obtained by taking the subsequence of events up to~$i$ that satisfy $\mathsf{Filter}_\alpha$, relabelling their domains via $\mathsf{Label}_\alpha$, grouping them into time buckets according to their timestamps, discarding events whose absolute value is below $\theta_\alpha$, and finally applying $\mathsf{Agg}_\alpha$ subject to the delay parameter~$\Delta_\alpha$.

A typical policy $V_\alpha$ might fix a minimum reporting delay, aggregation horizon (for example, daily or weekly buckets), percentage materiality thresholds, and a mapping from internal domain identifiers to externally visible labels.

\paragraph{Example policies.}
We sketch three stylised policies:
\begin{itemize}[noitemsep]
\item \textbf{Public-market investors.} A delayed, aggregated view that reports total Bitcoin exposure and high-level domain categories (for example, cold storage, custodians, trading venues) with a minimum reporting delay of several days and a materiality threshold of, say,~0.5\% of total holdings.
This reduces the risk of front-running or targeted attacks while still giving investors a time-series of verifiable, policy-consistent disclosures.
\item \textbf{Regulators.} A less delayed view that includes explicit flags for encumbered positions (such as collateral for loans or derivatives), more granular domain breakdowns, and signed references to off-chain legal agreements.
Regulators can use this view to check compliance with concentration limits, segregation rules, or prudential requirements.
\item \textbf{External auditors.} A near-complete view in which most domain-level detail is exposed, subject only to narrow redactions that are not material to the financial statements.
Auditors can reconcile these views with management's assertions and traditional sampling procedures.
\end{itemize}

These examples illustrate how the same underlying ledger can support differentiated disclosures for distinct stakeholder classes.
Each of the three stylised policies can be expressed in this language and is therefore faithful in the sense of Definition~\ref{def:faithful-policy}: what a policy is allowed to inspect is fixed, and its output view is uniquely determined by the visible subsequence of events.
The goal is to maximise transparency subject to explicit security and competitiveness constraints.
\begin{table}[H]
  \centering
  \small
  \begin{tabularx}{\textwidth}{@{}p{2.7cm}X X X@{}}
    \toprule
    Observer class & Filter and labels & Aggregation & Delay and materiality \\
    \midrule
    Public investors &
    Events that change net BTC exposure in on-balance-sheet domains; labels group exposure into ``operational'', ``custody'', and ``collateral'' buckets. &
    Time-bucketed sums (e.g., quarterly net flows and balances per label). &
    Disclosure lag of roughly 30--90 days; materiality threshold around $0.5\%$ of total holdings. \\
    Regulators &
    All events in regulated domains, including encumbrances and off-balance-sheet exposures; labels distinguish segregated vs.\ omnibus holdings and client vs.\ proprietary assets. &
    Per-domain and per-counterparty aggregates, including encumbrance and concentration measures. &
    Disclosure delay between 7 and 30 days; thresholds aligned with prudential and concentration rules. \\
    External auditors &
    Nearly all events except those explicitly carved out as immaterial; fine-grained labels preserve internal structure where needed for sampling. &
    Detailed aggregates supporting full balance-sheet reconciliation and sampling-based audit procedures. &
    Short disclosure delays (on the order of a few days) and low materiality thresholds driven by audit tolerances. \\
    \bottomrule
  \end{tabularx}
  \caption{Illustrative parameter choices for three stakeholder classes (public-investors, regulators, and external auditors) in the simple policy language of Section~\ref{sec:responsible}.}
  \label{tab:example-policies}
\end{table}

\section{Ledger-Native Reporting Layer}
\label{sec:reporting}

The Treasury Proof Ledger is designed to be machine-readable: every PoR snapshot, PoT receipt and policy decision is represented as structured data with stable identifiers.
This enables a ledger-native reporting layer in which automated tools, including large language models, generate narratives, explanations, and dashboards grounded directly in cryptographic evidence.

Conceptually, we distinguish three layers:

\begin{enumerate}[noitemsep]
\item \textbf{Evidence layer.} The committed ledger $\mathrm{TPL}_i$ and its anchored commitments, together with a catalogue of PoR and PoT artefacts and their relationships.
\item \textbf{Query and aggregation layer.} A deterministic engine that implements policy functions $V_\alpha$ and higher-level queries (for example, ``total encumbered exposure at time $t$'' or ``all transfers out of cold storage exceeding 1\% of holdings'').
\item \textbf{Explanation layer.} An application layer that turns query results into human-facing outputs: regulator-ready reports, auditor workpapers, board packs, investor updates, or interactive dashboards.
\end{enumerate}

\paragraph{Illustrative example.}
To make the reporting layer more concrete, consider a simplified treasury with three domains:
cold storage $d_{\mathrm{cold}}$, exchange balances $d_{\mathrm{exch}}$, and encumbered collateral $d_{\mathrm{coll}}$.
Table~\ref{tab:toy-tpl} shows a toy prefix of $\mathrm{TPL}$ with four PoT events and a subsequent snapshot.

\begin{table}[H]
\centering
\begin{tabular}{lllll}
\hline
$i$ & $t_i$ & $d_{\mathrm{src}} \rightarrow d_{\mathrm{dst}}$ & Event & Effect on $B_d(t_i)$ (BTC) \\
\hline
1 & $t_1$ & external $\rightarrow d_{\mathrm{cold}}$ & Treasury acquisition & $B_{\mathrm{cold}}(t_1) \gets 100$ \\
2 & $t_2$ & $d_{\mathrm{cold}} \rightarrow d_{\mathrm{exch}}$ & Funding trading venue & $B_{\mathrm{cold}}(t_2) \gets 60$, $B_{\mathrm{exch}}(t_2) \gets 40$ \\
3 & $t_3$ & $d_{\mathrm{exch}} \rightarrow d_{\mathrm{coll}}$ & Posting collateral & $B_{\mathrm{exch}}(t_3) \gets 10$, $B_{\mathrm{coll}}(t_3) \gets 30$ \\
4 & $t_4$ & $d_{\mathrm{coll}} \rightarrow d_{\mathrm{exch}}$ & Partial collateral release & $B_{\mathrm{coll}}(t_4) \gets 20$, $B_{\mathrm{exch}}(t_4) \gets 20$ \\
\hline
 & $t_5$ & snapshot & $\mathsf{Snapshot}(\mathrm{TPL}_5)$ & $(B_{\mathrm{cold}}(t_5),B_{\mathrm{exch}}(t_5),B_{\mathrm{coll}}(t_5))=(60,20,20)$ \\
\hline
\end{tabular}
\caption{Toy TPL prefix with three domains and one snapshot.}
\label{tab:toy-tpl}
\end{table}

From this single committed prefix, different policies induce different views.
A regulatory policy $V_{\mathsf{reg}}$ might expose the full domain vector $(B_{\mathrm{cold}},B_{\mathrm{exch}},B_{\mathrm{coll}})$ and the encumbrance status of each domain.
A public-investor policy $V_{\mathsf{pub}}$ might instead reveal only
the total Bitcoin exposure $B_{\mathrm{tot}} = 100$ together with an aggregate encumbered amount $B_{\mathrm{enc}} = B_{\mathrm{coll}} = 20$,
without disclosing individual exchange balances or operational flows.
Both views can be re-verified against the same PoT receipts and snapshot commitment, but they support different narrative explanations and satisfy different materiality and confidentiality constraints.
In particular, for deterministic, stateless policies such as $V_{\mathsf{reg}}$ and $V_{\mathsf{pub}}$, the view-correctness
property of Theorem~\ref{thm:view-correctness} guarantees that any third party recomputing these views from the anchored
TPL prefix obtains exactly the same outputs.

To illustrate how this same committed prefix constrains adversarial behaviour,
consider a malicious operator who attempts to omit event~3 from the publicly committed ledger while
still claiming the encumbered balance $B_{\mathrm{coll}} = 20$ at time $t_5$.
Any such attempt either (i) alters the PoT sequence and breaks the non-equivocation and PoT-unforgeability
guarantees of Section~\ref{sec:threat}, or (ii) leaves the sequence intact, in which case the exposure-soundness
games for the closed set $\{d_{\mathrm{cold}},d_{\mathrm{exch}},d_{\mathrm{coll}}\}$ rule out overstating
aggregate exposure relative to the conserved balance $B_{\mathrm{tot}} = 100$.
In this way, even this toy instance captures the interaction between flow completeness, exposure soundness,
and policy-based views.

The explanation layer may use natural-language models, but every sentence in a generated report can be accompanied by precise references back to the underlying PoT receipts and PoR snapshots.
In this sense, the reporting component is constrained by the ledger: it cannot invent exposures or events that are not present in $\mathrm{TPL}_i$, and conflicting narratives can be detected by re-running the same queries on the committed state.

This architecture connects recent proposals to use zero-knowledge proofs and structured cryptographic evidence as a foundation for regulatory compliance~\cite{Decker2025,FedZK2023,ZKSurvey2024} with practical reporting workflows inside public companies.
It also provides a path for regulators to request machine-readable attestations, reducing friction relative to purely document-based supervision while preserving or improving assurance levels.
\section{Discussion}

\subsection{Addressing PoR objections}

TPL is designed to mitigate key PoR issues:

\begin{itemize}[noitemsep]
\item No live wallet disclosure,
\item No real-time operational doxxing,
\item Compatible with custodial privacy requirements,
\item Aggregated by policy rather than dictated by protocol.
\end{itemize}
These benefits remain conditional on the minimal non-collusion assumption (Assumption~\ref{assump:noncollusion}) applied per domain; fully trustless guarantees for arbitrary collusion patterns lie outside our model.

\subsection{Alignment with academic financial models}

Meister and Wen show treasury-company fragility; our system provides a governance backbone enabling more stable market perception.

\subsection{Decisions enabled by TPL}

Beyond mitigating PoR-specific objections, TPL is intended to support concrete decision problems for stakeholders with different mandates. For example, an institutional investor or regulator can, in principle, pose queries such as:
\begin{itemize}[noitemsep]
  \item \emph{Was the institution's publicly claimed unencumbered BTC exposure for a given domain or policy ever violated during a reporting period $I$?}
  \item \emph{Do BTC-denominated liabilities to a particular stakeholder class remain fully backed by on-chain reserves across all admissible interpretations of the treasury ledger?}
  \item \emph{Is the institution's reported aggregate BTC exposure over a horizon $I$ consistent with the set of PoR and PoT events that have been anchored to Bitcoin for that period?}
\end{itemize}
In our formal model these questions are captured, in restricted form, by exposure soundness, non-equivocation and policy completeness. They cannot in general be answered reliably from sporadic, institution-controlled PoR snapshots alone.

\subsection{Complexity and overhead}
\label{sec:complexity-overhead}

Even without a full prototype implementation, we can bound the cost of maintaining and verifying a TPL instance.

\paragraph{Space complexity.}
Let $N$ denote the total number of PoT receipts and PoR snapshots emitted over a given reporting horizon and let $|\mathcal{D}|$ be the number of exposure domains.
Each event contributes $O(1)$ hashes and fixed-size metadata, so the TPL event log requires $O(N)$ storage overall.
Maintaining the current exposure vector $(B_d(t))_{d \in \mathcal{D}}$ requires $O(|\mathcal{D}|)$ space, since each domain stores a single signed integer exposure.
If commitments are 256-bit hashes and typical event metadata is a few hundred bytes, then even for $N$ in the low millions the total storage footprint remains well within the capacity of standard corporate logging infrastructure, and the asymptotic overhead of PoR commitments is additive in the size of the underlying PoR scheme.

\paragraph{Verification cost.}
Recomputing a commitment $C_i$ from an alleged prefix $\mathrm{TPL}_i$ of length $N_i$ requires $O(N_i)$ hash evaluations.
Verifying a policy-based view for a given period amounts to (i) re-evaluating the relevant exposures, (ii) checking the PoR proofs attached to each snapshot in the view, and (iii) checking inclusion of PoT receipts and anchor commitments.
If a particular view touches $N_{\mathsf{view}}$ events, $S_{\mathsf{view}}$ PoR snapshots and $R_{\mathsf{view}}$ PoT receipts, then the total verification time is
$O\!\left(N_{\mathsf{view}} + \mathsf{cost}_{\mathsf{PoR}}(S_{\mathsf{view}}) + \mathsf{cost}_{\mathsf{PoT}}(R_{\mathsf{view}})\right)$,
where $\mathsf{cost}_{\mathsf{PoR}}$ and $\mathsf{cost}_{\mathsf{PoT}}$ denote the cost of verifying the underlying PoR and PoT primitives.
Importantly, observers rarely need to scan the full history: typical investor, auditor or regulatory views can be implemented so that they touch only a small fraction of the total event stream, especially when combined with range proofs that aggregate PoT receipts between snapshots.

\paragraph{Anchoring cost.}
Anchoring every $M$ events amortises Bitcoin fees across multiple commitments.
For example, anchoring monthly with a single \texttt{OP\_RETURN} transaction per anchor yields $O(T/M)$ anchors over a $T$-month horizon, with total on-chain data of a few kilobytes.
Treasuries can choose finer-grained anchoring at the cost of higher fees and more frequent on-chain interactions.

\subsection{Illustrative evaluation and case study}

To ground these bounds in concrete numbers, consider a stylised publicly listed company that holds
$10{,}000$~BTC in its corporate treasury, split across $|\mathcal{D}|=5$ exposure domains:
four \emph{internal} domains (cold storage, collateral, operational hot wallets, and pending settlements) and
one \emph{external} domain representing segregated client accounts at a regulated custodian or exchange.
Assume an average of $N \approx 20{,}000$ PoT events and $12$ PoR snapshots per year, with monthly anchoring to Bitcoin.
Internally, the treasury desk and risk team review daily policy-based views;
an external auditor receives a signed quarterly view; and the securities regulator receives
a lagged, public-investor view at the same quarterly cadence.

Over a representative three-month period this yields on the order of $N/4 \approx 5{,}000$ PoT events and three PoR snapshots.
Each end-of-month snapshot aggregates the previous month's PoT receipts and is anchored once to Bitcoin.
Table~\ref{tab:tpl-quarter} sketches one such quarter: the internal team checks fine-grained movement-justification views,
the custodian checks that its liabilities match the external-domain balances, the auditor replays the quarter from the TPL
prefix, and a macroprudential observer can, in principle, combine the public-investor views from multiple such treasuries.

\begin{table}[H]
\centering
\small
\begin{tabularx}{\linewidth}{@{}l X X X@{}}
\toprule
Month & Events logged & Bitcoin anchor & Primary checkers \\
\midrule
Jan &
$\approx 1{,}600$ PoT receipts; 1 PoR snapshot &
1 transaction with an \texttt{OP\_RETURN} commitment &
Treasury desk and internal risk (daily and monthly views); custodian (end-of-month reconciliation) \\
Feb &
$\approx 1{,}700$ PoT receipts; 1 PoR snapshot &
1 transaction with an \texttt{OP\_RETURN} commitment &
Treasury; external auditor prepares the quarterly replayable view \\
Mar &
$\approx 1{,}700$ PoT receipts; 1 PoR snapshot &
1 transaction with an \texttt{OP\_RETURN} commitment &
Treasury; securities regulator and public-investors receive a lagged quarterly view re-verifiable against Bitcoin \\
\bottomrule
\end{tabularx}
\caption{Illustrative three-month TPL operation for a single corporate treasury with one external custodian.}
\label{tab:tpl-quarter}
\end{table}

\begin{table}[H]
\centering
\scriptsize
\begin{tabularx}{\linewidth}{lXX}
\toprule
Quantity & Stylised value & Expression / comments \\
\midrule
Total PoT events per year & $N \approx 20{,}000$ & Company-scale treasury operations under moderate activity assumptions \\
PoT events per Bitcoin anchor & $\approx N/12 \approx 1{,}700$ & Monthly anchoring with events distributed roughly uniformly over time \\
Anchors over a $T$-year horizon & $12T$ & One \texttt{OP\_RETURN} anchoring transaction per month \\
Total on-chain data over $T$ years & $\approx 12T \cdot s$ bytes & $s$ denotes the payload size per anchor (e.g., $s \in [80,200]$ bytes) \\
Approximate fee range over $T$ years & $[12T \cdot f_{\min} \cdot s,\, 12T \cdot f_{\max} \cdot s]$ satoshis & For fee rates $f_{\min}, f_{\max}$ sat/vByte chosen by the treasury \\
\bottomrule
\end{tabularx}
\caption{Summary resource footprint for the stylised TPL deployment described in Section~\ref{sec:complexity-overhead}.}
\label{tab:tpl-summary}
\end{table}

A minimal prototype that realises this scenario can be implemented as a state-machine engine that ingests PoR snapshots, PoT receipts, and policy events, maintains the domain exposure vector and policy metadata, and periodically derives anchoring commitments and policy-based views.
Such a prototype, coupled with a thin reporting layer that exposes parameterised views and verification routines, would be sufficient to empirically validate the asymptotic storage and verification costs discussed above; we outline one possible architecture in Appendix~\ref{app:prototype}.

If each event record (including hashes, identifiers, encumbrance flags and signatures) is encoded in at most
$500$~bytes and each snapshot in at most $1$~kB, then a single year of operation produces on the order of
$N \cdot 500 \text{ bytes} + 12 \cdot 1\text{ kB} \approx 10 \text{ MB}$
of raw TPL data. At the reporting granularity of quarterly views, this corresponds to roughly $2.5$~MB per quarter, well within the scale of ordinary corporate log-management systems.
Even if we multiply this by a factor of three to account for database indexes, redundancy and auxiliary logs,
a decade of history remains well below a gigabyte, comfortably within the storage capabilities of standard
corporate logging and archiving systems.

On the verification side, checking a one-year prefix $\mathrm{TPL}_i$ for internal audit purposes requires
$O(N)$ hash evaluations, here on the order of $2\cdot 10^4$ invocations of $H$ plus signature checks on the
corresponding PoT receipts.
Generating and verifying a policy-based quarterly view that touches, say, $5{,}000$ events amounts to recomputing
aggregates over those events and checking the relevant Merkle proofs and anchors, again linear in the number of events
visible under the policy.
On commodity hardware, such workloads are easily compatible with interactive dashboards and near-real-time drill-down
by auditors and regulators.

Finally, the anchoring footprint of this deployment is modest.
Monthly commitments over ten years correspond to $120$ Bitcoin transactions with a single \texttt{OP\_RETURN} output each,
for a total on-chain data volume of at most a few kilobytes.
Even under conservative fee assumptions, the direct monetary cost of anchoring remains negligible relative to the size
of the underlying treasury, while providing a stable, publicly verifiable timeline for all subsequent views and reports. For example, even assuming a fee of $100$~sat/vbyte for a $200$-vbyte anchor transaction (about $0.0002$~BTC per anchor), monthly anchoring over a decade consumes only about $0.02$--$0.03$~BTC in total, orders of magnitude smaller than a $10{,}000$~BTC position. Overall, for a medium-sized treasury of this scale, the operational and on-chain overheads of TPL are negligible compared to the assets under management, yet they enable regulators, auditors and public-investors to reason concretely about treasury solvency and aggregate systemic exposure.
\subsection{Bitcoin anchoring and protocol considerations}

Our security arguments rely on an explicit model of how TPL commits to Bitcoin.

\paragraph{Reorganisations and confirmations.}
We assume that an anchor reaching $k$ confirmations is final; We choose $k$ such that the probability of a deeper reorganisation is negligible in the security parameter~$\lambda$ over the time horizon of interest. anchors with depth $< k$ are treated as provisional and may be rolled back if a reorganisation occurs.
In practice, treasuries can require a conservative $k$ when publishing externally-audited reports, while allowing internal processes to work with lower depths for timeliness.

\paragraph{Time and ordering.}
The ledger maintains a logical event index and timestamp for each receipt and snapshot.
Bitcoin block timestamps are used only as coarse external reference points; the total order of events is defined by the TPL itself.
This avoids ambiguities arising from block-time variability and makes replays or re-orderings detectable via inconsistencies between the TPL sequence and anchored commitments.

\paragraph{Fees and protocol losses.}
As noted above, transaction fees and protocol-level losses are accounted for explicitly via the sink domain $d_{\mathrm{fee}}$.
This makes the conservation theorem exact and prevents silent leakage of value through operational costs.

\subsection{Supply-consistency and aggregate claims}

While TPL is defined at the level of a single treasury, the same machinery gives a natural abstraction for
reasoning about \emph{aggregate claims} on Bitcoin across multiple institutions.
Let $\mathcal{L} = \{\mathrm{TPL}^{(1)},\dots,\mathrm{TPL}^{(m)}\}$ denote a family of independently operated
TPL instances, for example corresponding to a set of ETFs, listed corporates and custodians.
For each ledger $\mathrm{TPL}^{(j)}$ and reporting date $t$, a policy-based view induces a well-defined
lower bound $E^{(j)}(t)$ on the institution's net Bitcoin exposure that is re-verifiable against Bitcoin
anchors under the exposure soundness and non-equivocation properties defined in this paper. These guarantees remain conditional on the minimal non-collusion assumption (Assumption~\ref{assump:noncollusion}) for each participating ledger.

We can capture this at a minimal formal level by an aggregate experiment.

\begin{definition}[Aggregate supply-consistency experiment]
\label{def:aggregate-supply}
Fix a family $\mathcal{L} = \{\mathrm{TPL}^{(1)},\dots,\mathrm{TPL}^{(m)}\}$ and a reporting date $t$.
For each $j \in \{1,\dots,m\}$, let $\mathcal{D}_0^{(j)}$ be the subset of domains that a given
policy $V_{\mathsf{exp}}^{(j)}$ treats as in-scope for Bitcoin exposure at time~$t$, and assume that
$\mathcal{D}_0^{(j)}$ is closed over $[0,t]$ in the sense of Definition~\ref{def:closed-domains}.
An \emph{aggregate supply-consistency experiment} outputs
\[
  E_{\mathrm{tot}}(t) \;=\; \sum_{j=1}^{m} E^{(j)}(t)
\]
where each $E^{(j)}(t)$ is the exposure estimate induced by $V_{\mathsf{exp}}^{(j)}$ on the
anchored ledger prefix $\mathrm{TPL}^{(j)}_t$.
The experiment \emph{accepts} if all local verification procedures and view verifications for
$V_{\mathsf{exp}}^{(j)}$ accept for every $j$, and \emph{rejects} otherwise.
\end{definition}

Intuitively, the aggregate experiment considers a family of treasuries that all expose TPL instances and
reports the sum of their individual, policy-based exposure estimates at a given date.

\begin{lemma}[Soundness of aggregate exposure bounds]
\label{lem:aggregate-supply}
Suppose that for each $\mathrm{TPL}^{(j)} \in \mathcal{L}$ the conditions of
Theorem~\ref{lem:restricted-exp-soundness} hold for some exposure policy $V_{\mathsf{exp}}^{(j)}$
and closed domain set $\mathcal{D}_0^{(j)}$.
Assume further that, in the real world, the in-scope domains $\{\mathcal{D}_0^{(j)}\}_{j=1}^m$ jointly
cover all actual on-chain BTC holdings of the participating institutions without double-counting, and
that Bitcoin's protocol-level supply cap is $S_{\max}=21{,}000{,}000$~BTC.
Then in any execution in which the aggregate supply-consistency experiment
(Definition~\ref{def:aggregate-supply}) accepts and all restricted exposure-soundness experiments for
$\{\mathrm{TPL}^{(j)}\}_{j=1}^m$ hold, the aggregate exposure estimate
$E_{\mathrm{tot}}(t)$ satisfies
\[
  E_{\mathrm{tot}}(t) \;\leq\; S_{\max}.
\]
Equivalently, any PPT adversary that produces an accepting execution with
$E_{\mathrm{tot}}(t) > S_{\max}$ with non-negligible probability can be turned into a PPT adversary
that violates restricted exposure soundness for at least one $\mathrm{TPL}^{(j)}$ or the coverage
assumption on $\{\mathcal{D}_0^{(j)}\}_{j=1}^m$.
\end{lemma}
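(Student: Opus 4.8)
The plan is to reduce the aggregate bound to the per-instance restricted exposure-soundness guarantee of Theorem~\ref{lem:restricted-exp-soundness}, a union bound over the $m$ ledgers, and the two real-world hypotheses (non-overlapping coverage and Bitcoin's supply cap). For each $j$, write $\hat E^{(j)}(t) := \sum_{d \in \mathcal{D}_0^{(j)}} B_d(t)$ for the \emph{true} Bitcoin exposure of the in-scope closed domain set of institution~$j$, as induced by the honest ledger $\mathrm{TPL}^{(j)}_t$ together with its underlying PoR snapshots. The reported estimate $E^{(j)}(t)$ is, by Definition~\ref{def:aggregate-supply}, the value that $V_{\mathsf{exp}}^{(j)}$ extracts from the anchored prefix whenever the local verification and view checks accept.

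First I would establish the per-instance inequality $E^{(j)}(t) \le \hat E^{(j)}(t)$. By hypothesis the conditions of Theorem~\ref{lem:restricted-exp-soundness} hold for each $\mathrm{TPL}^{(j)}$, so any accepting execution in which the reported closed-set sum strictly exceeds the honest-ledger sum yields a PPT adversary breaking one of PoR existence/ownership soundness, PoT unforgeability, collision resistance of $H$, or non-equivocation. PoR existence soundness (Definition~\ref{def:por-existence}) further ties each attested on-chain position to a genuine unspent output, so that $\hat E^{(j)}(t)$ is backed by real coins rather than fabricated entries. Hence $\Pr[E^{(j)}(t) > \hat E^{(j)}(t) \text{ and instance } j \text{ accepts}]$ is negligible for every $j$. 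Taking a union bound over $j = 1,\dots,m$, and using that $m$ is polynomial in $\lambda$ (or constant), the probability that \emph{some} instance overstates its closed-set exposure while accepting remains negligible.

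Next I would assemble the aggregate bound. Conditioned on the complement of this negligible event, $E^{(j)}(t) \le \hat E^{(j)}(t)$ holds simultaneously for all $j$, so $E_{\mathrm{tot}}(t) = \sum_{j=1}^m E^{(j)}(t) \le \sum_{j=1}^m \hat E^{(j)}(t)$. The coverage-without-double-counting hypothesis states that $\{\mathcal{D}_0^{(j)}\}$ disjointly cover the actual on-chain BTC of the participating institutions, so $\sum_{j} \hat E^{(j)}(t)$ is at most the total amount of BTC that physically exists on chain at time~$t$, which is in turn bounded by the protocol cap $S_{\max} = 21{,}000{,}000$. Chaining these inequalities gives $E_{\mathrm{tot}}(t) \le S_{\max}$ except with negligible probability. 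For the equivalent reduction form, if an adversary forces an accepting execution with $E_{\mathrm{tot}}(t) > S_{\max}$, then since $\sum_j \hat E^{(j)}(t) \le S_{\max}$ under coverage, there must exist a pigeonhole index $j^\star$ with $E^{(j^\star)}(t) > \hat E^{(j^\star)}(t)$; guessing $j^\star$ (incurring only a $1/m$ factor) and embedding the restricted exposure-soundness challenger for $\mathrm{TPL}^{(j^\star)}$ while internally simulating the remaining instances yields the desired PPT adversary against Theorem~\ref{lem:restricted-exp-soundness}, unless the coverage hypothesis itself fails.

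I expect the main obstacle to be conceptual rather than computational: Theorem~\ref{lem:restricted-exp-soundness} only compares an adversarial ledger against the \emph{honest} ledger of the same institution, whereas the aggregate claim concerns \emph{physical} BTC across independent institutions. Bridging this gap is exactly where the non-cryptographic hypotheses do the heavy lifting --- PoR existence soundness anchors individual ledgers to real UTXOs, but nothing in the cryptographic model prevents two independently operated TPL instances from each honestly attesting to the \emph{same} rehypothecated coins. The no-double-counting / joint-coverage assumption is therefore irreducibly a governance- and audit-level premise, and I would be explicit that the lemma's guarantee is only as strong as that premise: cross-institution double-counting of claims cannot be ruled out cryptographically without a shared, reconciled view of the underlying UTXO set.
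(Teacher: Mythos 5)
Your proposal is correct and follows essentially the same route as the paper's own proof sketch: per-instance application of Theorem~\ref{lem:restricted-exp-soundness} to bound each $E^{(j)}(t)$ by the honest closed-set exposure, summation over $j$, the no-double-counting coverage hypothesis to bound the sum by physically existing BTC, and the protocol cap $S_{\max}$, with the contrapositive giving the reduction form. Your additional details --- the explicit union bound over $m$ instances, the pigeonhole-plus-index-guessing reduction for the ``equivalently'' clause, and especially the observation that Theorem~\ref{lem:restricted-exp-soundness} only compares against the \emph{same} institution's honest ledger so that the bridge to physical BTC (and the impossibility of cryptographically excluding cross-institution rehypothecation) rests entirely on PoR existence soundness plus the coverage assumption --- are refinements the paper's sketch leaves implicit, and they sharpen rather than change the argument.
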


\begin{proof}[Proof sketch]
Under the coverage assumption, the union of in-scope domains
$\{\mathcal{D}_0^{(j)}\}_{j=1}^m$ accounts for all BTC held by the participating institutions.
By Theorem~\ref{lem:restricted-exp-soundness}, each $E^{(j)}(t)$ is a sound exposure estimate for
$\mathcal{D}_0^{(j)}$ in the sense that, except with negligible probability, it does not strictly
overstate the true BTC holdings of those domains while all local checks accept.
Summing over $j$ therefore yields an aggregate estimate $E_{\mathrm{tot}}(t)$ that, except with
negligible probability, does not strictly exceed the total number of BTC actually held by the
participating institutions.
Since the Bitcoin protocol enforces a global supply cap $S_{\max}$, this total is itself bounded
by $S_{\max}$, giving $E_{\mathrm{tot}}(t) \leq S_{\max}$.
Any accepting execution with $E_{\mathrm{tot}}(t) > S_{\max}$ must therefore correspond either to a
violation of restricted exposure soundness for some $\mathrm{TPL}^{(j)}$ or to a failure of the
coverage assumption.
\end{proof}
Because the Bitcoin protocol enforces a fixed maximum supply of $21{,}000{,}000$~BTC, any configuration with
$E_{\mathrm{tot}}(t)$ approaching or exceeding this limit flags either (i) inconsistent or overly aggressive
use of policies, (ii) hidden rehypothecation or double-counting of collateral across ledgers, or
(iii) outright misrepresentation of holdings.
As of late~2025, public data sources report a circulating supply of roughly $19.95$~million~BTC, so
even modest overstatement of exposures by large intermediaries could, in principle, push $E_{\mathrm{tot}}(t)$
close to the protocol cap.

We stress that a full macroprudential treatment of this cross-institutional problem, including strategic partial
participation by adversaries who selectively deploy or omit TPL instances, is outside the scope of
this work.
However, the formal interface that TPL provides, namely anchored, forward-secure logs and verifiable, policy-based
views is deliberately structured so that regulators, standard-setters and independent researchers can
build such aggregate supply-consistency checks on top, using TPL instances as auditable primitives.

\subsection{Forward compatibility}

TPL extends naturally to:

\begin{itemize}[noitemsep]
\item Lightning and rollups,
\item Tokenized treasuries,
\item Bitcoin-backed instruments,
\item Earth--Moon--Mars multi-domain treasuries.
\end{itemize}

\subsection{Novelty relative to prior work}

The TPL differs from existing proof-of-reserves and assurance proposals along several dimensions.
First, it targets long-horizon corporate and public-sector treasuries, where management is a
strategic actor subject to governance and regulatory constraints, rather than short-lived exchange
solvency snapshots.
Second, it treats the treasury as a collection of heterogeneous exposure domains (on-chain,
custodial, derivative and collateral) that must be reasoned about within a single, policy-aware
ledger.
Third, it separates an append-only evidence layer from a machine-checked reporting layer that
exports policy-based stakeholder views, rather than only raw proofs.
Finally, it is deliberately Bitcoin-only at the monetary layer: other assets and instruments enter
only as exposures of the treasury, not as independent monetary substrates.

From a cryptographic point of view, TPL can be understood as a domain-specific accountable logging
and transparency system, in the spirit of secure-logging and transparency-log constructions such as
Certificate Transparency.
It shares with these generic systems the use of append-only, hash-chained commitments and
non-equivocation guarantees, but differs in two main respects.
First, the state space is explicitly structured around multi-domain balance vectors and exposure
flows that satisfy a conservation law.
Second, the primary outputs are policy-based views for heterogeneous stakeholders rather than raw
append-only logs.
The formal treatment of conservation (Theorem~\ref{thm:conservation}), domain-level exposure
soundness (Theorem~\ref{lem:restricted-exp-soundness}) and their interaction with PoR/PoT soundness
in a multi-domain treasury model is, as far as we are aware, not present in prior work on
accountable logging or proof-of-reserves, and appears to be specific to the Bitcoin treasury setting
considered here.

From the vantage point of prior work on exchange-style PoR, Provisions/DAPOL-style schemes, and
Certificate-Transparency-style accountable logs (cf.~Table~\ref{tab:related-comparison}), we are not
aware of any construction that simultaneously
(i) models a full multi-domain treasury state with an explicit conservation law,
(ii) defines deployment-level exposure-soundness and policy-completeness notions over that state, and
(iii) provides verifiable, policy-based stakeholder views that are sound with respect to both the
underlying PoR/PoT primitives and the anchored hash-chain.
In this sense the main novelty of TPL is definitional and compositional: it lifts existing PoR and
transparency mechanisms into a unified, Bitcoin-specific framework that captures the deployment-level
properties regulators and treasury managers care about.
\subsection{Limitations and open questions}

Our proposal has several limitations that bound its applicability and clarify the scope of the current guarantees.

Operationally, the TPL does not eliminate the need for trust in corporate governance, policy choice and
correct modelling of the treasury's business.
Many behaviours remain outside the reach of cryptography: senior management can mis-set policies,
fail to operate the system, or selectively disclose views to external stakeholders.
Our security notions therefore apply to \emph{what is logged and anchored}, not to every aspect of
corporate conduct.

Economically, the ledger can only represent positions that are included in the domain decomposition.
Hidden encumbrances, off-balance-sheet commitments, or complex rehypothecation chains that are not
captured as domains or flows remain outside TPL's security envelope, and the domain-completeness
assumption is essential.
In particular, analyses of hidden encumbrances or rehypothecation that route value through instruments
omitted from the domain decomposition lie outside the guarantees we prove.

On the cryptographic side, the deployment-level notions of exposure soundness and policy completeness
(Definitions~\ref{def:exp-soundness} and~\ref{def:policy-completeness}) are defined in full generality,
but we only prove a subset of the resulting goals.
In this version we establish three core integrity theorems for an idealised TPL instance
(Conservation, non-equivocation/forward integrity, and view correctness; Theorems~\ref{thm:conservation},
\ref{thm:non-equivocation} and~\ref{thm:view-correctness}) and two restricted deployment-level
results: a restricted exposure-soundness theorem for closed sets of domains under faithful policies
(Theorem~\ref{lem:restricted-exp-soundness}) and a restricted privacy theorem for a simple public-investor
policy (Theorem~\ref{thm:pub-privacy}).
The general case of exposure soundness and policy completeness under strategic behaviour, richer policy
languages, and partial deployment across institutions remains an open problem for future work.

On the privacy side, while the design aims to minimise leakage to what is already implied by anchors,
policy-based views and authorised disclosures, we do not yet give a full, simulation-based privacy
treatment for an expressive class of policies.
Theorem~\ref{thm:pub-privacy} should be viewed as an existence result for one natural leakage profile
for public-investors, rather than as a complete characterisation of privacy in TPL-like systems.

\paragraph{Implementation and evaluation.}
An important next step is to implement a prototype TPL engine and empirically evaluate its practicality.
A minimal deployment would expose a programmable interface for ingesting treasury events, constructing and anchoring ledger commits, and generating policy-based views for internal and external observers.
On top of this engine, one can benchmark end-to-end throughput, anchoring latency, storage overhead, and verification cost for both internal auditors and external regulators, under realistic event workloads and regulatory policies.
We leave such an implementation and its experimental evaluation, including comparisons with existing proof-of-reserves deployments and conventional audit workflows, to future work, noting that a prototype of this form would provide strong evidence that the TPL abstraction is not only conceptually sound but also viable in production environments.
In this version we specify protocol syntax and property-based security games, and we formally prove three core integrity properties: conservation across domains, forward integrity and non-repudiation, and view correctness under standard hash-function and anchoring assumptions; a complete simulation-based analysis and a fuller treatment of privacy and strategic behaviour remain future work.
In particular, modelling how strategic actors might shape transaction patterns or reporting schedules, and how TPL interacts with broader governance processes, seems essential for understanding which classes of fraud become significantly more detectable in practice.

We have presented the \emph{Treasury Proof Ledger} (TPL), a Bitcoin-anchored logging framework that
organises proof-of-reserves snapshots, proof-of-transit receipts, and policy metadata into a single,
append-only ledger for Bitcoin treasuries.
Within an explicit system and threat model we specified the protocol syntax and state machine, defined
game-based security notions for treasury transparency, and proved that an idealised TPL construction
achieves three core integrity properties under standard assumptions on hash functions, digital signatures,
the soundness of the underlying PoR and PoT primitives, Bitcoin anchoring, and the minimal non-collusion assumption of Section~\ref{sec:threat}: conservation across domains, forward integrity and non-repudiation of logged events
via non-equivocation, and verifiable policy-based views for deterministic, stateless policies.

Beyond these integrity guarantees we also formalised stronger, policy-dependent goals,
including exposure soundness and policy completeness.
In this paper we prove a restricted exposure-soundness theorem for closed sets of domains under a
faithful exposure policy, and a restricted privacy theorem for a simple public-investor policy,
showing how a leakage function that reveals only aggregate total and encumbered exposure can be
used to formulate a simulation-based privacy guarantee under additional assumptions on the
underlying PoR and PoT primitives.
Identifying concrete PoR, PoT, and signature schemes that realise this leakage profile is left to future work.
These results make the scope of the current formal guarantees explicit and provide a template for
extending the framework to more expressive policies and adversarial behaviours.

As noted in Section~\ref{sec:threat}, the full reductions and additional technical lemmas are provided in the appendices of this version.

Overall, TPL should be viewed as a cryptographic foundation rather than a complete reporting stack.
A natural next step is to instantiate and benchmark TPL with concrete proof-of-reserves schemes at
scale, to develop richer policy languages and tooling for practitioners, and to integrate TPL with
existing accounting, audit, and governance processes.
A full simulation-based privacy treatment and a comprehensive analysis of exposure soundness and
policy completeness in realistic deployment settings (including strategic behaviour and partial
participation) remain open problems for future work.

We leave a full prototype and empirical evaluation to future work; this paper's contribution is the formal framework and security analysis.

\appendix

\section{Cross-institution supply-consistency checks}
\label{sec:supply-consistency}

Beyond single-treasury guarantees, TPL is intended to support system-wide checks that aggregate the exposures of multiple
treasuries and compare them against Bitcoin's protocol-level supply.

Let $T^{(1)},\dots,T^{(n)}$ denote $n$ distinct treasuries, each with its own deployed TPL instance and anchored ledger.
Write $B_{\mathrm{tot}}^{(j)}(t)$ for the total Bitcoin exposure reported by treasury $T^{(j)}$ at logical time~$t$ under a
faithful exposure policy on a closed set of domains, and let $\mathsf{CirculatingBTC}(t)$ denote the circulating Bitcoin
supply at time~$t$ as determined from the underlying Bitcoin chain (for example, total mined minus provably burned coins).

\begin{definition}[Cross-institution $\varepsilon$-supply-consistency]
\label{def:cross-supply-consistency}
Fix a tolerance parameter $\varepsilon \geq 0$ capturing acceptable measurement, timing, and modelling error.
We say that the family of TPL instances $\{T^{(j)}\}_{j=1}^n$ is \emph{$\varepsilon$-supply-consistent at time~$t$} if
\[
  \sum_{j=1}^n B_{\mathrm{tot}}^{(j)}(t)
  \;\leq\;
  \mathsf{CirculatingBTC}(t) + \varepsilon.
\]
\end{definition}

This inequality can be interpreted as a system-wide sanity check: the aggregate Bitcoin-denominated claims of the monitored
treasuries do not exceed the protocol's circulating supply by more than~$\varepsilon$.

\begin{proposition}[TPL-based supply-consistency sanity check]
\label{prop:supply-consistency}
Assume that, for each $j \in \{1,\dots,n\}$, the corresponding TPL instance satisfies the conditions of
Theorem~\ref{thm:ledger-level-security} for some faithful exposure policy on a closed set of domains that determines
$B_{\mathrm{tot}}^{(j)}(t)$.
Assume further that the procedure used to compute $\mathsf{CirculatingBTC}(t)$ from the Bitcoin chain is correct up to
additive error at most $\varepsilon$.

Then, in this restricted model, any probabilistic polynomial-time adversary that outputs anchored ledger prefixes
and associated exposure views for the family $\{T^{(j)}\}_{j=1}^n$ and a time~$t$ such that
\[
  \sum_{j=1}^n B_{\mathrm{tot}}^{(j)}(t)
  \;>\;
  \mathsf{CirculatingBTC}(t) + \varepsilon
\]
with non-negligible probability either:
\begin{enumerate}[noitemsep]
\item exploits a modelling or governance failure (for example, violating the faithfulness or closed-domain
      assumptions, or double-counting the same underlying Bitcoin across multiple treasuries), or
\item yields a PPT algorithm that breaks at least one of the underlying primitives
      (PoR existence/ownership, PoT unforgeability, or collision resistance of~$H$)
      for some $T^{(j)}$.
\end{enumerate}
\end{proposition}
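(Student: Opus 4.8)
The plan is to reduce this cross-institution statement to the per-instance restricted exposure-soundness guarantee packaged in Theorem~\ref{thm:ledger-level-security}, following the same template as Lemma~\ref{lem:aggregate-supply} but with the protocol cap $S_{\max}$ replaced by the time-$t$ circulating supply $\mathsf{CirculatingBTC}(t)$ together with an additive slack $\varepsilon$. First I would fix, for each $j$, the true Bitcoin holdings $H^{(j)}(t)$ of the in-scope closed domain set $\mathcal{D}_0^{(j)}$, and let $F_j$ denote the event that the adversary's accepting view for $T^{(j)}$ reports $B_{\mathrm{tot}}^{(j)}(t) > H^{(j)}(t)$ while all local PoR/PoT and view verifications succeed. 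By alternative~(2)(b) of Theorem~\ref{thm:ledger-level-security}, which itself rests on restricted exposure soundness (Theorem~\ref{lem:restricted-exp-soundness}) and conservation (Theorem~\ref{thm:conservation}), each such over-claim either violates a modelling assumption local to $T^{(j)}$ or yields a PPT break of PoR existence/ownership soundness, PoT unforgeability, or collision resistance of $H$; in particular, once the per-instance modelling assumptions are granted, $\Pr[F_j]$ is negligible in $\lambda$.

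Next I would assemble the aggregate inequality. Conditioned on all modelling assumptions holding and on the no-double-counting coverage hypothesis, for every $j$ we have $B_{\mathrm{tot}}^{(j)}(t) \le H^{(j)}(t)$ except on $F_j$, so that $\sum_{j=1}^n B_{\mathrm{tot}}^{(j)}(t) \le \sum_{j=1}^n H^{(j)}(t)$ off $\bigcup_j F_j$. Because the in-scope domain sets $\{\mathcal{D}_0^{(j)}\}$ cover distinct underlying coins, $\sum_j H^{(j)}(t)$ equals the true number of coins held by the monitored institutions, which is at most the true circulating supply at time $t$; and since the supply-computation procedure is correct up to additive error $\varepsilon$, that true supply is at most $\mathsf{CirculatingBTC}(t) + \varepsilon$. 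Chaining these bounds gives $\sum_j B_{\mathrm{tot}}^{(j)}(t) \le \mathsf{CirculatingBTC}(t) + \varepsilon$ with overwhelming probability. A union bound over the (assumed polynomially many) instances then shows that the event $\sum_j B_{\mathrm{tot}}^{(j)}(t) > \mathsf{CirculatingBTC}(t) + \varepsilon$ intersected with ``all modelling assumptions hold'' has probability at most $\sum_j \Pr[F_j]$, which is negligible. Hence any non-negligible success probability must be explained either by failure of a modelling or governance assumption (alternative~(1)) or by some $F_j$ occurring with non-negligible probability, in which case the reduction from Theorem~\ref{thm:ledger-level-security} extracts a primitive break (alternative~(2)).

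The hard part is conceptual rather than computational: per-instance exposure soundness is fundamentally blind to cross-instance double-counting. The guarantee of Theorem~\ref{thm:ledger-level-security} only prevents each $T^{(j)}$ from overstating its own in-scope holdings and cannot, by itself, detect that the same UTXO or a rehypothecated coin is simultaneously claimed inside $\mathcal{D}_0^{(j)}$ and $\mathcal{D}_0^{(j')}$ for $j \neq j'$. This is precisely why the inequality $\sum_j H^{(j)}(t) \le \mathsf{CirculatingBTC}(t) + \varepsilon$ cannot be obtained cryptographically and must instead be imported as the coverage / no-double-counting hypothesis, and why any attack operating purely through such double-counting is deliberately routed into alternative~(1) rather than yielding a primitive break. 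The remaining steps, namely the $n$ per-instance reductions and the union bound, are routine given Theorem~\ref{thm:ledger-level-security}, so I would state them briefly and concentrate the exposition on carefully delimiting the coverage assumption, on confirming that the $\varepsilon$-slack from supply measurement composes additively with the exposure bounds, and on recording the implicit requirement that $n = \mathrm{poly}(\lambda)$ so that the union bound preserves negligibility.
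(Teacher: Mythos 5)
Your proposal is correct and takes essentially the same route as the paper, which justifies this proposition only by an intuition paragraph and by the analogous proof sketch of Lemma~\ref{lem:aggregate-supply} that you explicitly mirror: reduce to per-instance restricted exposure soundness via Theorem~\ref{thm:ledger-level-security}(2)(b), treat the no-double-counting / coverage condition as a non-cryptographic hypothesis whose failure is routed into alternative~(1), and chain the sum of true holdings against the circulating supply with the additive $\varepsilon$ slack. If anything, you are more careful than the paper, which never spells out the per-instance events, the union bound, or the implicit requirement that $n = \mathrm{poly}(\lambda)$ for negligibility to survive aggregation.
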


\noindent
Intuitively, if each individual TPL instance cannot inflate its reported exposure beyond the underlying Bitcoin it
controls (by restricted exposure-soundness), then a cross-institution violation of the global inequality must come
either from collusive or inconsistent modelling choices across institutions or from an attack on the underlying
cryptographic mechanisms.
A regulator or systemic-risk monitor can therefore interpret sustained violations as strong evidence of either
governance problems or cryptographic compromise.

As a concrete illustration, suppose three treasuries $T^{(1)},T^{(2)},T^{(3)}$ report total Bitcoin exposures
$B_{\mathrm{tot}}^{(1)}(t)=10{,}000$, $B_{\mathrm{tot}}^{(2)}(t)=15{,}000$, and $B_{\mathrm{tot}}^{(3)}(t)=5{,}000$ at some
reference time~$t$ under faithful exposure policies on closed domain sets, and the circulating-supply oracle reports
$\mathsf{CirculatingBTC}(t) \approx 19{,}600{,}000$~BTC. For any reasonable tolerance parameter $\varepsilon$ (for example,
accounting for modest timing and measurement error), the inequality of Definition~\ref{def:cross-supply-consistency}
is comfortably satisfied. By contrast, if a family of large treasuries collectively claimed exposures within a few
percent of, or even above, the protocol-level circulating supply, then under the assumptions of
Proposition~\ref{prop:supply-consistency} this would be a strong signal that either the underlying primitives have been
compromised or that economically significant positions are being double-counted or omitted.

In practice, a supervisor can treat the tolerance~$\varepsilon$ as an explicit policy parameter that absorbs known
sources of unavoidable discrepancy (such as delayed reporting, rounding conventions, or incomplete coverage of small
institutions) while still flagging sustained or growing violations as indicators of systemic-risk build-up.

\section{Implementation Notes and Prototype Sketch}
\label{app:prototype}

A prototype implementation of TPL can follow the state-machine structure in Section~\ref{sec:tpl} while remaining
compatible with existing treasury, custody, and audit systems:

\begin{itemize}[noitemsep]
\item \textbf{State-machine engine.} A core service ingests PoR snapshots, PoT receipts, and policy events, validates
      signatures, updates the domain exposure vector (including the fee domain $d_{\mathrm{fee}}$), and appends
      authenticated events to an append-only ledger.
\item \textbf{Anchoring module.} A background process batches commitments $C_i$ for recent prefixes of the ledger,
      constructs Bitcoin anchoring transactions (for example, via \texttt{OP\_RETURN} or Taproot commitments), and
      tracks confirmation depth to expose the $\mathcal{A}_{\mathrm{BTC}}$ interface used in our security games.
\item \textbf{Policy engine and reporting API.} A policy layer implements the language from Section~\ref{sec:responsible},
      realising $\mathsf{GenView}$ and $\mathsf{VerifyView}$ for a small catalogue of stakeholder policies
      (board, regulator, public-investors).
      It exposes parameterised reporting endpoints that generate policy-based views and their associated proofs.
\item \textbf{Storage backend.} The event log and derived state can be stored in a log-structured store or relational
      database with indexes on time, domain, and policy tags, allowing efficient generation of period-based views
      and drill-down queries.
\item \textbf{Pluggable PoR/PoT adapters.} Adapters connect TPL's abstract PoR and PoT interfaces to concrete schemes,
      such as Merkle-tree PoR in the style of Provisions and an accountable PoT service with hash-chained receipts.
\end{itemize}

\paragraph{Instantiation A: Merkle-tree PoR, PoT, and zk-SNARK views.}
One concrete instantiation of the abstract interfaces in Sections~\ref{sec:model}--\ref{sec:responsible} can be built from
off-the-shelf components:

\begin{itemize}[noitemsep]
\item \emph{PoR.} For each domain that holds on-chain Bitcoin or custodial liabilities, the treasurer runs a Merkle-tree
      PoR scheme in the style of Provisions~\cite{Dagher2015}, committing to the set of relevant UTXOs or customer
      liabilities.
      The Merkle roots are included in the $\mathsf{evid}$ field of snapshot events and contribute to the computation of
      $B_d(t)$ for the affected domains.
\item \emph{PoT.} Each treasury event $e_i$ carries a PoT receipt obtained from an accountable logging service that
      maintains a hash-chained sequence of receipts.
      The message hashed into the chain includes the event metadata $(t_i,d_{\mathrm{src}},d_{\mathrm{dst}},v)$, references
      to the relevant PoR roots, and a pointer to the previous receipt.
      Periodically, the logging service anchors the head of the hash chain into Bitcoin via an \texttt{OP\_RETURN} or
      Taproot commitment; the anchored commitments are exactly the $C_i$ used in our security games.
\item \emph{Policy proofs.} For policies that require hiding internal flows or domain-level balances, the treasurer can
      generate succinct zero-knowledge proofs using a general-purpose zk-SNARK.
      The public inputs to the zk-SNARK consist of the Bitcoin anchoring transactions and PoR/PoT roots,
      while the private witness consists of the internal ledger state.
      The statement proved is that the policy-based view $V_\alpha(\mathrm{TPL}_i)$ is a correct function of a ledger
      state consistent with the anchored commitments and PoR/PoT artefacts.
\end{itemize}

With modern zk-SNARK systems, the resulting proofs can be kept small (on the order of kilobytes) and can be verified in
milliseconds to seconds on commodity hardware, even for views that aggregate many underlying events.
The precise choice of PoR, PoT, and zk-SNARK scheme affects concrete parameters but does not change the TPL abstraction
or the security definitions.

\paragraph{Toy deployment and cost sketch.}
To give a concrete sense of scale, consider a listed company that:
\begin{itemize}[noitemsep]
\item maintains $5$ exposure domains (for example cold storage, two exchanges, collateral, and derivatives),
\item records on the order of $10^3$ balance-affecting treasury events per day into the PoT stream,
\item obtains PoR snapshots for all domains once per day, and
\item anchors a commitment $C_i$ to Bitcoin once per day.
\end{itemize}
In such a deployment, the TPL log would accumulate roughly $3.6 \times 10^5$ events and $365$ PoR snapshots per year.
Assuming a few hundred bytes per event and kilobytes per snapshot, the raw log size is on the order of a few hundred
megabytes per year, well within the capabilities of standard enterprise storage.
A daily anchoring cadence yields $365$ anchoring transactions per year; even at conservative feerates this is negligible
compared to typical treasury transaction volumes.
Verification workloads are similarly modest: reconstructing a policy-based view for a reporting period reduces to hashing
and signature checks over a few thousand events plus verification of the associated PoR and PoT artefacts.
These back-of-the-envelope figures are not a substitute for a full performance evaluation, but they indicate that the
framework can be instantiated with realistic reporting frequencies without imposing prohibitive storage or fee costs.
Even a minimal prototype with these components is sufficient to evaluate throughput, storage growth, and verification
latency under realistic event volumes, complementing the asymptotic bounds and stylised case study presented in
Section~\ref{sec:reporting} and the Discussion section.

\section{Glossary of Key Terms}

\begin{description}
  \item[Policy-based view] A selectively disclosed projection of the TPL log generated according to a specified policy.
  \item[Exposure soundness] Integrity property ensuring that domain balances implied by the log are internally consistent.
  \item[Forward integrity] Property ensuring past states cannot be modified after commitment.
  \item[Non-equivocation] Property ensuring a single consistent log prefix for all parties.
  \item[PoR] Proof-of-Reserves snapshot of Bitcoin holdings.
  \item[PoT] Proof-of-Transit receipt chained in the TPL log.
\end{description}

\section{Terminology Table}
\begin{tabular}{p{0.25\linewidth} p{0.65\linewidth}}
\textbf{Term} & \textbf{Meaning} \\
\hline
Policy-based view & Selective disclosure view derived from a ledger prefix $\mathrm{TPL}_i$ according to a policy $V_\alpha$. \\
Exposure soundness & Inability to overstate total BTC exposure across domains while all PoR, PoT, and anchoring checks succeed. \\
Forward integrity & Past states cannot be modified once committed. \\
Non-equivocation & A single consistent log prefix across all verifiers. \\
PoR snapshot & Periodic proof-of-reserves attestation of Bitcoin holdings. \\
PoT receipt & Hash-chained proof-of-transit record. \\
\end{tabular}

\section{Notational Summary}
\begin{tabular}{p{0.25\linewidth} p{0.65\linewidth}}
\textbf{Symbol} & \textbf{Definition} \\
\hline
$\mathcal{D}$ & Set of treasury domains. \\
$e$ & Treasury event $(t,d_{\mathrm{src}},d_{\mathrm{dst}},v,\mathsf{evid},m)$ and derived digest $h = H(\mathsf{evid} \,\Vert\, v \,\Vert\, m)$. \\
$\mathrm{TPL}_i$ & Ledger prefix after $i$ abstract inputs (events, snapshot triggers, anchors). \\
$C_i$ & Ledger commitment $H(\mathrm{TPL}_i)$. \\
$B_d(t)$ & Exposure of domain $d$ at time $t$. \\
$H$ & Collision-resistant hash function. \\
\end{tabular}

\section{Non-Equivocation Game and Reduction Details}

\begin{figure}[t]
  \centering
  \resizebox{\linewidth}{!}{%
\begin{tikzpicture}[>=Stealth, node distance=1.8cm,
                      every node/.style={draw, rounded corners, align=center,
                                        minimum width=4.8cm, minimum height=1cm}]
    \node (cons) {Conservation across domains\\(Theorem~\ref{thm:conservation})};
    \node[right=of cons] (neq) {Non-equivocation of TPL\\(Theorem~\ref{thm:non-equivocation})};
    \node[right=of neq] (view) {View correctness for policies\\(Theorem~\ref{thm:view-correctness})};

    \node[below=1.8cm of neq] (exp) {Restricted exposure soundness\\(Theorem~\ref{lem:restricted-exp-soundness})};
    \node[below=1.8cm of view] (priv) {Restricted privacy for public-investors\\(Theorem~\ref{thm:pub-privacy})};

    \draw[->] (cons) -- (neq);
    \draw[->] (neq) -- (view);
    \draw[->] (cons.south) |- (exp.west);
    \draw[->] (neq.south) -- (exp.north);
    \draw[->] (view.south) -- (priv.north);
    \draw[->] (exp.east) -- (priv.west);
  \end{tikzpicture}
}
  \caption{Roadmap of the main reduction arguments.
  Conservation establishes that the total exposure across domains is invariant under honest operation.
  Non-equivocation and view correctness ensure that policy-based views are uniquely determined by the anchored ledger.
  Restricted exposure soundness shows that any adversary who increases exposure must break at least one underlying primitive,
  and the restricted privacy theorem builds on this for a specific public-investor policy.}
  \label{fig:proofs-roadmap}
\end{figure}
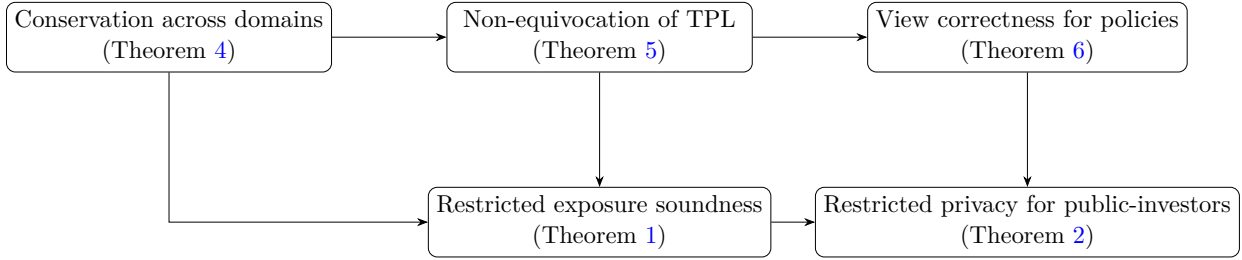
\label{app:neq-reduction}

In this appendix we restate the non-equivocation property using an explicit oracle game and sketch the standard reduction to proof-of-transit unforgeability and hash collision resistance.

\subsection*{Game $\mathsf{Exp}^{\mathsf{neq}}_{\mathcal{A}}(\lambda)$}

For a PPT adversary $\mathcal{A}$ and security parameter $\lambda$, the experiment $\mathsf{Exp}^{\mathsf{neq}}_{\mathcal{A}}(\lambda)$ proceeds as follows.
The challenger runs $\mathsf{Setup}(1^\lambda)$ to obtain public parameters $\mathsf{pp}$, an initial ledger prefix $\mathrm{TPL}_0$, and its commitment $C_0$.
The adversary receives $\mathsf{pp}$ and oracle access to the public TPL interface:
\begin{itemize}
  \item $\mathcal{O}_{\mathsf{AppendEvent}}(e)$ appends an abstract input $e$ to the log, extending $\mathrm{TPL}_i$ to $\mathrm{TPL}_{i+1}$ and returning any public artefacts.
  \item $\mathcal{O}_{\mathsf{Snapshot}}()$ triggers a PoR snapshot according to the construction in Section~\ref{sec:tpl}.
  \item $\mathcal{O}_{\mathsf{Anchor}}()$ publishes an anchor for the current ledger commitment to the underlying blockchain once the anchoring policy allows it.
  \item $\mathcal{O}_{\mathsf{GenView}}(\alpha,t)$ returns the view $V_\alpha(\mathrm{TPL}_i)$ for observer class $\alpha$ at (logical) time~$t$.
\end{itemize}
At the end of its interaction $\mathcal{A}$ outputs two tuples $(\mathrm{TPL}_i,C_i)$ and $(\mathrm{TPL}'_i,C'_i)$, together with an observer class~$\alpha$.
The experiment checks that both prefixes are well-formed, that $C_i$ and $C'_i$ are valid anchored commitments, and that both induced views for~$\alpha$ are accepted by the verification algorithm, yet $V_\alpha(\mathrm{TPL}_i) \neq V_\alpha(\mathrm{TPL}'_i)$.
If all checks pass, the experiment outputs $1$.
The non-equivocation advantage of~$\mathcal{A}$ is
\[
  \mathsf{Adv}^{\mathsf{neq}}_{\mathcal{A}}(\lambda)
  = \Pr\big[\mathsf{Exp}^{\mathsf{neq}}_{\mathcal{A}}(\lambda)=1\big].
\]

\subsection*{Reduction outline}

The proof of Theorem~\ref{thm:non-equivocation} can be seen as a standard reduction argument.
Assume there exists a PPT adversary $\mathcal{A}$ with non-negligible non-equivocation advantage.
We construct PPT adversaries $\mathcal{B}$ and $\mathcal{C}$ such that
\[
  \mathsf{Adv}^{\mathsf{neq}}_{\mathcal{A}}(\lambda)
  \leq
  \mathsf{Adv}^{\mathsf{pot-forge}}_{\mathcal{B}}(\lambda)
  +
  \mathsf{Adv}^{\mathsf{coll}}_{\mathcal{C}}(\lambda)
  + \mathsf{negl}(\lambda).
\]
Adversary $\mathcal{B}$ plays the PoT-forgery game and internally simulates the TPL oracles for~$\mathcal{A}$ using its own access to the PoT signing oracle.
When $\mathcal{A}$ outputs a winning pair of prefixes, $\mathcal{B}$ identifies the first index at which the sequences of PoT receipts diverge.
If this divergence yields a valid receipt chain that is not a prefix of the honest chain, $\mathcal{B}$ outputs it as a PoT forgery.
Adversary $\mathcal{C}$ plays the hash-collision game for $H$ and again simulates the TPL interface for~$\mathcal{A}$.
If the PoT receipts agree on both prefixes but the ledger contents differ at some minimal index, then, by construction of the commitments, $\mathcal{C}$ obtains two distinct inputs that hash to the same value and outputs them as a collision for $H$.
Standard arguments then show that if both the underlying PoT scheme and the hash function are secure, the non-equivocation advantage must be negligible.

\section{Restricted Exposure-Soundness Reduction}
\label{app:exp-soundness-reduction}

In this appendix we give a full reduction proof of Theorem~\ref{lem:restricted-exp-soundness}.
Recall that the unrestricted exposure-soundness notion is captured by
Definition~\ref{def:exp-soundness}, whose experiment
$\mathsf{Exp}^{\mathsf{exp\mbox{-}sound}}_{\mathcal{A}}$ allows the adversary to manipulate
all domains in~$\mathcal{D}$.
In the restricted setting of Theorem~\ref{lem:restricted-exp-soundness} we fix a non-empty
subset $\mathcal{D}_0 \subseteq \mathcal{D}\setminus\{d_{\mathrm{fee}}\}$ that is closed over~$[0,t_i]$
and a faithful policy $V_{\mathsf{exp}}$ that either reveals the vector
$(B_d(t_i))_{d\in\mathcal{D}_0}$ or its algebraic sum.
We consider the obvious restriction of the exposure-soundness experiment to $\mathcal{D}_0$,
which we denote by $\mathsf{Exp}^{\mathsf{exp\mbox{-}sound},\mathcal{D}_0}_{\mathcal{A}}(\lambda)$.
The only difference from Definition~\ref{def:exp-soundness} is that, in the final check, the
experiment tests
\[
  \sum_{d\in\mathcal{D}_0} B_d^\star(t_i) \;>\;
  \sum_{d\in\mathcal{D}_0} B_d(t_i)
\]
rather than summing over all $d\in\mathcal{D}$.

\begin{lemma}
\label{lem:restricted-exp-reduction}
Assume that the PoR primitive is existence- and ownership-sound, the PoT construction
satisfies Definition~\ref{def:pot-unforgeability}, and $H$ is collision-resistant.
Assume also that the minimal non-collusion assumption of Section~\ref{sec:threat} holds
for every domain in $\mathcal{D}_0$.
Then for every PPT adversary $\mathcal{A}$ there exist PPT adversaries
$\mathcal{B}_{\mathsf{por}}$, $\mathcal{B}_{\mathsf{pot}}$, and $\mathcal{B}_{H}$ such that
\[
  \mathsf{Adv}^{\mathsf{exp\mbox{-}sound},\mathcal{D}_0}_{\mathcal{A}}(\lambda)
  \;\leq\;
  \mathsf{Adv}^{\mathsf{por}}_{\mathcal{B}_{\mathsf{por}}}(\lambda)
  + \mathsf{Adv}^{\mathsf{pot\mbox{-}forge}}_{\mathcal{B}_{\mathsf{pot}}}(\lambda)
  + \mathsf{Adv}^{\mathsf{coll}}_{\mathcal{B}_{H}}(\lambda)
  + \mathsf{negl}(\lambda),
\]
where $\mathsf{Adv}^{\mathsf{por}}$ denotes the combined existence/ownership-soundness
advantage of a PoR adversary, $\mathsf{Adv}^{\mathsf{pot\mbox{-}forge}}$ the advantage in
the PoT-forgery game of Definition~\ref{def:pot-unforgeability}, and
$\mathsf{Adv}^{\mathsf{coll}}$ the collision-finding advantage against~$H$.
\end{lemma}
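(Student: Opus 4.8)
The plan is to give a quantitative refinement of the case analysis already sketched for Theorem~\ref{lem:restricted-exp-soundness}, organised as a single event partition followed by three witness-extraction reductions. Fix a PPT adversary $\mathcal{A}$ and write $E$ for the event that $\mathcal{A}$ wins $\mathsf{Exp}^{\mathsf{exp\mbox{-}sound},\mathcal{D}_0}_{\mathcal{A}}(\lambda)$, so that $\mathsf{Adv}^{\mathsf{exp\mbox{-}sound},\mathcal{D}_0}_{\mathcal{A}}(\lambda) = \Pr[E]$. Because $\mathcal{D}_0$ is closed over $[0,t_i]$, Theorem~\ref{thm:conservation} pins down $S(t) := \sum_{d\in\mathcal{D}_0} B_d(t)$ as a function of the initial exposures together with the net flow between $\mathcal{D}_0$ and $d_{\mathrm{fee}}$, and Lemma~\ref{lem:flow-completeness} (which invokes minimal non-collusion) shows this sum is a \emph{deterministic} function of the PoR snapshots and PoT receipts logged for $\mathcal{D}_0\cup\{d_{\mathrm{fee}}\}$ on each snapshot interval. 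Hence, on $E$, where the adversarial prefix yields $S^\star(t_i) > S(t_i)$, I can locate an earliest index $j \le i$ at which the evidence (or serialised ledger content) implied by $\mathrm{TPL}^\star_i$ first diverges from the honest $\mathrm{TPL}_i$.

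First I would partition $E$ into disjoint sub-events according to the type of this first divergence: $E_{\mathsf{por}}$ (a shared, honestly generated snapshot commitment admits a verifying inclusion proof for an exposure outside the real snapshot set), $E_{\mathsf{pot}}$ (a PoT receipt in the adversarial chain is accepted yet is not a prefix of the honest chain), $E_{\mathsf{coll}}$ (all PoR/PoT artefacts touching $\mathcal{D}_0$ agree but the serialised prefixes differ under a common anchored commitment), and a residual event $E_{\mathsf{reorg}}$ in which $\mathcal{A}$ anchors a conflicting commitment beyond $k$-confirmation finality. By exhaustiveness this gives $\Pr[E] \le \Pr[E_{\mathsf{por}}] + \Pr[E_{\mathsf{pot}}] + \Pr[E_{\mathsf{coll}}] + \Pr[E_{\mathsf{reorg}}]$, and Assumption~\ref{assump:anchoring-substrate} bounds $\Pr[E_{\mathsf{reorg}}]$ by a negligible function of $\lambda$, contributing the $\mathsf{negl}(\lambda)$ term.

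Next I would build the three reductions, each simulating the honest TPL execution for $\mathcal{A}$ end-to-end and reading the witness off $\mathcal{A}$'s final output, so that the forgery is located in the full transcript rather than guessed in advance. $\mathcal{B}_{\mathsf{por}}$ plays the existence/ownership game of Definitions~\ref{def:por-existence}--\ref{def:por-ownership} against the snapshots it generates (a multi-snapshot instance, reducible to the single-snapshot notion by a standard hybrid over the polynomially many snapshots) and on $E_{\mathsf{por}}$ extracts the verifying tuple $(x^\star,a^\star,\pi^\star)$, giving $\Pr[E_{\mathsf{por}}] \le \mathsf{Adv}^{\mathsf{por}}_{\mathcal{B}_{\mathsf{por}}}(\lambda)$; $\mathcal{B}_{\mathsf{pot}}$ answers append queries through its own PoT oracle and on $E_{\mathsf{pot}}$ outputs the divergent receipt chain as a forgery, giving $\Pr[E_{\mathsf{pot}}] \le \mathsf{Adv}^{\mathsf{pot\mbox{-}forge}}_{\mathcal{B}_{\mathsf{pot}}}(\lambda)$; and $\mathcal{B}_{H}$ outputs the two distinct prefixes $\mathrm{TPL}_j \ne \mathrm{TPL}'_j$ with $H(\mathrm{TPL}_j)=H(\mathrm{TPL}'_j)$ that $E_{\mathsf{coll}}$ guarantees, giving $\Pr[E_{\mathsf{coll}}] \le \mathsf{Adv}^{\mathsf{coll}}_{\mathcal{B}_{H}}(\lambda)$. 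Any non-equivocation-flavoured sub-case that might otherwise arise in the commitment/anchor bucket is folded into these same two terms via the bound of Theorem~\ref{thm:non-equivocation}. Summing the four inequalities yields the claimed bound.

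The step I expect to be the main obstacle is not the individual reductions, which are routine, but establishing exhaustiveness of the partition: showing rigorously that a strictly inflated reported sum $S^\star(t_i) > S(t_i)$ must surface as a concrete artefact-level discrepancy landing in exactly one bucket. This is precisely where closedness of $\mathcal{D}_0$, conservation, and Lemma~\ref{lem:flow-completeness} must be combined carefully: flow completeness says the balances are determined by the logged PoR/PoT evidence, so if that evidence agreed entirely the sums would agree, which contradicts the hypothesis of $E_{\mathsf{coll}}$ unless the commitment binding itself fails. Making this ``evidence determines balances, balances differ, hence evidence or binding differs'' implication airtight, and verifying that the honest-role guarantee from minimal non-collusion genuinely forces each first-divergence type into an attackable primitive rather than into an unmodelled governance gap, is the delicate part of the argument.
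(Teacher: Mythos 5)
Your proposal is correct and follows essentially the same route as the paper's own proof: partition the winning event by the type of the \emph{first} divergence between the honest and adversarial transcripts (PoR snapshot, PoT receipt, or serialised-prefix/commitment discrepancy), extract a witness for the corresponding primitive break via a simulating reduction in each case, and conclude by a union bound, with the ``evidence determines balances'' step supplied by closedness, Theorem~\ref{thm:conservation}, and the non-collusion assumption (your explicit appeal to Lemma~\ref{lem:flow-completeness} is exactly what the paper uses implicitly). Your two refinements --- carving out an explicit $E_{\mathsf{reorg}}$ bucket bounded by Assumption~\ref{assump:anchoring-substrate} to account for the $\mathsf{negl}(\lambda)$ term, and the hybrid over polynomially many snapshots to reduce to the single-snapshot PoR game --- are sensible additions that the paper glosses over, but they do not change the structure of the argument.
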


\begin{proof}
Fix a closed set $\mathcal{D}_0$ and a time $t_i$.
Let $\mathcal{A}$ be any PPT adversary for the restricted exposure-soundness game and
write $E$ for the event that
$\mathsf{Exp}^{\mathsf{exp\mbox{-}sound},\mathcal{D}_0}_{\mathcal{A}}(\lambda)=1$.
By Definition~\ref{def:closed-domains} and Theorem~\ref{thm:conservation}, for every
$t\leq t_i$ the algebraic sum
\[
  S(t) \;:=\; \sum_{d\in\mathcal{D}_0} B_d(t)
\]
is uniquely determined by the initial exposures of domains in~$\mathcal{D}_0$ and the
net flow of value between $\mathcal{D}_0$ and $d_{\mathrm{fee}}$ up to time~$t$.
The minimal non-collusion assumption guarantees that this net flow is faithfully
reflected in the logged PoR snapshots and PoT receipts for domains in
$\mathcal{D}_0 \cup\{d_{\mathrm{fee}}\}$.

In an honest execution these artefacts induce some canonical value $S(t_i)$.
On event~$E$, the adversary outputs an alternative ledger prefix $\mathrm{TPL}^\star_i$
together with anchors and artefacts such that
\[
  S^\star(t_i) \;:=\; \sum_{d\in\mathcal{D}_0} B_d^\star(t_i) \;>\; S(t_i)
\]
while all local PoR, PoT, and anchoring checks accept and the policy-based view for
$V_{\mathsf{exp}}$ is verifying.
Consider the earliest logical time $t_k\leq t_i$ at which the transcript of artefacts
affecting $\mathcal{D}_0\cup\{d_{\mathrm{fee}}\}$ differs between $\mathrm{TPL}_i$ and
$\mathrm{TPL}^\star_i$.
There are three exhaustive and mutually exclusive possibilities, depending on the
type of the first discrepancy.

\paragraph{Case 1: PoR discrepancy.}
At time $t_k$ the first difference is a PoR snapshot for some domain in~$\mathcal{D}_0$.
By construction, the snapshot in $\mathrm{TPL}^\star_i$ must be accepted by all PoR
verification procedures, and the induced balances for $\mathcal{D}_0$ must eventually
lead to $S^\star(t_i) > S(t_i)$.
We construct an adversary $\mathcal{B}_{\mathsf{por}}$ for the PoR existence/ownership
soundness game as follows.
$\mathcal{B}_{\mathsf{por}}$ internally runs $\mathcal{A}$ and perfectly simulates
the TPL interface, except that whenever the first discrepant PoR snapshot would be
produced, it instead uses the PoR challenger to obtain the corresponding commitment
and proof.
Because the remainder of the simulation is honest and $t_k$ is the first point of
divergence, the view of $\mathcal{A}$ in this simulation is (up to negligible
differences coming from the PoR game interface) identical to its view in the
original experiment.
Whenever $\mathcal{A}$ wins, $\mathcal{B}_{\mathsf{por}}$ outputs the mismatched
snapshot and its proof as a PoR forgery.
By definition of the PoR game, this snapshot either certifies the existence of
coins that do not exist on-chain or asserts ownership of coins not controlled by
the treasury.
Thus
\[
  \Pr[E \wedge \text{PoR discrepancy at }t_k]
  \;\leq\;
  \mathsf{Adv}^{\mathsf{por}}_{\mathcal{B}_{\mathsf{por}}}(\lambda) + \mathsf{negl}(\lambda).
\]

\paragraph{Case 2: PoT discrepancy.}
Suppose the first discrepancy corresponds to a PoT receipt for a transfer between
domains in $\mathcal{D}_0\cup\{d_{\mathrm{fee}}\}$ that appears in
$\mathrm{TPL}^\star_i$ but not in $\mathrm{TPL}_i$, or vice versa, again with all
PoT verifications accepting.
We build an adversary $\mathcal{B}_{\mathsf{pot}}$ for the PoT-forgery experiment
of Definition~\ref{def:pot-unforgeability}.
$\mathcal{B}_{\mathsf{pot}}$ simulates the TPL interface for~$\mathcal{A}$ using its
own access to the PoT signing and verification oracles.
When $\mathcal{A}$ produces a winning pair $(\mathrm{TPL}_i,\mathrm{TPL}^\star_i)$,
$\mathcal{B}_{\mathsf{pot}}$ identifies the earliest time $t_k$ where the sequences
of PoT receipts differ and outputs the corresponding receipt together with the
preceding honest PoT chain as its forgery.
By construction, this receipt is valid but does not extend the honest PoT chain as
a prefix, contradicting PoT unforgeability.
Hence
\[
  \Pr[E \wedge \text{PoT discrepancy at }t_k]
  \;\leq\;
  \mathsf{Adv}^{\mathsf{pot\mbox{-}forge}}_{\mathcal{B}_{\mathsf{pot}}}(\lambda)
  + \mathsf{negl}(\lambda).
\]

\paragraph{Case 3: Commitment or hashing discrepancy.}
Finally, suppose there is no discrepancy in PoR snapshots or PoT receipts for
$\mathcal{D}_0\cup\{d_{\mathrm{fee}}\}$, yet $S^\star(t_i) > S(t_i)$.
Then the serialised ledger prefixes (including the policy metadata that affects
exposures) must differ in some other way that changes the implied balances while
all verifications still pass.
Because commitments to ledger prefixes are embedded in Bitcoin transactions as
$H(\cdot)$-based digests, there must exist two distinct inputs to~$H$ that produce
the same digest.
We define an adversary $\mathcal{B}_H$ against collision resistance of~$H$ that
simulates the TPL interface for $\mathcal{A}$ and, upon observing a successful
pair $(\mathrm{TPL}_i,\mathrm{TPL}^\star_i)$ with agreed PoR/PoT artefacts but
different implied balances, outputs the two distinct serialisations that hash to
the same value.
Thus
\[
  \Pr[E \wedge \text{no PoR/PoT discrepancy}] \;\leq\;
  \mathsf{Adv}^{\mathsf{coll}}_{\mathcal{B}_H}(\lambda) + \mathsf{negl}(\lambda).
\]

Combining the three cases and using a union bound yields the claimed inequality for
\[
  \mathsf{Adv}^{\mathsf{exp\mbox{-}sound},\mathcal{D}_0}_{\mathcal{A}}(\lambda).
\]
\end{proof}

Lemma~\ref{lem:restricted-exp-reduction} immediately implies
Theorem~\ref{lem:restricted-exp-soundness}, since by assumption
$\mathsf{Adv}^{\mathsf{por}}_{\mathcal{B}_{\mathsf{por}}}$,
$\mathsf{Adv}^{\mathsf{pot\mbox{-}forge}}_{\mathcal{B}_{\mathsf{pot}}}$, and
$\mathsf{Adv}^{\mathsf{coll}}_{\mathcal{B}_H}$ are negligible in~$\lambda$.

\medskip

\section{Restricted Privacy Simulation}
\label{app:privacy-reduction}

We now provide a more detailed simulation-based argument for
Theorem~\ref{thm:pub-privacy}.
For clarity we restate the two experiments from the theorem.
In the \emph{real} experiment the challenger runs TPL with honest parties,
restricted to the public-investor policy $V_{\mathsf{pub}}$, and gives the
adversary~$\mathcal{A}$ the full protocol transcript together with the anchored
Bitcoin transactions.
In the \emph{ideal} experiment an ideal functionality first samples an internal
execution of TPL, computes the leakage $L_{\mathsf{pub}}(\mathrm{TPL}_i)$ and the
corresponding anchoring transactions, and gives only this information to a PPT
simulator~$\mathcal{S}$.
The simulator interacts with $\mathcal{A}$ and must produce a view that is
indistinguishable from the real transcript.

Let $\mathsf{Real}_{\mathcal{A},\mathcal{D}}(\lambda)$ denote the probability
that a PPT distinguisher~$\mathcal{D}$ outputs~$1$ on the view of $\mathcal{A}$
in the real experiment, and let
$\mathsf{Ideal}_{\mathcal{A},\mathcal{D}}(\lambda)$ be the analogous probability
in the ideal experiment.
We must show that, under the assumptions of Theorem~\ref{thm:pub-privacy}, there
exists a PPT simulator~$\mathcal{S}$ such that for all PPT $\mathcal{A}$ and
$\mathcal{D}$,
\[
  \bigl|\mathsf{Real}_{\mathcal{A},\mathcal{D}}(\lambda)
        - \mathsf{Ideal}_{\mathcal{A},\mathcal{D}}(\lambda)\bigr|
  \;\leq\;
  \mathsf{negl}(\lambda).
\]

\paragraph{Simulator construction.}
The simulator~$\mathcal{S}$ is given the leakage
$L_{\mathsf{pub}}(\mathrm{TPL}_i)$ and the sequence of anchoring transactions
that appear on the public Bitcoin chain.
Recall that $L_{\mathsf{pub}}$ consists of the sequence of reporting intervals
together with, for each interval, the total Bitcoin exposure
$B_{\mathrm{tot}}(t)$ across all domains and the aggregate encumbered exposure
$B_{\mathrm{enc}}(t)$ with respect to the policy metadata.
Using this information, $\mathcal{S}$ reconstructs a ``dummy'' internal
execution of TPL that is consistent with the leakage:
it samples synthetic internal domain balances and PoR/PoT transcripts whose
aggregate values match the leaked totals, subject only to the algebraic
constraints imposed by the conservation property and the policy metadata.
Because the leakage does not fix the internal structure of exposures within or
across domains, there is a large space of such dummy executions.

Whenever the real protocol would generate a PoR snapshot or PoT receipt that is exposed to $\mathcal{A}$, the simulator uses the zero-knowledge or witness-indistinguishability properties of the corresponding proof systems (assumption~(2) in Theorem~\ref{thm:pub-privacy}) to sample equally distributed proofs relative to the dummy execution.
For signatures, the simulator either uses the same signing keys as in the real protocol or wraps the signing and verification procedures inside the same zero-knowledge proof system, so that the resulting signature outputs are distributed as in an honest execution and leak no further information beyond their messages and public keys.
More concretely, for each primitive the simulator invokes the corresponding ideal simulator guaranteed by the ZK/WI definition (for the PoR/PoT proofs) or the signing oracle (for signatures), using as public inputs the values determined by the dummy execution and the leakage.
Because these simulators and oracles produce outputs that are computationally indistinguishable from real proofs and signatures, the resulting transcript is indistinguishable from one generated by an honest execution of TPL sharing the same leakage.

\paragraph{Hybrid argument.}
We formalise the above intuition by a sequence of hybrid experiments
$\mathsf{H}_0,\dots,\mathsf{H}_r$.
Hybrid $\mathsf{H}_0$ is the real experiment.
Hybrid $\mathsf{H}_r$ coincides with the ideal experiment using the simulator
$\mathcal{S}$ described above.
Each intermediate hybrid replaces one additional use of a real PoR, PoT, or
signature generation algorithm with the output of the corresponding simulator
for that primitive, while keeping the underlying leakage $L_{\mathsf{pub}}$
fixed.
Because (by assumption) each primitive is zero-knowledge or
witness-indistinguishable for its public output, the view of any PPT distinguisher
changes by at most a negligible amount at each step:
for every $j\in\{0,\dots,r-1\}$ there exists a PPT reduction from a distinguisher
between $\mathsf{H}_j$ and $\mathsf{H}_{j+1}$ to breaking the privacy of one of
the underlying primitives.
Applying the triangle inequality yields
\[
  \bigl|\Pr[\mathcal{D}(\mathsf{H}_0)=1] - \Pr[\mathcal{D}(\mathsf{H}_r)=1]\bigr|
  \;\leq\;
  \sum_{j=0}^{r-1}
    \bigl|\Pr[\mathcal{D}(\mathsf{H}_j)=1] - \Pr[\mathcal{D}(\mathsf{H}_{j+1})=1]\bigr|
  \;\leq\;
  \mathsf{negl}(\lambda),
\]
where the final inequality follows from the negligible privacy advantages of the
underlying primitives and the fact that $r$ is polynomial in~$\lambda$.

\paragraph{Conclusion.}
By construction, the distribution of $\mathsf{H}_r$ is exactly that of the
ideal experiment for the leakage function $L_{\mathsf{pub}}$ and class
$\alpha=\mathsf{pub}$.
Hence
\[
  \bigl|\mathsf{Real}_{\mathcal{A},\mathcal{D}}(\lambda)
        - \mathsf{Ideal}_{\mathcal{A},\mathcal{D}}(\lambda)\bigr|
  \;=\;
  \bigl|\Pr[\mathcal{D}(\mathsf{H}_0)=1] - \Pr[\mathcal{D}(\mathsf{H}_r)=1]\bigr|
  \;\leq\; \mathsf{negl}(\lambda),
\]
as required.
This completes the proof of Theorem~\ref{thm:pub-privacy} under the stated
assumptions.

\end{document}